\let\cite=\citep
\newtheorem{theorem}{Theorem}
\newtheorem{proposition}{Proposition}
\newtheorem{lemma}{Lemma}
\newtheorem{corollary}{Corollary}
\newtheoremstyle{example}{\topsep}{\topsep}%
     {}
     {}
     {\bfseries}
     {}
     { }
     {\thmname{#1}\thmnumber{ #2}\thmnote{ (#3)}}
\theoremstyle{example}       
\newtheorem{example}{Example}
\newtheorem{remark}{Remark}
\newtheorem{definition}{Definition}
\newcommand{\flatten}{\textstyle \bigcup}
\newcommand{\mflatten}{\textstyle \bigsqcup}
\newcommand{\summ}{\textstyle \sum}
\newcommand{\SB}[1]{{\left\llbracket #1\right\rrbracket}}
\newcommand{\KSB}[2]{K\SB{#1}#2}
\newcommand{\WSB}[2]{W\SB{#1}#2}
\newcommand{\DSB}[2]{D\SB{#1}#2}
\newcommand{\getv}[1]{\lfloor #1 \rfloor}
\newcommand{\geta}[1]{\lceil #1 \rceil}
\newcommand{\kw}[1]{\mathsf{#1}}
\newcommand{\kcond}{\kw{cond}}
\newcommand{\kcomp}{\kw{comp}}
\newcommand{\kproj}{\kw{proj}}
\newcommand{\ksum}{\kw{sum}}
\newcommand{\ktrue}{\kw{t}}
\newcommand{\kfalse}{\kw{f}}
\newcommand{\kif}{\kw{if}}
\newcommand{\kthen}{\kw{then}}
\newcommand{\kelse}{\kw{else}}
\newcommand{\klet}{\kw{let}}
\newcommand{\kin}{\kw{in}}
\newcommand{\kempty}{\kw{empty}}
\newcommand{\kmap}{\kw{map}}
\newcommand{\fresh}{~\mathrm{fresh}}
\newcommand{\inputs}{\mathrm{in}^\star}
\newcommand{\outputs}{\mathrm{out}^\star}
\newcommand{\result}{\mathrm{out}}
\newcommand{\supp}{\mathrm{supp}}
\newcommand{\op}{\mathrm{op}}
\newcommand{\where}{\mathrm{where}}
\newcommand{\dep}{\mathrm{dep}}
\newcommand{\semiring}{\mathrm{semiring}}
\newcommand{\Loc}{\mathrm{Loc}}
\newcommand{\id}{\mathrm{id}}
\newcommand{\Wr}{\mathrm{Wr}}
\newcommand{\Int}{\mathbb{Z}}
\newcommand{\Bool}{\mathbb{B}}
\newcommand{\Nat}{\mathbb{N}}
\newcommand{\dom}{\mathrm{dom}}
\newcommand{\Pow}[1]{\mathcal{P}(#1)}
\newcommand{\Ts}{\Theta}
\newcommand{\Tr}{\mathrm{Tr}}
\newcommand{\Fid}{\mathrm{Fid}}
\newcommand{\Type}{\mathit{Type}}
\newcommand{\Exp}{\mathit{Exp}}
\newcommand{\Lab}{\mathit{Lab}}
\newcommand{\Var}{\mathit{Var}}
\newcommand{\Con}{\mathit{Con}}
\newcommand{\Term}{\mathit{Term}}
\newcommand{\Val}{\mathit{Val}}
\newcommand{\KVal}[1]{#1\text{-}\mathit{Val}}
\newcommand{\KK}{\mathcal{K}}
\newcommand{\boolTy}{\mathsf{bool}}
\newcommand{\intTy}{\mathsf{int}}
\newcommand{\setTy}[1]{\{#1\}}
\newcommand{\letin}[2]{\klet~#1~\kin~#2}
\newcommand{\setof}[1]{\{#1\}}
\newcommand{\ifthenelse}[3]{\kif~#1~\kthen~#2~\kelse~#3}
\newcommand{\andd}{\wedge}
\newcommand{\orr}{\vee}
\newcommand{\nott}{\neg}
\newcommand{\eq}{\approx}
\newcommand{\smerge}{\uplus}
\newcommand{\extend}{\propto}
\newcommand{\trlet}[2]{#1;#2}
\newcommand{\trcond}[4][]{\kcond_{#1}(#2,#3,#4)}
\newcommand{\trcomp}[2]{\kcomp(#1,#2)}
\newcommand{\trsum}[2]{\ksum(#1,#2)}
\newcommand{\trproj}[3]{\kproj_{#1}(#2,#3)}
\newcommand{\trconde}[6][]{\kcond_{#1}(#2,#3,#4)_{#5}^{#6}}
\newcommand{\trmape}[3]{\kmap(#1,#2)_{#3}}
\newcommand{\trcompe}[3]{\kcomp(#1,#2)_{#3}}
\newcommand{\trsume}[3]{\ksum(#1,#2)_{#3}}
\newcommand{\wf}[3]{#1 \vdash #2 : #3}
\newcommand{\wfstore}[2]{#2 : #1}
\newcommand{\wfstorectx}[4]{#1 \vdash #2,#3 : #4}
\newcommand{\wfcon}[3]{#1 \vdash_{\kw{con}} #2 : #3}
\newcommand{\wfctx}[3]{#1 \vdash #2 : #3}
\newcommand{\wftrace}[3]{#1 \vdash #2 : #3 }
\newcommand{\wftraces}[5]{#1 \vdash [#2]~#3~[#4] : #5 }
\newcommand{\wfterm}[3]{#1 \vdash_{\kw{term}} #2 : #3 }
\newcommand{\wftr}[4]{#1 \vdash #2 \triangleright #3 : #4}
\newcommand{\wftrs}[4]{#1 \vdash #2 \triangleright #3  \triangleright #4}
\newcommand{\matchavoids}[3]{#1 \matches #2 ~\#~ #3}
\newcommand{\matches}{\mathrel{{<}{:}}}
\newcommand{\annot}[2]{#1^{(#2)}}
\newcommand{\heval}[7][]{\annot{#2}{#3},#4 \gets #5\Downarrow_{#1} \annot{#6}{#7}}
\newcommand{\hevals}[9][]{\annot{#2}{#3},#4 \in #5,#6 \Downarrow^\star_{#1} \annot{#7}{#8},#9}
\newcommand{\eval}[4]{#1,#2 \Leftarrow #3\Downarrow #4}
\newcommand{\evals}[6]{#1,#2{\in} #3, #4\Downarrow^\star #5,#6}
\newcommand{\treval}[5]{#1,#2 \Leftarrow #3 \Downarrow #4,#5}
\newcommand{\trevals}[7]{#1,#2{\in} #3,#4 \Downarrow^\star #5,#6,#7}
\newcommand{\cp}[4]{#1,#2 \curvearrowright #3,#4}
\newcommand{\cps}[8]{#1,#2 {\in} #3,#4,#5 \curvearrowright^\star #6,#7,#8}
\newcommand{\ext}[4][w]{#2,#3 \rightsquigarrow_{#1} #4}
\newcommand{\exts}[4][w]{#2,#3 \rightsquigarrow^\star_{#1} #4}
\newcommand{\wext}[3]{\ext[W]{#1}{#2}{#3}}
\newcommand{\wexts}[3]{\exts[W]{#1}{#2}{#3}}
\newcommand{\dext}[3]{\ext[D]{#1}{#2}{#3}}
\newcommand{\dexts}[3]{\exts[D]{#1}{#2}{#3}}
\newcommand{\kext}[3]{\ext[K]{#1}{#2}{#3}}
\newcommand{\kexts}[3]{\exts[K]{#1}{#2}{#3}}
\newcommand{\bslices}[4]{#1,#2 \swarrow^\star #3,#4}
\newcommand{\bslice}[4]{#1,#2 \swarrow #3,#4}
\newcommand{\bslicee}[2]{#1,#2 \swarrow \epsilon}
\newcommand{\fslices}[4]{#1,#2 \searrow^\star #3,#4}
\newcommand{\fslice}[4]{#1,#2 \searrow #3,#4}
\newcommand{\fslicee}[2]{#1,#2 \searrow \epsilon}
\newcommand{\trsat}[2]{#1\models #2}
\newcommand{\trsats}[2]{#1 \models^\star #2}
\newcommand{\labelSec}[1]{\label{sec:#1}}
\newcommand{\refSec}[1]{Section~\ref{sec:#1}}
\newcommand{\refSecs}[2]{Sections~\ref{sec:#1}--\ref{sec:#2}}
\newcommand{\labelFig}[1]{\label{fig:#1}}
\newcommand{\refFig}[1]{Figure~\ref{fig:#1}}
\newcommand{\labelThm}[1]{\label{thm:#1}}
\newcommand{\refThm}[1]{Theorem~\ref{thm:#1}}
\newcommand{\labelLem}[1]{\label{lem:#1}}
\newcommand{\refLem}[1]{Lemma~\ref{lem:#1}}
\newcommand{\labelCor}[1]{\label{cor:#1}}
\newcommand{\labelEx}[1]{\label{ex:#1}}
\newcommand{\refEx}[1]{Example~\ref{ex:#1}}
\newcommand{\refExs}[2]{Examples~\ref{ex:#1}--\ref{ex:#2}}
\newcommand{\todo}[1]{\textbf{[TODO: #1]}}
\begin{document}


\title{Provenance Traces}
\subtitle{Extended Report}

\authorinfo{James Cheney}
           {University of Edinburgh}
           {jcheney@inf.ed.ac.uk}
\authorinfo{Umut A. Acar \and Amal Ahmed}
           {Toyota Technological Institute, Chicago}
           {[umut$|$amal]@tti-c.org}

\maketitle

\begin{abstract}
  Provenance is information about the origin, derivation, ownership,
  or history of an object.  It has recently been studied extensively
  in scientific databases and other settings due to its importance in
  helping scientists judge data validity, quality and integrity.
  However, most models of provenance have been stated as ad hoc
  definitions motivated by informal concepts such as ``comes from'',
  ``influences'', ``produces'', or ``depends on''.  These models lack
  clear formalizations describing in what sense the definitions
  capture these intuitive concepts.  This makes it difficult to
  compare approaches, evaluate their effectiveness, or argue about
  their validity.

  We introduce \emph{provenance traces}, a general form of provenance
  for the \emph{nested relational calculus} (NRC), a core database
  query language.  Provenance traces can be thought of as concrete
  data structures representing the operational semantics derivation of
  a computation; they are related to the traces that have been used in
  self-adjusting computation, but differ in important respects. We
  define a tracing operational semantics for NRC queries that produces
  both an ordinary result and a trace of the execution. We show that
  three pre-existing forms of provenance for the NRC can be extracted
  from provenance traces.  Moreover, traces satisfy two semantic
  guarantees: \emph{consistency}, meaning that the traces describe
  what actually happened during execution, and \emph{fidelity},
  meaning that the traces ``explain'' how the expression would behave
  if the input were changed.  These guarantees are much stronger than
  those contemplated for previous approaches to provenance; thus,
  provenance traces provide a general semantic foundation for
  comparing and unifying models of provenance in databases.
\end{abstract}

\section{Introduction}\labelSec{intro}


Sophisticated computer systems and programming techniques,
particularly database management systems and distributed computation,
are now being used for large-scale scientific endeavors in many fields
including biology, physics and astronomy.  Moreover, they are used
directly by scientists who --- often justifiably --- view the behavior
of such systems is opaque and unreliable. Simply presenting the result
of a computation is not considered sufficient to establish its
repeatability or scientific value in (for example) a journal article.
Instead, it is considered essential to provide high-level explanations
of how a part of the result of a database query or distributed
computation was derived from its inputs, or how a database came to be
the way it is.  Such information about the source, context,
derivation, or history of a (data) object is often called
\emph{provenance}.

Currently, many systems either require their users to deal with
provenance manually or provide one of a variety of ad hoc, custom
solutions.  Manual recordkeeping is tedious and error-prone, while
both manual and custom solutions are expensive and provide few formal
correctness guarantees.  This state of affairs strongly motivates
research into automatic and standardized techniques for recording,
managing, and exploiting provenance in databases and other systems.

A number of approaches to automatic provenance tracking have been
studied, each aiming to capture some intuitive aspect of provenance
such as ``Where did a result come from in the input?''
\cite{buneman01icdt}, ``What inputs influenced a
result?''\cite{DBLP:journals/tods/CuiWW00,buneman01icdt}, ``How was a
result produced from the input?''
\cite{DBLP:conf/pods/2007/GreenKT07}, or ``What inputs do results
depend on?''  \cite{DBLP:conf/dbpl/CheneyAA07}.
However, there is not yet much understanding of the advantages,
disadvantages and formal guarantees offered by each, or of the
relationships among them.  Many of these techniques have
been presented as ad hoc definitions without clear formal
specifications of the problem the definitions are meant to solve.  In
some cases, loose specifications have been developed, but they appear
difficult to extend beyond simple settings such as monotone relational
queries.  

Therefore, we believe that semantic foundations for provenance need to
be developed in order to understand and relate existing techniques, as
well as to motivate and validate new techniques.  We focus on
provenance in database management systems, because of its practical
importance and because several interesting provenance techniques have
already been developed in this setting.  We investigate a semantic
foundation for provenance in databases based on \emph{traces}. We
begin with an operational semantics based on stores in which each part
of each value has a label.  We instrument the semantics so that as an
expression evaluates, we record certain properties of the operational
derivation in a {\em provenance trace}.  Provenance traces record the
relationships between the labels in the store, ultimately linking the
result of a computation to the input.  Traces can be viewed as a
concrete representation of the operational semantics derivation
showing how each part of the output was computed from the input and
intermediate values.

We employ the \emph{nested
  relational calculus} (NRC), a core database query language closely
related to monadic comprehensions as used in Haskell and other
functional programming languages~\cite{wadler92mscs}.  The nested
relational model also forms the basis for distributed programming
systems such as MapReduce~\cite{mapreduce} and
PigLatin~\cite{piglatin} and is closely related to XML.  Thus, our
results should generalize to these other settings.

This paper makes the following contributions:
\begin{compactitem}
\item We define traces, traced evaluation for NRC queries, and a trace
  adaptation semantics.  
\item We show that we can extract several other forms of provenance
  that have been developed for the NRC from traces, including
  \emph{where-provenance}~\cite{buneman01icdt,buneman07icdt},
  \emph{dependency provenance}~\cite{DBLP:conf/dbpl/CheneyAA07}, and
  \emph{semiring-provenance}~\cite{DBLP:conf/pods/2007/GreenKT07,DBLP:conf/pods/FosterGT08}.
  The semiring-provenance model already generalizes several other
  forms of provenance such as \emph{why-provenance}~\cite{buneman01icdt} and
  \emph{lineage}~\cite{DBLP:journals/tods/CuiWW00}, but where-provenance and
  dependency-provenance are not instances of the semiring model.
  Provenance traces thus unify three previously unrelated provenance
  models.
\item We state and prove properties which establish traces as a solid
  semantic foundation for provenance.  Specifically, we show that the
  trace generated by evaluating an expression is consistent with the
  resulting store, and that such traces are ``explanations'' that help
  us understand how the expression would behave if the input store is
  changed.  This is the main contribution of the paper, and in
  particular the explanation property is a key ``correctness''
  property for provenance that has been absent from previous work on
  this topic.
\end{compactitem}

We want to emphasize that \emph{provenance traces are not a proposal
  for a concrete, practical form of provenance}.  Traces are a
candidate answer to the question ``what is the most detailed form of
provenance we could imagine recording?''  We expect that it is
unlikely that provenance traces would be implementable within a
large-scale database system.  Other practical provenance techniques
will necessarily sacrifice or approximate some of the detail of
provenance traces in return for efficiency.  Thus, the role of
provenance traces is to provide a way to explain precisely what is
lost in the process.

The traces used in this paper are also related to traces studied in
other settings, particularly in AFL, an adaptive functional language
introduced by \citet{acar06toplas}.  However, there are important
differences.  First, while AFL leaves it up to the programmer to
identify \emph{modifiable} inputs and \emph{changeable} outputs,
provenance traces implicitly treat every part of the input as
modifiable and every part of the output as changeable.  This may make
provenance traces too inefficient for practical use, but our main goal
here is to identify a rich, principled form of provenance and
efficiency is a secondary concern.  Second, AFL traces are based
directly on source language expressions, and were not designed with
human-readability or provenance extraction in mind.  In contrast,
provenance traces can be viewed as directed acyclic graphs (with some
extra structure and annotations) that can easily be traversed to
extract other forms of provenance.  Finally, AFL includes
user-defined, recursive functions, whereas the NRC does not include
function definitions but does provide collection types and
comprehension operations.  These differences are minor; it appears
straightforward to add the missing features to the respective
languages.

\paragraph{An example}

As a simple example, consider an expression
$\ifthenelse{x=5}{y+42}{x}$.  If we run this on an input store $x =
5^{l_x}, y = 42^{l_y}$ then the result is $47^{l'}$, and the trace is 
\begin{verbatim}
  l_1' <- l_x = 5;
  cond(l_1', t, l' <- l_y+42)
\end{verbatim}
This trace records that we first test whether $l_x = 5$, then do a
conditional branch.  The $\kcond$ trace records the tested label
$l_1'$, its value, and a subtrace showing how we computed the final
result $l'$ by copying from $l_y$.

As a more complicated example illustrating traces for relational
operations, consider a SQL-style query that selects only the
$B$-values of records in table $R$:
\begin{verbatim}
SELECT B FROM R
\end{verbatim}
which corresponds to the NRC expression $\{\pi_B(x) \mid x \in R\}$.
When run on $R = \{(A:1,B:2),(A:2,B:3)\}$ the result is $\{2,3\}$.  If
we regard the input as labeled as follows:
$\{(A:1^{l_{11}},B:2^{l_{12}})^{l_1},(A:2^{l_{21}},B:3^{l_{22}})^{l_2}\}^l$
then the resulting trace is
\begin{verbatim}
  l' <- comp(l,{[l_1] l_1' <- proj_B (l_1,l_12),
                [l_2] l_2' <- proj_B (l_2,l_22)})
\end{verbatim}
producing labeled output $\{2^{l_1'},3^{l_2'}\}^{l'}$.  This trace
shows that the result is obtained by comprehension over $l$.  There
are two elements, $l_1$ and $l_2$, yielding results $l_1' = l_{12}$
and $l_2' = l_{22}$, which were obtained by projecting the $B$ field
from $l_1$ and $l_2$ respectively.

It should be clear that traces can in general be large and difficult
to interpret because they are very low-level. As mentioned above, we
can \emph{slice} traces by discarding irrelevant information to obtain
smaller traces that are more useful as explanations of how a specific
part of the output was produced or how a part of the input was used.
As a simple example, if we are only interested in $l_1'$ in the output
of the second example, we can slice the trace ``backwards'' from
$l_1'$ to obtain
\begin{verbatim}
  l' <- comp(l,{[l_1] l_1' <- proj_B (l_1,l_12)},
            x. \pi_B(x))
\end{verbatim}
Dually, if we wish to see how some part of the input influences parts
of the output, we can slice ``forwards''.  For example, the forward
slice from $l_{21}$ is empty, meaning that it did not play any role in
the execution, whereas a forward slice from $l_{22}$ is
\begin{verbatim}
  l' <- comp(l,{[l_2] l_2' <- proj_B (l_2,l_22)},
            x. \pi_B(x))
\end{verbatim}

We can also extract other forms of provenance directly from traces.
For example, in the second query above, we can see that $l_2'$ in the
output ``comes from'' $l_{12}$ in the input since it is copied by the
projection operation $l_1' \gets \trproj{B}{l_1}{l_{12}}$.  Similarly,
if we inspect the forward trace slice from $l_{22}$, we can see that
the labels $l_2'$ and $l'$ in the output mat ``depend on'' $l_{22}$,
and that the edge $(l',l_2')$ is ``produced'' by the comprehension
from the edge $(l,l_2)$.

\paragraph{Synopsis}
The structure of the rest of this paper is as follows.  \refSec{nrc}
reviews the nested relational calculus, and introduces an operational,
destination-passing, store-based semantics for NRC.  \refSec{traces}
defines provenance traces and introduces a traced operational
semantics for NRC queries and a trace adaptation semantics for
adjusting traces to changes to the input.  \refSec{metatheory}
establishes the key metatheoretic and semantic properties of traces.
\refSec{extraction} discusses extracting other forms of provenance
from traces, and \refSec{slicing} briefly discusses trace slicing and
simplification techniques.  We discuss related and future work and
conclude in \refSecs{related}{concl}.  
\begin{sub}
  A companion technical report~\cite{tr} provides additional
  discussion and proofs.
\end{sub}
\section{Nested relational calculus}\labelSec{nrc}

The nested relational calculus~\cite{buneman95tcs}, or NRC, is a
simply-typed core language, closely related to monadic
comprehensions~\cite{wadler92mscs}.  The NRC that is as expressive as
standard database query languages such as SQL but has simpler syntax
and cleaner semantics.  (We do not address certain dark corners of SQL
such as NULL values.)  The syntax of NRC types $\tau \in \Type$ is as
follows:
\begin{eqnarray*}
  \tau &::=& \intTy \mid \boolTy \mid \tau_1 \times \tau_2 \mid \{\tau\}
\end{eqnarray*}
Types include base types such as $\intTy$ and $\boolTy$, pairing types
$\tau_1\times\tau_2$, and collection types $\{\tau\}$.  Collection
types $\{\tau\}$ are often taken to be sets, bags (multisets), or
lists; in this paper, we consider multiset collections only.  We omit
first-class function types and $\lambda$-terms because most database
systems do not support them.

We assume countably infinite, disjoint sets $\Var$ of \emph{variables}
and \emph{labels} $\Lab$.  The syntax of NRC expressions $e \in \Exp$
is as follows:
\begin{eqnarray*}
  e &::=& l \mid x \mid \letin{x=e_1}{e_2} \mid (e_1,e_2) \mid \pi_i(e) \\
  &\mid& b \mid \nott e \mid e_1 \andd e_2 
  \mid \ifthenelse{e_0}{e_1}{e_2}\\
  &\mid& \emptyset \mid \setof{e} \mid e_1 \cup e_2   
  \mid  \flatten \{e_2 \mid x \in e_1\} \mid \kempty(e) \\
  &\mid& i \mid e_1 + e_2 \mid  e_1 \eq e_2\mid \summ \{e_2 \mid x \in e_1\} 
\end{eqnarray*}
Variables and $\klet$-expressions, pairing, boolean, and integer
operations are standard.  Labels are used in the operational semantics
(\refSec{opsem}).  The expression $\emptyset$ denotes the empty
collection; $\{e\}$ constructs a singleton collection, $e_1 \cup e_2$
takes the (multiset) union of two collections, and $\flatten \{e \mid
x \in e_0\}$ iterates over a collection obtained by evaluating $e$,
applying $e(x)$ to each element of the collection, and unioning the
results.  Note that we can define $\{e \mid x \in e_0\}$ as
$\flatten\{\{e\} \mid x \in e_0\}$.  We include integer constants,
addition ($e_1+e_2$), and equality ($e_1 \eq e_2$). Finally, the
$\kempty(e)$ predicate tests whether the collection denoted by $e$ is
empty, and the $\summ\{e \mid x \in e_0\}$ operation takes the sum of
a collection of integers.

Expressions are identified modulo alpha-equivalence, regarding $x$
bound in $e(x)$ in the expressions $\flatten\{e(x) \mid x \in e_0\}$,
$\summ\{e(x) \mid x \in e_0\}$ and $\letin{x=e_0}{e(x)}$.  We write
$e[l/x]$ for the result of substituting a label $l$ for a variable $x$
in $e$; labels cannot be bound so substitution is naturally
capture-avoiding.

\subsection{Examples}

As with many core languages, it is inconvenient to program directly in
NRC.  Instead, it is often more convenient to use idiomatic
``comprehension syntax'' similar to Haskell's list
comprehensions~\cite{wadler92mscs,DBLP:journals/sigmod/BunemanLSTW94}.
These can be viewed as syntactic sugar for primitive NRC expressions,
just as in Haskell list comprehensions can be translated to the
primitive monadic operations on lists.  Although we use
unlabeled pairs, the NRC can also be extended easily with convenient
named-record syntax.  These techniques are standard so here we only
illustrate them via examples which will be used later in the paper.

\begin{example}\labelEx{eg1}
  Suppose we have relations
  $R:\setTy{(A{:}\intTy,B{:}\intTy,C{:}\intTy)}$,
  $S:\setTy{(C{:}\intTy,D{:}\intTy)}$.  Consider the SQL
  ``join'' query
\begin{verbatim}
  SELECT R.A,R.B,S.D FROM R,S WHERE R.C = S.C
\end{verbatim}
  This is equivalent to the core NRC expression
\[\small
\begin{array}{rcl}
Q_1 &=&  \flatten\{\flatten\{\ifthenelse{r.C=s.C\\
&&\qquad\qquad}{\{(A{:}r.A,B{:}r.B,D{:}s.D)\}}{\emptyset} \\
&&\quad \mid s \in S\} \mid r \in R\}
\end{array}
\]
\end{example}

\begin{example}\labelEx{eg2}
  Given $R,S$ as above, the SQL ``aggregation'' query
\begin{verbatim}
  SELECT 42 AS C, SUM(D) FROM S WHERE C = 2
  UNION
  SELECT B AS C, A AS D FROM R WHERE C = 4
\end{verbatim}
can be expressed as
  \[\small\begin{array}{rcl}
    Q_2 &=&  
\{(C:42,D:\summ\{\ifthenelse{s.C = 2}{s.D}{0} \mid s \in S\})\}\\
    &\cup & \flatten\{\ifthenelse{r.C = 4}{\{(C{:}r.B,D{:}r.A)\}}{\emptyset} \mid r \in R\}
\end{array}
  \]
\end{example}

Some sample input tables and the results of running $Q_1$ and $Q_2$ on
them are shown in \refFig{examples}.  The labels $r,r_1,\ldots$ in are
used in the operational semantics, as discussed in \refSec{opsem}.

  \begin{figure}

\begin{tikzpicture}
  \node at (0,0) {A};
  \node at (1,0) {B};
  \node at (2,0) {C};
  
  \node (atom11) at (0,-0.5) [inner sep = 0pt] {$1$};
  \node (atom12) at (1,-0.5) [inner sep = 0pt] {$2$};
  \node (atom13) at (2,-0.5) [inner sep = 0pt] {$3$};
  \node (atom21) at (0,-1) [inner sep = 0pt] {$1$};
  \node (atom22) at (1,-1) [inner sep = 0pt] {$3$};
  \node (atom22) at (2,-1) [inner sep = 0pt] {$3$};
  \node (atom31) at (0,-1.5) [inner sep = 0pt] {$7$};
  \node (atom32) at (1,-1.5) [inner sep = 0pt] {$42$};
  \node (atom33) at (2,-1.5) [inner sep = 0pt] {$4$};

  \node (table) at (-0.3,0.3) [anchor=north west,rectangle,draw,minimum width = 3cm, minimum height
  = 2.1cm] {};
  
  \node (caption) at ([yshift=-0.4cm]table.south) {Input table $R(A,B,C)$};

  \node (row1) at (-0.2,-0.3) [help lines,anchor=north west,rectangle,draw,minimum width = 2.8cm, minimum height
  = 0.4cm] {};
  \node (row2) at (-0.2,-0.8) [help lines,anchor=north west,rectangle,draw,minimum width = 2.8cm, minimum height
  = 0.4cm] {};
  \node (row3) at (-0.2,-1.3) [help lines,anchor=north west,rectangle,draw,minimum width = 2.8cm, minimum height
  = 0.4cm] {};

  \node (ctab) at ([yshift=+5mm]table.north) {\small $r$};
  \node (crow1) at ([xshift=-6mm]row1.west) {\small $r_1$};
  \node (crow2) at ([xshift=-6mm]row2.west) {\small $r_2$};
  \node (crow3) at ([xshift=-6mm]row3.west) {\small $r_3$};
  \node (catom11) at (0.4,-0.45) {\small $r_{11}$};
  \node (catom12) at (1.4,-0.45) {\small $r_{12}$};
  \node (catom13) at (2.4,-0.45) {\small $r_{13}$};
  \node (catom21) at (0.4,-0.95) {\small $r_{21}$};
  \node (catom22) at (1.4,-0.95) {\small $r_{22}$};
  \node (catom22) at (2.4,-0.95) {\small $r_{23}$};
  \node (catom31) at (0.4,-1.45) {\small $r_{31}$};
  \node (catom32) at (1.4,-1.45) {\small $r_{32}$};
  \node (catom33) at (2.4,-1.45) {\small $r_{33}$};

  \draw (table.north) -- (ctab.south);
  \draw (row1.west) -- (crow1);
  \draw (row2.west) -- (crow2);
  \draw (row3.west) -- (crow3);

  \node at (4,0) {C};
  \node at (5,0) {D};
  
  \node (atom1) at (4,-0.5) [inner sep = 0pt] {$2$};
  \node (atom2) at (5,-0.5) [inner sep = 0pt] {$3$};
  \node (atom3) at (4,-1) [inner sep = 0pt] {$2$};
  \node (atom4) at (5,-1) [inner sep = 0pt] {$4$};
  \node (atom5) at (4,-1.5) [inner sep = 0pt] {$3$};
  \node (atom6) at (5,-1.5) [inner sep = 0pt] {$7$};

  \node (table) at (3.7,0.3) [anchor=north west,rectangle,draw,minimum width = 2cm, minimum height
  = 2.1cm] {};
  
  \node (caption) at ([yshift=-0.4cm]table.south) {Input table $S(C,D)$};

  \node (row1) at (3.8,-0.3) [help lines,anchor=north west,rectangle,draw,minimum width = 1.8cm, minimum height
  = 0.4cm] {};
  \node (row2) at (3.8,-0.8) [help lines,anchor=north west,rectangle,draw,minimum width = 1.8cm, minimum height
  = 0.4cm] {};
  \node (row3) at (3.8,-1.3) [help lines,anchor=north west,rectangle,draw,minimum width = 1.8cm, minimum height
  = 0.4cm] {};

  \node (ctab) at ([yshift=+5mm]table.north) {\small $s$};
  \node (crow1) at ([xshift=-6mm]row1.west) {\small $s_1$};
  \node (crow2) at ([xshift=-6mm]row2.west) {\small $s_2$};
  \node (crow3) at ([xshift=-6mm]row3.west) {\small $s_3$};
  \node (catom1) at (4.4,-0.45) {\small $s_{11}$};
  \node (catom2) at (5.4,-0.45) {\small $s_{12}$};
  \node (catom3) at (4.4,-0.95) {\small $s_{21}$};
  \node (catom4) at (5.4,-0.95) {\small $s_{22}$};
  \node (catom5) at (4.4,-1.45) {\small $s_{31}$};
  \node (catom6) at (5.4,-1.45) {\small $s_{23}$};

  \draw (table.north) -- (ctab.south);
  \draw (row1.west) -- (crow1);
  \draw (row2.west) -- (crow2);
  \draw (row3.west) -- (crow3);
\end{tikzpicture}%

\begin{tikzpicture}
  \node at (0,0) {A};
  \node at (1,0) {B};
  \node at (2,0) {D};
  
  \node (atom11) at (0,-0.5) [inner sep = 0pt] {$1$};
  \node (atom12) at (1,-0.5) [inner sep = 0pt] {$2$};
  \node (atom13) at (2,-0.5) [inner sep = 0pt] {$7$};
  \node (atom21) at (0,-1) [inner sep = 0pt] {$1$};
  \node (atom22) at (1,-1) [inner sep = 0pt] {$3$};
  \node (atom23) at (2,-1) [inner sep = 0pt] {$7$};

  \node (table) at (-0.3,0.3) [anchor=north west,rectangle,draw,minimum width = 3cm, minimum height
  = 1.6cm] {};
  
  \node (caption) at ([yshift=-0.4cm]table.south) {Output table $Q_1(A,B,D)$};

  \node (row1) at (-0.2,-0.3) [help lines,anchor=north west,rectangle,draw,minimum width = 2.8cm, minimum height
  = 0.4cm] {};
  \node (row2) at (-0.2,-0.8) [help lines,anchor=north west,rectangle,draw,minimum width = 2.8cm, minimum height
  = 0.4cm] {};

  \node (ctab) at ([yshift=+5mm]table.north) {\small $l$};
  \node (crow1) at ([xshift=-6mm]row1.west) {\small $l_1$};
  \node (crow2) at ([xshift=-6mm]row2.west) {\small $l_2$};
  \node (catom11) at (0.4,-0.45) {\small $l_{11}$};
  \node (catom12) at (1.4,-0.45) {\small $l_{12}$};
  \node (catom13) at (2.4,-0.45) {\small $1_{13}$};
  \node (catom21) at (0.4,-0.95) {\small $l_{21}$};
  \node (catom22) at (1.4,-0.95) {\small $l_{22}$};
  \node (catom23) at (2.4,-0.95) {\small $l_{23}$};

  \draw (table.north) -- (ctab.south);
  \draw (row1.west) -- (crow1);
  \draw (row2.west) -- (crow2);

  \node at (4,0) {C};
  \node at (5,0) {D};
  
  \node (atom1) at (4,-0.5) [inner sep = 0pt] {$42$};
  \node (atom2) at (5,-0.5) [inner sep = 0pt] {$7$};
  \node (atom3) at (4,-1) [inner sep = 0pt] {$42$};
  \node (atom4) at (5,-1) [inner sep = 0pt] {$7$};

  \node (table) at (3.7,0.3) [anchor=north west,rectangle,draw,minimum width = 2cm, minimum height
  = 1.6cm] {};
  
  \node (caption) at ([yshift=-0.4cm]table.south) {Output table $Q_2(C,D)$};

  \node (row1) at (3.8,-0.3) [help lines,anchor=north west,rectangle,draw,minimum width = 1.8cm, minimum height
  = 0.4cm] {};
  \node (row2) at (3.8,-0.8) [help lines,anchor=north west,rectangle,draw,minimum width = 1.8cm, minimum height
  = 0.4cm] {};

  \node (ctab) at ([yshift=+5mm]table.north) {\small $l'$};
  \node (crow1) at ([xshift=-6mm]row1.west) {\small $l'_1$};
  \node (crow2) at ([xshift=-6mm]row2.west) {\small $l'_2$};
  \node (catom1) at (4.4,-0.45) {\small $l'_{11}$};
  \node (catom2) at (5.4,-0.45) {\small $l'_{12}$};
  \node (catom3) at (4.4,-0.95) {\small $l'_{21}$};
  \node (catom4) at (5.4,-0.95) {\small $l'_{22}$};

  \draw (table.north) -- (ctab.south);
  \draw (row1.west) -- (crow1);
  \draw (row2.west) -- (crow2);
\end{tikzpicture}%
    \caption{Examples}
    \label{fig:examples}
\end{figure}
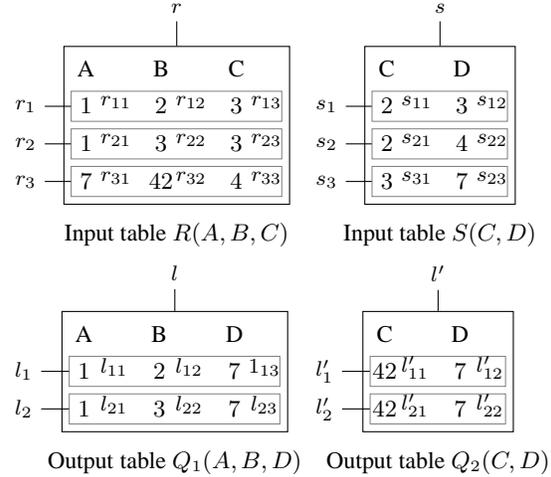

\begin{tr}
  \subsection{Type system}

\begin{figure}[h]
  
  \[
  \begin{array}{c}
    \infer{\wf{\Gamma}{x}{\tau}}{x:\tau \in \Gamma}
    \quad
    \infer{\wf{\Gamma}{\letin{x=e_1}{e_2}}{\tau_2}}{\wf{\Gamma}{e_1}{\tau_1} & \wf{\Gamma,x{:}\tau}{e_2}{\tau_2}}
    \smallskip\\
    \infer{\wf{\Gamma}{i}{\intTy}}{i \in \Int}
    \quad
    \infer{\wf{\Gamma}{e_1 + e_2}{\intTy}}{\wf{\Gamma}{e_1}{\intTy} & \wf{\Gamma}{e_2}{\intTy}}
    \smallskip\\
    \infer{\wf{\Gamma}{b}{\boolTy}}{b \in \Bool}
    \quad
    \infer{\wf{\Gamma}{\nott e}{\boolTy}}{\wf{\Gamma}{e}{\boolTy}}
    \quad
    \infer{\wf{\Gamma}{e_1 \andd e_2}{\boolTy}}{\wf{\Gamma}{e_1}{\boolTy} & \wf{\Gamma}{e_2}{\boolTy}}
    \smallskip\\
    \infer{\wf{\Gamma}{e_1 \eq e_2}{\boolTy}}{\wf{\Gamma}{e_1}{\intTy} & \wf{\Gamma}{e_2}{\intTy}}
    \quad
    \infer{\wf{\Gamma}{\ifthenelse{e}{e_1}{e_2}}{\tau}}
    {\wf{\Gamma}{e}{\boolTy}
      &
      \wf{\Gamma}{e_1}{\tau}
      &
      \wf{\Gamma}{e_2}{\tau}
    }      
    \smallskip\\
    \infer{\wf{\Gamma}{\kempty(e)}{\boolTy}}{\wf{\Gamma}{e}{\{\tau\}}}
    \quad
    \infer{\wf{\Gamma}{(e_1,e_2)}{\tau_1 \times \tau_2}}{\wf{\Gamma}{e_1}{\tau_1} & \wf{\Gamma}{e_2}{\tau_2}}
    \quad
    \infer{\wf{\Gamma}{\pi_i (e)}{\tau_i}}{\wf{\Gamma}{e}{\tau_1 \times \tau_2}}
    \smallskip\\
    \infer{\wf{\Gamma}{\emptyset}{\setTy{\tau}}}{}
    \quad
    \infer{\wf{\Gamma}{\{e\}}{\setTy{\tau}}}{\wf{\Gamma}{e}{\tau}}
    \quad
    \infer{\wf{\Gamma}{e_1 \cup e_2}{\setTy{\tau}}}{\wf{\Gamma}{e_1}{\setTy{\tau}} & \wf{\Gamma}{e_2}{\setTy{\tau}}}
    \smallskip\\
    \infer{\wf{\Gamma}{\flatten\{e\mid x \in e_0\}}{\setTy{\tau}}}{\wf{\Gamma}{e_0}{\setTy{\tau_0}} & \wf{\Gamma,x{:}\tau_0}{e}{\{\tau\}}}
\quad
    \infer{\wf{\Gamma}{\summ\{e\mid x \in e_0\}}{\intTy}}{\wf{\Gamma}{e_0}{\setTy{\tau_0}} & \wf{\Gamma,x{:}\tau_0}{e}{\intTy}}
  \end{array}
  \]
  \caption{Expression well-formedness}
  \label{fig:wf-exp}
\end{figure}

NRC expressions can be typechecked using standard techniques.  The
typechecking rules are shown in \refFig{wf-exp}.  We employ contexts
$\Gamma$ of the form $ \Gamma ::= \cdot \mid \Gamma,x{:}\tau$.

\subsection{Denotational semantics}
The semantics of NRC expressions is usually defined denotationally.
We consider values $v \in \Val$ of the form:
\[v ::= i \mid b \mid (v_1,v_2) \mid \{v_1,\ldots,v_n\}\]
where $i \in \Int$ and $b \in \Bool$, and interpret types as sets of
values, as follows:
\begin{eqnarray*}
  \SB{\intTy} &=& \Int = \{\ldots,-1,0,1,2,\ldots\}\\
  \SB{\boolTy} &=& \Bool = \{\ktrue,\kfalse\} \\
  \SB{\tau_1 \times \tau_2} &=& \SB{\tau_1} \times \SB{\tau_2}\\
  \SB{\setTy{\tau}} &=& \mathcal{M}_{\mathsf{fin}}(\SB{\tau})
\end{eqnarray*}
We write $\mathcal{M}_{\mathsf{fin}}(X)$ for the set of \emph{finite
  multisets} of values.  \refFig{nrc-denotational} shows the
(standard) equations defining the denotational semantics of NRC
expressions.  NRC does not include arbitrary recursive definitions, so
we do not need to deal with nontermination.

We write $\gamma: \Var \to \Val$ for a finite function (or
environment) mapping variables $x$ to values $v$.  We write
$\SB{\Gamma}$ for the set of all environments $\gamma$ such that
$\gamma(x) \in \SB{\Gamma(x)}$ for all $x \in \dom(\gamma)$.

The type system given above is sound in the following sense:
\begin{proposition}
  If $\wf{\Gamma}{e}{\tau}$ then $\SB{e} : \SB{\Gamma} \to \SB{\tau}$.
\end{proposition}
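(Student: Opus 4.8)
The plan is to prove the proposition by induction on the derivation of $\wf{\Gamma}{e}{\tau}$ --- equivalently, by structural induction on $e$, using the rules of \refFig{wf-exp} to drive the case analysis. Because the induction is over typing derivations it quantifies uniformly over all contexts and result types, which is exactly what the binding cases below will need. In each case we must show that for every $\gamma \in \SB{\Gamma}$ the value $\SB{e}\gamma$, read off from the equations of \refFig{nrc-denotational}, is well-defined and belongs to $\SB{\tau}$. Since NRC has no recursive definitions, well-definedness is automatic once each immediate subexpression is known to be well-defined, so the real content is the claim $\SB{e}\gamma \in \SB{\tau}$.

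The base cases are immediate: for $x$ with $x{:}\tau \in \Gamma$ we have $\SB{x}\gamma = \gamma(x) \in \SB{\tau}$ by definition of $\SB{\Gamma}$; for $i \in \Int$, $b \in \Bool$, and $\emptyset$ the result lies in $\SB{\intTy} = \Int$, $\SB{\boolTy} = \Bool$, and $\SB{\setTy{\tau}} = \mathcal{M}_{\mathsf{fin}}(\SB{\tau})$ (the empty multiset) respectively. The routine inductive cases are the first-order operations $e_1 + e_2$, $e_1 \eq e_2$, $\nott e$, $e_1 \andd e_2$, $(e_1,e_2)$, $\pi_i(e)$, $\setof{e}$, $e_1 \cup e_2$, $\kempty(e)$, and $\ifthenelse{e_0}{e_1}{e_2}$. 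In each of these we invoke the induction hypothesis on the immediate subexpressions to place their denotations in the appropriate interpreted types, and then observe that the corresponding semantic operation --- integer addition, equality, boolean negation and conjunction, pair formation, pair projection, multiset singleton, multiset union, emptiness test, branch selection --- has exactly the right domain and codomain. For the conditional, whichever of $\SB{e_1}\gamma, \SB{e_2}\gamma$ is selected lies in $\SB{\tau}$ by the induction hypothesis.

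The cases requiring attention to the environment are $\letin{x=e_1}{e_2}$ and the two comprehension forms, since they bind a variable. For $\klet$: by the induction hypothesis $\SB{e_1}\gamma \in \SB{\tau_1}$, hence the environment extending $\gamma$ by $x \mapsto \SB{e_1}\gamma$ lies in $\SB{\Gamma,x{:}\tau_1}$, and applying the induction hypothesis to the second premise gives $\SB{e_2}$ of that environment in $\SB{\tau_2}$, which is the value returned. For $\flatten\{e \mid x \in e_0\}$: by the induction hypothesis $\SB{e_0}\gamma$ is a finite multiset of elements of $\SB{\tau_0}$; for each such element $v$, the environment extending $\gamma$ by $x \mapsto v$ is in $\SB{\Gamma,x{:}\tau_0}$, so the induction hypothesis gives $\SB{e}$ of it in $\mathcal{M}_{\mathsf{fin}}(\SB{\tau})$; taking the multiset union of these finitely many finite multisets of elements of $\SB{\tau}$ yields again an element of $\mathcal{M}_{\mathsf{fin}}(\SB{\tau}) = \SB{\setTy{\tau}}$. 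The $\summ\{e \mid x \in e_0\}$ case is analogous, using that a finite sum of integers is an integer. I expect the comprehension case to be the only mildly delicate point: it is the one place where one must combine the induction hypothesis on $e_0$ with the induction hypothesis on $e$ instantiated at each of the finitely many elements produced by $e_0$, and then check that the semantic combining operation --- multiset union, resp.\ finite summation --- preserves membership in the interpreted collection type, resp.\ in $\Int$.
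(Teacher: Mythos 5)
Your proof is correct, and it is the standard argument the paper has in mind: the paper states this proposition without proof, treating it as routine type soundness for the denotational semantics. Your structural induction on the typing derivation, with the environment-extension reasoning for $\klet$ and the two comprehension forms (and the observation that finite multiset union and finite integer summation stay within the interpreted types), is exactly the expected proof and contains no gaps.
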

\begin{figure}
  
  \begin{eqnarray*}
    \SB{x}\gamma &=& \gamma(x)\\
    \SB{\letin{x=e_1}{e_2}}\gamma&=& \SB{e_2}\gamma[x \mapsto \SB{e_1}\gamma]\\
    \SB{i}\gamma &=& i\\
    \SB{e_1 + e_2}\gamma &=& \SB{e_1}\gamma + \SB{e_2}\gamma\\
    \SB{\summ \{e \mid x \in e_0\}}\gamma &=& \summ \{\SB{e}\gamma[x\mapsto v] \mid v \in \SB{e_0}\gamma\}\\
    \SB{b}\gamma &=& b\\
    \SB{\nott e}\gamma &=& \nott\SB{e}\gamma\\
    \SB{e_1 \andd e_2}\gamma &=& \SB{e_1}\gamma \andd \SB{e_2}\gamma\\
    \SB{(e_1,e_2)}\gamma &=& (\SB{e_1}\gamma ,\SB{e_2}\gamma )\\
    \SB{\pi_i(e)}\gamma &=& \pi_i(\SB{e}\gamma)\\
    \SB{\emptyset}\gamma &=& \emptyset\\
    \SB{\setof{e}}\gamma &=& \{\SB{e}\gamma\}\\
    \SB{e_1 \cup e_2}\gamma &=& \SB{e_1}\gamma \sqcup \SB{e_2}\gamma\\
    \SB{\flatten\{e \mid x \in e_0\}}\gamma &=& \mflatten\{\SB{e}\gamma[x\mapsto v] \mid v \in \SB{e_0}\gamma\}\\
    \SB{\ifthenelse{e_0}{e_1}{e_2}}\gamma &=& \left\{ 
      \begin{array}{ll}
        \SB{e_1}\gamma & \text{if $\SB{e_0}{\gamma} = \ktrue$}\\
        \SB{e_2}\gamma & \text{if $\SB{e_0}{\gamma} = \kfalse$}
      \end{array}\right.\\
    \SB{e_1 \eq e_2}\gamma &=& \left\{ 
      \begin{array}{ll}
        \ktrue & \text{if $\SB{e_1}{\gamma} = \SB{e_2}{\gamma}$}\\
        \kfalse& \text{if $\SB{e_1}{\gamma} \neq \SB{e_2}{\gamma}$}
      \end{array}\right.
    \\
    \SB{\kempty(e)}\gamma &=& \left\{ 
      \begin{array}{ll}
        \ktrue & \text{if $\SB{e}{\gamma} = \emptyset$}\\
        \kfalse& \text{if $\SB{e}{\gamma} \neq \emptyset$}
      \end{array}\right.
  \end{eqnarray*}
  
  \caption{Denotational semantics of NRC}
  \label{fig:nrc-denotational}
\end{figure}
\end{tr}

\subsection{Operational semantics}\labelSec{opsem}

The semantics of NRC is usually presented denotationally.  For the
purposes of this paper, we will introduce an operational semantics
based on \emph{stores} in which every part of every value has a label.
This semantics will serve as the basis for our trace semantics, since
labels can easily be used to address parts of the input, output, and
intermediate values of a query.  Thus, labels play a dual role as
\emph{addresses} of values in the store and as ``locations'' mentioned
in traces.  Note that NRC is a purely functional language and so
labels are written at most once.

In order to ensure that each part of each value has a label, we employ
a store mapping labels to \emph{value constructors}, which can be
thought of as individual heap cells each describing one part of a
value.  We define value constructors $k \in \Con$ as follows:
\[k ::= i \mid b \mid (l_1,l_2) \mid \{l_1:m_1,\ldots,l_n:m_n\}\]
Here, $\{l_1:m_1,\ldots,l_n:m_n\}$ denotes a multiset of labels (often
denoted $L,L'$), where $m_i$ is the multiplicity of $l_i$.
Multiplicities are assumed nonzero and omitted when equal to 1.
Multisets are equivalent up to reordering and we assume the elements
$l_i$ are distinct.  We write $M \sqcup N$ for multiset union and $M
\oplus N$ for domain-disjoint multiset union, defined only when
$\dom(M) \cap \dom(N) = \emptyset$.

We write $\Lab(k)$ for the set of labels mentioned in $k$.  Stores are
finite maps $\sigma:\Lab \to \Con$ from labels to constructors.  We
also consider label environments to be finite maps from variables to
labels $\gamma : \Var \to \Lab$.

  \begin{figure}
    
    \[\small\begin{array}{rcl}\small
      \op(l,\sigma) &=& \sigma(l)\\
      \op(i,\sigma) &=& i\\
      \op(l_1+l_2,\sigma) &=& \sigma(l_1) +_\Int \sigma(l_2)\\
      \op(l_1 \eq l_2,\sigma) &=& \left\{
        \begin{array}{ll}
          \ktrue & (\sigma(l_1) = \sigma(l_2))\\
          \kfalse & (\sigma(l_1) \neq \sigma(l_2))
        \end{array}\right.\\
      \op(b,\sigma) &=& b\\
      \op(l_1\andd l_2,\sigma) &=& \sigma(l_1) \andd_\Bool \sigma(l_2)\\
      \op(\nott l,\sigma) &=& \nott_\Bool \sigma(l)\\
      \op((l_1,l_2),\sigma) &=& (l_1,l_2)\\
      \op(\emptyset,\sigma) &=& \emptyset\\
      \op(\{l\},\sigma) &=& \{l:1\}\\
      \op(l_1 \cup l_2,\sigma) &=& \sigma(l_1) \sqcup \sigma(l_2)\\
      \op(\kempty(l),\sigma) &=& \left\{
        \begin{array}{ll}
          \ktrue & (\sigma(l) = \emptyset)\\
          \kfalse & (\sigma(l) \neq \emptyset)
        \end{array}\right.
    \end{array}\]
    \caption{Definition of $\op$}
\label{fig:op-def}
  \[\small
  \begin{array}{c}
    \infer{\eval{\sigma}{l}{t}{\sigma[l:=\op(t,\sigma)]}}{}
    \smallskip\\
    \infer{\eval{\sigma}{l}{\letin{x=e_1}{e_2}}{\sigma''}}{
      \eval{\sigma}{l'}{e_1}{\sigma'} &
      \eval{\sigma'}{l}{e_2[l'/x]}{\sigma''} & 
      l' \fresh
    }
    \smallskip\\
    \infer{\eval{\sigma}{l}{\ifthenelse{l'}{e_{\ktrue}}{e_{\kfalse}}}{\sigma'}}{
      \sigma(l') = b & \eval{\sigma}{l}{e_b}{\sigma'}}
    \quad
    \infer{\eval{\sigma}{l}{\pi_i(l')}{\sigma[l:=\sigma(l_i)]}}{\sigma(l') = (l_1,l_2)}
    \smallskip\\
    \infer{\eval{\sigma}{l}{\flatten\{e\mid x \in l_0\}}{\sigma'[l:=\mflatten \sigma'[L']]}}
    {\evals{\sigma}{x}{\sigma(l_0)}{e}{\sigma'}{L'}}
    \smallskip\\
    \infer{\eval{\sigma}{l}{\summ\{e\mid x \in l_0\}}{\sigma'[l:=\summ \sigma'[L']]}}
    {\evals{\sigma}{x}{\sigma(l_0)}{e}{\sigma'}{L'}}  
\smallskip\\
    \infer{\evals{\sigma}{x}{\emptyset}{e}{\sigma}{\emptyset}}{}
    \quad
    \infer{\evals{\sigma}{x}{\{l:m\}}{e}{\sigma'}{\{l':m\}}}
    {\eval{\sigma}{l'}{e[l/x]}{\sigma'} & l' \fresh}
    \smallskip\\
    \infer{\evals{\sigma}{x}{L_1\oplus L_2}{e}{\sigma_1 \smerge_{\sigma} \sigma_2}{L_1' \oplus L_2'}}
    {\evals{\sigma}{x}{ L_1}{e}{\sigma_1}{L_1'}
      &
      \evals{\sigma}{x}{ L_2}{e}{\sigma_2}{L_2'}}
  \end{array}\]
  \caption{Operational semantics}
  \labelFig{opsem}
\end{figure}

  We will restrict attention to NRC expressions in ``A-normal form'', defined as follows:
  \begin{eqnarray*}
    w &::=& x \mid l\\
    e &::=& w \mid \letin{x=e_1}{e_2} \mid (w_1,w_2) \mid \pi_i(w) \\
    &\mid& b \mid \nott w \mid w_1 \andd w_2
    \mid \ifthenelse{w_0}{e_1}{e_2}  \\
    &\mid& i \mid w_1 + w_2 \mid \summ \{e_2 \mid x \in w_1\}  \mid  w_1 \eq w_2 \\
    &\mid& \emptyset \mid \setof{w} \mid w_1 \cup w_2 
\mid \flatten \{e_2 \mid x \in w_1\} \mid \kempty(w)
  \end{eqnarray*}
  The A-normalization translation is standard and straightforward, so
  omitted.  The operational semantics rules are shown in
  \refFig{opsem}.  The rules are in destination-passing style.  We use
  two judgments: $\eval{\sigma}{l}{e}{\sigma'}$, meaning ``in store
  $\sigma$, evaluating $e$ at location $l$ yields store $\sigma'$'';
  and $\evals{\sigma}{x}{L}{e}{\sigma'}{L'}$, meaning ``in store
  $\sigma$, iterating $e$ with $x$ bound to each element of $L$ yields
  store $\sigma'$ and result labels $L'$.''  The second judgment deals
  with iteration over multisets involved in comprehensions; this
  exemplifies a common pattern used throughout the paper.

  Many of the rules are similar; for brevity, we use a single rule for
  \emph{terms} $t$ of the following forms:
  \begin{eqnarray*}
    t & ::= & i \mid l_1+l_2 \mid  l_1 \eq l_2 \mid b \mid \nott l \mid l_1\andd l_2\\
    &\mid& (l_1,l_2) \mid l \mid \emptyset \mid \{l\} \mid l_1 \cup l_2  \mid \kempty(l)
  \end{eqnarray*}
  Each term is either a constant, a label, or a constructor or
  primitive function applied to some labels.  The meaning of each of
  these operations is defined via the $\op$ function, as shown in
  \refFig{op-def}, which maps a term $t \in \Term$ and a store $\sigma
  : \Lab \to \Con$ to a constructor.

  When $L$ is a set of labels, we write $\sigma[L]$ for the multiset
  of constructors $\{\sigma(l):m \mid l:m \in L\}$.  This notation is
  used in the rules for $\flatten$ and $\summ$.  In this notation, the
  standard definition of summation of multisets of integers is
  $\summ\{i_1:m_1,\ldots,i_n:m_n\} = \summ_{j=1}^n i_j \cdot m_j$.
  Similarly, $\mflatten \{L_1:m_1,\ldots,L_n:m_n\} = m_1 \cdot L_1 \sqcup
  \cdots \sqcup m_n \cdot L_n\}$, where $m \cdot \{ l_1: k_1, \ldots,
  l_n:k_n\} = \{l_1 : m \cdot k_1, \ldots, l_n:m \cdot k_n\}$.

  The iteration rules $\evals{\sigma}{x}{L}{e}{\sigma'}{L'}$, evaluate
  $e$ with $x$ bound to each $l \in L$ independently, preserving the
  multiplicity of labels.  They split $L$ using $\oplus$ and combine
  the result stores using the orthogonal store merging operation
  $\smerge_\sigma$ defined as follows:
\begin{definition}[Orthogonal extensions and merging]
  We say $\sigma_1$ and $\sigma_2$ are \emph{orthogonal extensions}
  of $\sigma$ if $\sigma_1 = \sigma\uplus \sigma_1'$ and $\sigma_2=
  \sigma \uplus \sigma_2'$ and $\dom(\sigma_1') \cap \dom(\sigma_2') =
  \emptyset$, and we write $\sigma_1 \smerge_{\sigma} \sigma_2$ for
  $\sigma \uplus \sigma_1' \uplus \sigma_2'$.
\end{definition}

The operational semantics is illustrated on the \refExs{eg1}{eg2} in
\refFig{examples}; here, the labels $r, r_1,\ldots,s,\ldots$ uniquely
identify each part of the input tables $R,S$ and the labels on the
results reflect one possible labeling that is consistent with examples
given later.


  \subsection{Type system for A-normalized expressions}

  We define typing rules for (normalized) NRC expressions as shown in
  \refFig{wf-nrc}.  We use standard \emph{contexts} $\Gamma ::= \cdot
  \mid \Gamma,x{:}\tau$ mapping variables to types and \emph{store
    types} $\Psi$ of the form $\Psi ::= \cdot \mid \Psi,l{:}\tau$.
  For brevity, we write $\Omega$ for a pair $\Psi,\Gamma$ and
  $\Omega(w)$ for $\Psi(l)$ if $l = w$ or $\Gamma(x)$ if $w = x$
  respectively.  The judgment $\wf{\Psi,\Gamma}{e}{\tau}$ means that
  given store type $\Psi$ and context $\Gamma$, expression $e$ has
  type $\tau$.

  \begin{figure}
    
    \[\small
    \begin{array}{c}
    \infer{\wfterm{\Omega}{i}{\intTy}}{}
    \quad
    \infer{\wfterm{\Omega}{w_1+w_2}{\intTy}}{\Omega(w_1) = \Omega(w_2) = \intTy}
    \quad
    \infer{\wfterm{\Omega}{w_1\eq w_2}{\boolTy}}{\Omega(w_1) = \Omega(w_2) = \intTy}
    \smallskip\\
    \infer{\wfterm{\Omega}{(w_1,w_2)}{\Omega(w_1) \times \Omega(w_2)}}{}
    \smallskip\\
    \infer{\wfterm{\Omega}{b}{\boolTy}}{}
    \quad
    \infer{\wfterm{\Omega}{w_1\andd w_2}{\boolTy}}{\Omega(w_1) = \Omega(w_2) = \boolTy}
    \quad
    \infer{\wfterm{\Omega}{\nott w}{\boolTy}}{\Omega(w) = \boolTy}
    \smallskip\\
    \infer{\wfterm{\Omega}{\emptyset}{\{\tau\}}}{}
    \quad
    \infer{\wfterm{\Omega}{\{w\}}{\{\tau\}}}{\Omega(w) = \tau}
    \quad
    \infer{\wfterm{\Omega}{w_1\cup w_2}{\{\tau\}}}{\Omega(w_1) = \{\tau\} = \Omega(w_2)}
    \smallskip\\
\infer{\wfterm{\Omega}{\kempty(w)}{\boolTy}}{\Omega(w) = \{\tau\}}
    \quad
    \infer{\wfterm{\Omega}{w}{\Omega(w)}}{}
\smallskip\\
\infer{\wf{\Omega}{t}{\tau}}{\wfterm{\Omega}{t}{\tau}}
      \quad     
\infer{\wf{\Omega}{\letin{x=e_1}{e_2}}{\tau}}{\wf{\Omega}{e_1}{\tau'} & \wf{\Omega,x{:}\tau'}{e_2}{\tau}}
     \smallskip\\
 \infer{\wf{\Omega}{\pi_i(w)}{\tau_i}}{\Omega(w)=\tau_1\times\tau_2}
     \quad
\infer{\wf{\Omega}{\ifthenelse{w}{e_\ktrue}{e_\kfalse}}{\tau}}{
    \Omega(w) = \boolTy & 
  \wf{\Omega}{e_\ktrue}{\tau} &
  \wf{\Omega}{e_\kfalse}{\tau}}
\smallskip\\
\infer{\wf{\Omega}{\flatten \{e \mid x\in w\}}{\{\tau'\}}}
{\Omega(w) = \{\tau\} & 
\wf{\Omega,x{:}\tau }{e}{\{\tau'\}}
}
\quad
\infer{\wf{\Omega}{\summ \{e \mid x\in w\}}{\intTy}}
{\Omega(w) = \{\tau\} & 
\wf{\Omega,x{:}\tau }{e}{\intTy}
}
    \end{array}
    \]
    \caption{Well-formed A-normalized NRC expressions}
    \label{fig:wf-nrc}
  \end{figure}
\begin{figure}
  
\[
\begin{array}{c}
\infer{\wfcon{\Psi}{i}{\intTy}}{}
\quad
\infer{\wfcon{\Psi}{b}{\boolTy}}{}
\quad
\infer{\wfcon{\Psi}{(l_1,l_2)}{\Psi(l_1) \times \Psi(l_2)}}{}
\smallskip\\
\infer{\wfcon{\Psi}{\{l_1:m_1,\ldots,l_n:m_n\}}{\{\tau\}}}{\tau = \Psi(l_1) =  \cdots = \Psi(l_n)}
\quad
\infer{\wfstore{\cdot}{\cdot}}{}
\quad
\infer{\wfstore{\Psi,l:\tau}{\sigma,l\mapsto k}}{\wfstore{\Psi}{\sigma} & \wfcon{\Psi}{k}{\tau}}
\end{array}
\]
\caption{Store and constructor well-formedness}
  \label{fig:wf-store}
\end{figure}

The well-formedness judgment for stores is $\wfstore{\Psi}{\sigma}$,
or ``$\sigma$ has store type $\Psi$''.  This judgment is defined in
\refFig{wf-store}, using an auxiliary judgment
$\wfcon{\Psi}{k}{\tau}$, meaning ``in stores of type $\Psi$,
constructor $k$ has type $\tau$''.  Note that well-formed stores must
be acyclic according to this judgment since the last rule permits each
label to be traversed at most once.
\begin{tr}
  The well-formedness judgment for environments $\gamma : \Var \to
  \Lab$ is $\wfctx{\Psi}{\gamma}{\Gamma}$, or ``in a store with type
  $\Psi$, environment $\gamma$ matches context $\Gamma$''.  The rules are as follows:
  \[
  \infer{\wfctx{\Psi}{\cdot}{\cdot}}{} 
  \quad
  \infer{\wfctx{\Psi}{\gamma,x \mapsto l}{\Gamma,x \mapsto \tau}}{\wfctx{\Psi}{\gamma}{\Gamma} & \Psi(l) = \tau}
  \]  
  We
  sometimes combine the judgments and write
  $\wfstorectx{\Psi}{\sigma}{\gamma}{\Gamma}$ to indicate
  $\wfstore{\Psi}{\sigma}$ and $\wfctx{\Psi}{\gamma}{\Gamma}$.
\end{tr}
The operational semantics is sound with respect to the store typing
rules:
\begin{theorem}
  Suppose $\wf{\Psi}{e}{\tau}$ and
  $\wfstore{\Psi}{\sigma}$.  Then if
  $\eval{\sigma}{l}{e}{\sigma'}$ then there exists $\Psi'$ such
  that $\Psi'(l) = \tau$ and $\wfstore{\Psi'}{\sigma'}$.
\end{theorem}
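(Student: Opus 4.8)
The plan is to argue by induction on the derivation of $\eval{\sigma}{l}{e}{\sigma'}$. Since the rules for $\flatten$ and $\summ$ delegate to the iteration judgment $\evals{\sigma}{x}{L}{e}{\sigma'}{L'}$, the induction must be run simultaneously with a companion claim for that judgment: if $\wf{\Psi,x{:}\tau_0}{e}{\tau}$, $\wfstore{\Psi}{\sigma}$, and every label in $\dom(L)$ has type $\tau_0$ in $\Psi$, then $\evals{\sigma}{x}{L}{e}{\sigma'}{L'}$ implies there is a $\Psi' \supseteq \Psi$ with $\wfstore{\Psi'}{\sigma'}$ and $\Psi'(l') = \tau$ for every $l' \in \dom(L')$. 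I would also strengthen the main claim so that the resulting $\Psi'$ is a \emph{conservative extension} of $\Psi$ (i.e.\ $\Psi'$ extends $\Psi$, agreeing on $\dom(\Psi)$), with $\dom(\Psi')\setminus\dom(\Psi)$ consisting exactly of the locations freshly written during the derivation — in particular $l \notin \dom(\Psi)$, which is an invariant because destinations are always chosen fresh. This strengthening is precisely what makes the $\klet$ case go through.

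Before the induction I would record three routine facts. (i) \emph{Weakening}: if $\wf{\Omega}{e}{\tau}$ and $\Omega$ is extended to $\Omega'$, then $\wf{\Omega'}{e}{\tau}$, and likewise $\wfcon{\Psi}{k}{\tau}$ and $\wfterm{\Omega}{t}{\tau}$ are stable under extension — immediate inductions, since all of these rules read $\Omega$, $\Psi$ only positively. (ii) \emph{Label substitution}: if $\wf{\Psi,\Gamma,x{:}\tau'}{e}{\tau}$ and $\Psi(l) = \tau'$ then $\wf{\Psi,\Gamma}{e[l/x]}{\tau}$; this is essentially trivial because $\Omega(w)$ merely looks $w$ up in whichever of $\Psi,\Gamma$ contains it, so replacing the hypothesis $x{:}\tau'$ by an already-present $l{:}\tau'$ leaves every looked-up type unchanged. (iii) \emph{Orthogonal merge}: if $\sigma_1,\sigma_2$ are orthogonal extensions of $\sigma$ with $\wfstore{\Psi_1}{\sigma_1}$ and $\wfstore{\Psi_2}{\sigma_2}$, where $\Psi_1,\Psi_2$ are correspondingly orthogonal extensions of a common $\Psi$ with $\wfstore{\Psi}{\sigma}$, then $\wfstore{\Psi_1 \smerge_\Psi \Psi_2}{\sigma_1 \smerge_\sigma \sigma_2}$; acyclicity of the glued store holds because each branch only adds labels pointing into $\dom(\sigma)$ or back into that same branch, never across.

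The induction itself is a case analysis. For a \emph{term}, $\eval{\sigma}{l}{t}{\sigma[l := \op(t,\sigma)]}$, one inspects the derivation of $\wf{\Psi}{t}{\tau}$ (\refFig{wf-nrc}) and the matching clause of $\op$ (\refFig{op-def}) together: in each subcase $\op(t,\sigma)$ is a constructor of type $\tau$ over $\Psi$ — e.g.\ for $(l_1,l_2)$ one gets $\wfcon{\Psi}{(l_1,l_2)}{\Psi(l_1)\times\Psi(l_2)}$ directly; for $l_1+l_2$, well-formedness of $\sigma$ forces $\sigma(l_1),\sigma(l_2)\in\Int$, so $\op(t,\sigma)\in\Int$; for $l_1\cup l_2$ it forces $\sigma(l_1),\sigma(l_2)$ to be multisets of labels all of type $\tau$, hence so is their union — so $\Psi' = \Psi, l{:}\tau$ works, and acyclicity survives because every label occurring in $\op(t,\sigma)$ already lies in $\dom(\Psi)$. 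The \emph{$\klet$} case applies the IH to $e_1$ to obtain $\Psi'\supseteq\Psi$ with $\Psi'(l') = \tau'$, then weakening plus label substitution to derive $\wf{\Psi'}{e_2[l'/x]}{\tau}$, then the IH to $e_2[l'/x]$ in store $\sigma'$; transitivity of extension yields the conservative-extension conclusion. For the \emph{conditional} and \emph{projection} rules one inverts $\wfstore{\Psi}{\sigma}$ at $l'$ (the $\wfcon{}{}{}$ rules of \refFig{wf-store}) to learn that $\sigma(l')$ has the expected constructor shape, then applies the IH (conditional) or reads off the correct component type (projection, where no label is written, so $\Psi'=\Psi$). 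For \emph{$\flatten$} and \emph{$\summ$}, the premise $\Omega(l_0)=\{\tau_0\}$ together with $\wfstore{\Psi}{\sigma}$ gives that $\sigma(l_0)$ is a multiset of labels each of type $\tau_0$ — exactly the hypothesis needed to invoke the companion claim; from its conclusion one checks that $\mflatten\sigma'[L']$ is again a multiset of type-$\tau'$ labels (take $\Psi', l{:}\{\tau'\}$) resp.\ that $\summ\sigma'[L']$ is an integer.

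It remains to handle the iteration rules. The empty case is immediate with $\Psi'=\Psi$; the singleton case $\{l{:}m\}$ applies label substitution and the main IH to $e[l/x]$; the split case applies the main IH to each sub-derivation, producing orthogonal extensions $\Psi_1,\Psi_2$ of $\Psi$, and then invokes the orthogonal-merge fact to conclude $\wfstore{\Psi_1\smerge_\Psi\Psi_2}{\sigma_1\smerge_\sigma\sigma_2}$, with $L_1'\oplus L_2'$ well-typed since each summand is and their domains are disjoint by freshness. I expect the split/merge case to be the only genuine difficulty: one has to set up the bookkeeping so that the two recursive calls provably yield store types agreeing on the shared segment $\Psi$ and domain-disjoint elsewhere, and then argue that gluing two acyclic stores along a common initial segment is again acyclic. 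Every other case is a mechanical check driven by the shape of the typing rules and of $\op$.
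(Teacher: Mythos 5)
Your overall strategy is the right one, and it is the same shape the paper uses for the closely analogous results it does prove: the paper states this theorem without proof, but its proofs of soundness of traced evaluation and of traceability proceed exactly by this kind of simultaneous induction on the two judgments, with the destination-freshness bookkeeping made explicit via the side condition $\matchavoids{\sigma}{\Psi}{S\cup\{l\}}$. Your auxiliary facts (weakening of $\wf{\Omega}{e}{\tau}$ and $\wfcon{\Psi}{k}{\tau}$, label-for-variable substitution, and well-typed merging of orthogonal store extensions) are precisely the lemmas needed, and the split/merge case of the iteration judgment is handled correctly by weakening constructor typings across the disjoint extensions.

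There is, however, a concrete error in your projection case. The rule is $\eval{\sigma}{l}{\pi_i(l')}{\sigma[l:=\sigma(l_i)]}$ with premise $\sigma(l')=(l_1,l_2)$: it \emph{does} write the destination $l$, copying the constructor $\sigma(l_i)$ into it. Under your own strengthened invariant $l\notin\dom(\Psi)$, taking $\Psi'=\Psi$ makes the required conclusion $\Psi'(l)=\tau_i$ false, so this step as written fails. The repair is routine and uses machinery you already have: invert $\wfstore{\Psi}{\sigma}$ at $l'$ to get $\Psi(l')=\tau_1\times\tau_2$ and hence $\Psi(l_i)=\tau_i$, obtain $\wfcon{\Psi}{\sigma(l_i)}{\tau_i}$ by weakening (the constructor at $l_i$ was checked against a prefix of $\Psi$), and set $\Psi'=\Psi,l{:}\tau_i$, exactly as in your term case. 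A smaller caveat: your claim that $l\notin\dom(\Psi)$ is "an invariant because destinations are always chosen fresh" only holds if the $l'\fresh$ side conditions are read as avoiding the ambient destination $l$ as well as $\dom(\sigma)$; otherwise an internal fresh label could coincide with $l$ and break the conservative-extension invariant in the $\klet$ and iteration cases. This is precisely the bookkeeping the paper later makes explicit with the avoid-set in $\matchavoids{\sigma}{\Psi}{S\cup\{l\}}$, and your proof should either adopt that formulation or state the global-freshness convention outright.
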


\begin{tr}
\subsection{Correctness of operational semantics}

To show the correctness of the operational semantics relative to the
denotational semantics, we need to translate from stores and labels to
values.  We define the functions $\sigma \uparrow_\tau l$ by induction
on types as follows:
\begin{eqnarray*}
  \sigma \uparrow_{\intTy} l &=& \sigma(l)\\
  \sigma \uparrow_{\boolTy} l &=& \sigma(l)\\
  \sigma \uparrow_{\tau_1 \times \tau_2} l &=& (\sigma \uparrow_{\tau_1} \pi_1 (\sigma(l)), \sigma \uparrow_{\tau_2} \pi_2(\sigma(l)))\\
\sigma \uparrow_{\setTy{\tau}} l &=& \{\sigma \uparrow_\tau l' \mid l' \in \sigma(l)\}
\end{eqnarray*}
We also define $\sigma \uparrow_\Gamma \gamma$ pointwise, so that
$(\sigma \uparrow_\Gamma \gamma)(x) = \sigma \uparrow_{\Gamma(x)}
\gamma(x)$.  We can easily show that:
\begin{proposition}
  If $\wfstore{\Psi}{\sigma}$ and $l:\tau \in \Psi$ then $\sigma
  \uparrow_\tau l \in \SB{\tau}$.  Moreover, if
  $\wfctx{\Psi}{\gamma}{\Gamma}$ then $\sigma \uparrow_\Gamma \gamma
  \in \SB{\Gamma}$.
\end{proposition}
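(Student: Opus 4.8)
The plan is to prove the first claim by induction on the structure of the type $\tau$, following the clauses that define $\sigma \uparrow_\tau l$; the claim about environments then drops out of the first part.  Before the induction I would isolate the one fact that drives every case: if $\wfstore{\Psi}{\sigma}$ and $l{:}\tau \in \Psi$, then $l \in \dom(\sigma)$ and $\wfcon{\Psi}{\sigma(l)}{\tau}$.  This follows by inverting the store-well-formedness rules along the chain that builds $\sigma$, together with the evident weakening property of $\vdash_{\kw{con}}$ (a constructor judgment only inspects the labels mentioned in the constructor, so it is preserved under enlarging $\Psi$).  Since $\vdash_{\kw{con}}$ is determined by $\tau$ — a constructor of type $\intTy$ is an integer, of type $\boolTy$ a boolean, of type $\tau_1 \times \tau_2$ a pair of labels, of type $\setTy{\tau'}$ a multiset of labels — this fact also shows, as a by-product of the same induction, that $\sigma \uparrow_\tau l$ is well-defined on well-formed inputs: the recursive calls always land on labels in $\dom(\sigma)$ at strictly smaller types, and acyclicity of well-formed stores is exactly what rules out the recursion diverging.

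For the induction itself: in the base cases $\tau \in \{\intTy, \boolTy\}$ we have $\sigma \uparrow_\tau l = \sigma(l)$, and inversion gives $\sigma(l) \in \Int = \SB{\intTy}$ (resp. $\sigma(l) \in \Bool = \SB{\boolTy}$).  For $\tau = \tau_1 \times \tau_2$, inversion gives $\sigma(l) = (l_1,l_2)$ with $l_i{:}\tau_i \in \Psi$; applying the induction hypothesis at $\tau_1$ to $l_1$ and at $\tau_2$ to $l_2$ yields $\sigma \uparrow_{\tau_i} l_i \in \SB{\tau_i}$, so $\sigma \uparrow_\tau l = (\sigma \uparrow_{\tau_1} l_1, \sigma \uparrow_{\tau_2} l_2) \in \SB{\tau_1} \times \SB{\tau_2} = \SB{\tau_1 \times \tau_2}$.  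For $\tau = \setTy{\tau'}$, inversion gives $\sigma(l) = \{l_1{:}m_1,\ldots,l_n{:}m_n\}$ with every $l_i{:}\tau' \in \Psi$; by the induction hypothesis each $\sigma \uparrow_{\tau'} l_i \in \SB{\tau'}$, and since $\sigma(l)$ is a finite multiset, $\sigma \uparrow_\tau l$ is a finite multiset of elements of $\SB{\tau'}$, hence lies in $\mathcal{M}_{\mathsf{fin}}(\SB{\tau'}) = \SB{\setTy{\tau'}}$.

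For the environment statement, suppose $\wfctx{\Psi}{\gamma}{\Gamma}$.  Inverting the environment-well-formedness rules, for every $x \in \dom(\gamma)$ we have $\gamma(x){:}\Gamma(x) \in \Psi$, so the first part gives $(\sigma \uparrow_\Gamma \gamma)(x) = \sigma \uparrow_{\Gamma(x)} \gamma(x) \in \SB{\Gamma(x)}$; by the definition of $\SB{\Gamma}$ this is exactly $\sigma \uparrow_\Gamma \gamma \in \SB{\Gamma}$.  I do not anticipate a real obstacle here: the only points needing a little care are aligning the store-typing derivation with the constructor-typing inversions (handled by the weakening remark) and observing that totality of $\uparrow$ and the typing claim are established together by a single induction on types rather than assumed separately.
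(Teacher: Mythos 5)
Your proof is correct, and it is the straightforward argument the paper intends: the paper states this proposition with only the remark ``We can easily show that:'' and omits the proof, and your induction on $\tau$, driven by the lookup/inversion lemma (if $\wfstore{\Psi}{\sigma}$ and $l{:}\tau\in\Psi$ then $\wfcon{\Psi}{\sigma(l)}{\tau}$, via weakening of $\vdash_{\kw{con}}$), together with the pointwise argument for environments, is exactly the expected route. One small quibble: your aside crediting acyclicity with ruling out divergence is off the mark --- $\sigma\uparrow_\tau l$ recurses on the structure of $\tau$, so it terminates regardless; what the store-typing discipline (whose by-product is acyclicity) actually buys you is that $\sigma(l)$ has the constructor shape demanded by $\tau$, which is precisely the content of your key lemma.
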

The correctness of the operational semantics can then be established by induction on the structure of derivations:
\begin{proposition}
  Suppose that $\wf{\Gamma}{e}{\tau}$ and
  $\wfstorectx{\Psi}{\sigma}{\gamma}{\Gamma}$.  Then there exists
  $\sigma'$ such that $\eval{\sigma}{l}{\gamma(e)}{\sigma'}$.
  Moreover, for any such $\sigma'$, $\SB{e}(\sigma \uparrow_\Gamma
  \gamma) = \sigma' \uparrow_\tau l$.
\end{proposition}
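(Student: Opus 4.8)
The plan is to prove both conclusions simultaneously by induction on the derivation of $\wf{\Gamma}{e}{\tau}$ --- equivalently, on the structure of the A-normalized expression $e$ --- interleaved with a mutually recursive statement about the iteration judgment $\evals{\sigma}{x}{L}{e}{\sigma'}{L'}$. Throughout I would take the destination $l$ to be fresh for $\sigma$ and, more generally, insist that all label choices along the derivation are genuinely fresh (distinct from $l$, from each other, and from $\dom(\sigma)$); this is harmless because $l$ occurs neither in $e$ nor in $\Psi$, and it is exactly what prevents evaluation at $l$ from clobbering a part of the input on which $\sigma \uparrow_\Gamma \gamma$ depends. Existence (that evaluation does not get stuck) is then immediate: each well-typed normalized expression matches a unique rule shape, the side conditions --- a fresh label, some splitting of a multiset along $\oplus$ --- always have witnesses, and since NRC has no recursion the structural induction is well-founded.

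To push the induction through I would strengthen the statement with two invariants that the rules plainly maintain. First, \emph{conservativity}: the output store $\sigma'$ agrees with $\sigma$ on all of $\dom(\sigma)$, since every rule writes only to $l$ or to freshly chosen labels; hence $\sigma' \uparrow_{\tau''} l'' = \sigma \uparrow_{\tau''} l''$ for every old label $l''$, and in particular $\sigma' \uparrow_\Gamma \gamma = \sigma \uparrow_\Gamma \gamma$. Second, \emph{preservation}: by the type-soundness theorem for the operational semantics already proved above, there is $\Psi' \supseteq \Psi$ with $\Psi'(l) = \tau$ and $\wfstore{\Psi'}{\sigma'}$, so by the lifting lemma above $\sigma' \uparrow_\tau l$ is well defined. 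The companion statement for iteration reads: if $e$ is well-typed at $\tau$ in $\Gamma$ extended by $x{:}\tau_0$ and every $l'' \in L$ has $\Psi(l'') = \tau_0$, then some $\sigma',L'$ exist with $\evals{\sigma}{x}{L}{e}{\sigma'}{L'}$, $\sigma'$ is conservative over $\sigma$ and well-formed with $\tau$ assigned to each element of $L'$, and the labels in $\mflatten \sigma'[L']$, lifted elementwise at $\tau$, give exactly $\mflatten\{\SB{e}\,(\sigma\uparrow_\Gamma\gamma)[x\mapsto v] \mid v \in M\}$, where $M = \{\sigma\uparrow_{\tau_0} l'' \mid l'' \in L\}$; correspondingly $\summ \sigma'[L']$ equals the analogous sum of integer denotations.

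With these invariants in hand the cases are routine. For a term $t$ the result store is $\sigma[l := \op(t,\sigma)]$, and one checks, op by op, that lifting $\op(t,\sigma)$ at type $\tau$ reproduces the matching denotational equation: e.g.\ in $w_1 \eq w_2$ the operands have type $\intTy$, so $\sigma(w_i)$ are literally the integers $\SB{w_i}\,(\sigma\uparrow_\Gamma\gamma)$ and constructor equality coincides with value equality; pairing, projection out of a constructor, singleton, union, $\kempty$, negation, conjunction and the constants are similar, using that $\uparrow$ is the identity at base types and commutes with pairing and with multiset formation. For $\letin{x=e_1}{e_2}$ I apply the IH to $e_1$ at a fresh $l'$; conservativity gives $\sigma_1 \uparrow_\Gamma \gamma = \sigma \uparrow_\Gamma \gamma$ and the IH gives $\sigma_1 \uparrow_{\tau'} l' = \SB{e_1}\,(\sigma\uparrow_\Gamma\gamma)$, so $\sigma_1 \uparrow_{\Gamma,x{:}\tau'}(\gamma,x{\mapsto}l') = (\sigma\uparrow_\Gamma\gamma)[x \mapsto \SB{e_1}\,(\sigma\uparrow_\Gamma\gamma)]$, and then I apply the IH to $e_2[l'/x]$ in $\sigma_1$. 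The $\pi_i$ and $\ifthenelse{w}{e_\ktrue}{e_\kfalse}$ cases peel off one rule and invoke the IH, using that $\uparrow$ at a product selects the $i$-th component and at $\boolTy$ is the identity. The $\flatten$ and $\summ$ cases reduce to the iteration statement, whose proof is a secondary induction on the structure of $L$: for $\emptyset$ there is nothing to do; for a singleton $\{l''{:}m\}$ it is one use of the main IH on $e[l''/x]$ plus preservation of multiplicity; for $L_1 \oplus L_2$ the two recursive calls pick disjoint sets of fresh labels, so their stores are orthogonal extensions of $\sigma$, $\smerge_\sigma$ is defined, conservativity, preservation and the per-branch liftings survive the merge, and the result multisets concatenate --- after which $\mflatten \sigma'[L']$ (resp.\ $\summ \sigma'[L']$) matches the denotational flattening (resp.\ sum) by distributivity of $\mflatten$ and $\summ$ over $\oplus$.

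The main obstacle is precisely this iteration/comprehension case. Two things need care: (a) the nondeterminism of the rules --- the choice of fresh labels and of how to cut a multiset along $\oplus$ --- so that the ``for any such $\sigma'$'' clause genuinely is choice-insensitive, which is where conservativity and the orthogonal-merge discipline earn their keep; and (b) the multiset bookkeeping, i.e.\ threading multiplicities correctly through splitting, per-element evaluation and re-merging so that $\mflatten \sigma'[L']$ is exactly the flattened union of the per-element denotations. The remaining cases follow mechanically once the two strengthening invariants are in place.
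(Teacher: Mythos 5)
Your proposal is correct and follows essentially the same route as the paper, which offers no details beyond asserting that the result ``can be established by induction on the structure of derivations'': your mutual induction over the expression/typing structure and the iteration judgment, strengthened with freshness of the destination and intermediate labels, conservativity of the output store over $\dom(\sigma)$, and type preservation to keep the liftings defined, is the natural fleshing-out of exactly that argument. The only quibbles are presentational (e.g.\ the ``for any such $\sigma'$'' direction is cleanest as an induction on the given evaluation derivation, which your structure-of-$L$ case analysis mirrors anyway, and in the iteration invariant the elementwise lifting should be at the element type rather than at $\tau$), not gaps.
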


\end{tr}

\section{Traced evaluation}\labelSec{traces}

We now consider \emph{traces} which are intended to capture the
``execution history'' of a query in a form that is itself suitable for
querying.  We define traces $T$ using the terms
introduced earlier as follows:
\begin{eqnarray*}
  T &::=&  l \gets t \mid l \gets \trproj{i}{l'}{l''} \mid \trconde[l]{l'}{b}{T}{e_1}{e_2} \mid \trlet{T_1}{T_2}\\
  &\mid& l \gets \trsume{l'}{\Ts}{x.e} \mid l \gets \trcompe{l'}{\Ts}{x.e} 
  \\
  \Ts &::=& \{[l_1]T_1:m_1,\ldots,[l_n]T_n:m_n\}
\end{eqnarray*}
Terms, introduced above, describe single computation steps.  Labeled
trace collections $\Ts$ are multisets of labeled traces $[l]T$.
\emph{Assignment traces} $l \gets t$ record that a new label $l$ was
created and assigned the value described by trace term $t$.
\emph{Projection traces} $l \gets \trproj{i}{l'}{l''}$ record that $l$
was created and assigned the value at $l''$, by projecting the $i$-th
component of pair $l'$.  \emph{Sequential composition traces}
$T_1;T_2$ indicate that $T_1$ was performed first followed by $T_2$.
\emph{Conditional traces} $\trconde[l]{l'}{b}{T}{e_1}{e_2}$ record
that a conditional expression tested $l'$, found it equal to boolean
$b$, and then performed trace $T$ that writes to $l$.  In addition,
conditional traces record the alternative expressions $e_1$ and $e_2$
corresponding to the true and false branches.  \emph{Comprehension
  traces} $l \gets \trcompe{l'}{\Ts}{x.e}$ record that $l$ was created
by performing a comprehension over the set at $l'$, with subtraces
$\Ts$ describing the iterations; the expression $x.e$ records the body
of the comprehension with its bound variable $x$. Sum traces $l \gets
\trsume{l}{\Ts}{x.e}$ are similar.

When the expressions $e_1,e_2,x.e$ in conditional or comprehension
traces are irrelevant to the discussion we often omit them for
brevity, e.g. writing $\trcond[l]{l'}{b}{T}$ or $\trcomp{l}{\Ts}$.

We define the result label of a trace as follows:
\begin{eqnarray*}
  \result(l \gets t) &=& l\\
  \result(\trlet{T_1}{T_2}) &=& \result(T_2)\\
  \result(\trconde[l]{l'}{b}{T}{e_1}{e_2}) &=& l\\
  \result(l\gets \trproj{i}{l'}{l''}) &=& l\\
  \result(l\gets \trcompe{l'}{\Ts}{x.e}) &=& l\\
  \result(l\gets \trsume{l'}{\Ts}{x.e}) &=& l
\end{eqnarray*}
We define the input labels of a labeled trace set $\Ts$ as
$\inputs(\Ts) = \{l:m \mid [l]T:m \in \Ts\}$.  Similarly, the result
labels of $\Ts$ are defined as $\outputs(\Ts) = \{\result(T):m \mid
[l]T:m \in \Ts\}$.  Note that we treat both as multisets.

\subsection{Traced operational semantics}

\begin{figure}[tb]
  
\[\small
\begin{array}{c}
\infer{\treval{\sigma}{l}{t}{\sigma[l:=\op(t,\sigma)]}{l\gets t}}{}
\smallskip\\
\infer[l' \fresh]{\treval{\sigma}{l}{\letin{x=e_1}{e_2}}{\sigma_2}{\trlet{T_1}{T_2}}}
{\treval{\sigma}{l'}{e_1}{\sigma_1}{T_1} & 
\treval{\sigma}{l}{e_2[l'/x]}{\sigma_2}{T_2} }
\smallskip\\
\infer{\treval{\sigma}{l}{\ifthenelse{l'}{e_\ktrue}{e_\kfalse}}{\sigma'}{\trconde[l]{l'}{b}{T}{e_\ktrue}{e_\kfalse}}}{\sigma(l')
  = b & \treval{\sigma}{l}{e_b}{\sigma'}{T} }
\smallskip\\
\infer{\treval{\sigma}{l}{\pi_i l'}{\sigma[l:=\sigma(l_i)]}{l\gets \kproj_i(l',l_i)}}{\sigma(l') = (l_1,l_2)}
\smallskip\\
\infer{\treval{\sigma}{l}{\flatten\{e\mid x \in l'\}}{\sigma'[l:=\mflatten \sigma'[L']]}{l\gets \trcompe{l'}{\Ts}{x.e}}}
{\trevals{\sigma}{x}{\sigma(l')}{e}{\sigma'}{L'}{\Ts}}
\smallskip\\
\infer{\treval{\sigma}{l}{\summ\{e\mid x \in l'\}}{\sigma'[l:=\summ \sigma'[L']]}{l\gets \trsume{l'}{\Ts}{x.e}}}
{\trevals{\sigma}{x}{\sigma(l')}{e}{\sigma'}{L'}{\Ts}}
\smallskip\\
\infer{\trevals{\sigma}{x}{\emptyset}{e}{\sigma}{\emptyset}{\emptyset}}{}
\qquad
\infer{\trevals{\sigma}{x}{\{l:m\}}{e}{\sigma'}{\{l':m\}}{\{[l]T:m\}}}
{\treval{\sigma}{l'}{e[l/x]}{\sigma'}{T} & l' \fresh}
\smallskip\\
\infer{\trevals{\sigma}{x}{L_1\oplus L_2}{e}{\sigma_1 \smerge_{\sigma} \sigma_2}{L_1' \oplus L_2'}{\Ts_1 \oplus \Ts_2}}
{\trevals{\sigma}{x}{ L_1}{e}{\sigma_1}{L_1'}{\Ts_1}
&
\trevals{\sigma}{x}{ L_2}{e}{\sigma_2}{L_2'}{\Ts_2}}
\end{array}\]
  \caption{Traced evaluation}
\labelFig{trace-semantics}
\end{figure}

We now define \emph{traced evaluation}, a refinement of the
operational semantics in \refSec{opsem}.  The rules for traced
evaluation are shown in \refFig{trace-semantics}.  There are two
judgments: $\treval{\sigma}{l}{e}{\sigma'}{T}$, meaning ``Starting in
store $\sigma$, evaluating $e$ and storing the result at $l$ yields
store $\sigma'$ and trace $T$'', and
$\trevals{\sigma}{x}{L}{e}{\sigma'}{L'}{\Ts}$, meaning ``Starting in
store $\sigma$, evaluating $e$ with $x$ bound to each label in $L$ in
turn yields store $\sigma'$, result labels $L'$ and labeled traces
$\Ts$''.

Each operational semantics rule relates a different expression form to
its trace form.  Thus, traces can be viewed as explaining the dynamic
execution history of the expression.  (We will make this precise in
\refSec{metatheory}). In particular, terms $t$ are translated to
assignment traces.  Let-expressions are translated to sequential
compositions of traces.  For these expressions, it would be
superfluous to record additional information such as the values of the
inputs and outputs, since this can be recovered from the input store
and the trace (as we shall see below).  However, more detailed trace
information is needed for some expressions, such as projections,
conditionals, comprehensions, and sums.  Their traces record some
expression annotations and some information about the structure of the
input store.  Conditionals record the boolean value of the conditional
test as well as both branches of the conditional; comprehensions and
sums record the labels and subtraces of the elements of the input set
as well as the body of the comprehension.  This information is
necessary to obtain the fidelity property (\refSec{metatheory}) and to
ensure that we can extract other forms of provenance from traces
(\refSec{extraction}).

  \begin{figure}
\small
\begin{verbatim}
l <- comp(r,{
  [r1] x11 <- proj_C(r1,r13); x1 <- comp(s,{
    [s1] x111 <- proj_C(s1,s11); x112 <- x11 = x111; 
         cond(x112,f,x113 <- {}),
    [s2] x121 <- proj_C(s2,s21); x122 <- x11 = x121; 
         cond(x122,f,x123 <- {}),
    [s3] x131 <- proj_C(s3,s31); x132 <- x11 = x131; 
         cond(x132,t,l11 <- proj_A(r1,r11);
                     l12 <- proj_B(r1,r12);
                     l13 <- proj_D(s3,s32);
                     l1 <- (A:l11,B:l12,D:l13);
                     x136 <- {l1})}),
  [r2] x21 <- proj_C(r2,r23); x2 <- comp(s,{
    [s1] x211 <- proj_C(s1,s11); x212 <- x21 = x211; 
         cond(x212,f,x213 <- {}),
    [s2] x221 <- proj_C(s2,s21); x222 <- x21 = x221; 
         cond(x222,f,x223 <- {}),
    [s3] x231 <- proj_C(s3,s31); x232 <- x21 = x231; 
         cond(x232,t,l21 <- proj_A(r2,r21);
                     l22 <- proj_B(r2,r22);
                     l23 <- proj_D(s3,s32);
                     l2 <- (A:l21,B:l22,D:l23);
                     x126 <- {l2})}),
  [r3] x31 <- proj_C(r3,r33); x3 <- comp(s,{
    [s1] x311 <- proj_C(s1,s11); x312 <- x31 = x311;  
         cond(x312,f,x313 <- {}),
    [s2] x321 <- proj_C(s2,s21); x322 <- x31 = x321; 
         cond(x322,f,x323 <- {}),
    [s3] x331 <- proj_C(s3,s31); x332 <- x31 = x331;  
         cond(x332,f,x333 <- {})})})
\end{verbatim}
  \caption{Example trace for query $Q_1$}\labelFig{trace-example1}
\end{figure}
\begin{figure}
    \small
\begin{verbatim}
l11' <- 42;  x1 <- 2;
l12' <- sum(s,{
  [s1] x11 <- proj_C(s1,s11); x12 <- x11 = x1; 
       cond(x12,t, x13 <- proj_D(s1,s12)),
  [s2] x21 <- proj_C(s2,s21); x22 <- x21 = x1; 
       cond(x22,t, x23 <- proj_D(s2,s22)),
  [s3] x31 <- proj_C(s3,s31); x32 <- x31 = x1;
       cond(x32,f, x33 <- 0)});
l1' <- (C:l11',D:l12'); x <- {l1'}; y12 <- 4;
y <- comp(r,{
  [r1] y11 <- proj_C(r1,r13); y12 <- y11 = y1;
       cond(y12,f, y13 <- {}),
  [r2] y21 <- proj_C(r2,r21); y22 <- y21 = y1;
       cond(y22,f,y23 <- {}),
  [r3] y31 <- proj_C(r3,r31); y32 <- y31 = y1;
       cond(y32,t,l21' <- proj_B(r3,r32);
                  l22' <- proj_A(r3,r31);
                  l2' <- (C:l21',D:l22')
                  y33 <- {l2'})});
l' <- x U y
\end{verbatim}
    
    \caption{Example trace for query $Q_2$}
\label{fig:trace-example2}
\end{figure}

\begin{example}
  \refFig{trace-example1} shows one possible trace resulting from
  normalizing and running query $Q_1$ from \refEx{eg1} on the data in
  \refFig{examples}.  Similarly, \refFig{trace-example2} shows a
  possible trace of the grouping-aggregation query $Q_2$ from
  \refEx{eg2}.  Since the example queries use record syntax, we use
  terms such as $(\vec{A}:\vec{l})$ and traces $l \gets
  \trproj{A}{l'}{l''}$ for record construction and field projection.
  These operations are natural generalizations of pair
  terms and projection traces.  For brevity, the examples omit
  expression annotations.
\end{example}

\begin{tr}
  We will need the following property:
  \begin{lemma}\labelLem{result-tech}
    If $\treval{\sigma}{l}{e}{\sigma'}{T}$ then $\result(T) = l$.
  \end{lemma}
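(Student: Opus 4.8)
The plan is to proceed by straightforward induction on the derivation of $\treval{\sigma}{l}{e}{\sigma'}{T}$, inspecting in each case the shape of the trace $T$ produced by the rule and comparing it against the defining equations for $\result$. No companion statement about the iteration judgment $\trevals{\sigma}{x}{L}{e}{\sigma'}{L'}{\Ts}$ is needed, so the induction does not need to be mutual: even though the comprehension and sum rules have iteration premises, the result label of the resulting trace is fixed syntactically by the rule.

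For every rule except the one for $\klet$, the conclusion is immediate and requires no appeal to the induction hypothesis, because the trace $T$ is built directly at the destination label $l$. Concretely, in the term rule $T = l \gets t$; in the projection rule $T = l \gets \trproj{i}{l'}{l_i}$; in the conditional rule $T = \trconde[l]{l'}{b}{T'}{e_\ktrue}{e_\kfalse}$; in the $\flatten$ rule $T = l \gets \trcompe{l'}{\Ts}{x.e}$; and in the $\summ$ rule $T = l \gets \trsume{l'}{\Ts}{x.e}$. In each of these five cases the definition of $\result$ gives $\result(T) = l$ outright.

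The only case that uses the induction hypothesis is $\letin{x=e_1}{e_2}$, where $T = \trlet{T_1}{T_2}$ and $T_2$ comes from the second premise $\treval{\sigma_1}{l}{e_2[l'/x]}{\sigma_2}{T_2}$, which evaluates at the same destination $l$. By definition $\result(\trlet{T_1}{T_2}) = \result(T_2)$, and applying the induction hypothesis to that premise yields $\result(T_2) = l$, hence $\result(T) = l$.

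I expect no real obstacle: the lemma is essentially a bookkeeping check confirming that the destination-passing discipline of the operational semantics is faithfully mirrored by the structure of traces. The only thing to be mildly careful about is aligning the notation of each rule's conclusion with the corresponding clause of the $\result$ definition, which is routine.
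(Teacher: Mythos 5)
Your proposal is correct and matches the paper's own argument, which is simply an easy induction on the traced-evaluation derivation; your case analysis (only the $\klet$ case needing the induction hypothesis, all other rules fixing the result label $l$ syntactically) is exactly the intended bookkeeping.
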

  \begin{proof}
    Easy induction on derivations.
  \end{proof}
  \end{tr}

\section{Provenance extraction}\labelSec{extraction}



As we discussed in \refSec{intro}, a number of forms of provenance
have been defined already in the literature.  Although most of this
work has focused on flat relational queries, several techniques have
recently been extended to the NRC.  Thus, a natural question is: are
traces related to these other forms of provenance?

In this section we describe algorithms for extracting
where-provenance~\cite{buneman07icdt}, dependency
provenance~\cite{DBLP:conf/dbpl/CheneyAA07}, and semiring
provenance~\cite{DBLP:conf/pods/FosterGT08} from traces.  We will
develop extraction algorithms and prove them correct relative to the
existing definitions.  However, our operational formulation of traces
is rather different from existing denotational presentations of
provenance semantics, so we need to set up appropriate correspondences
between store-based and value-based representations.
\begin{sub} 
  Precisely formulating these equivalences requires introducing
  several auxiliary definitions and properties, which are largely
  irrelevant to the rest of this paper.  Therefore, the full
  translations and details are in the companion technical
  report~\cite{tr}.
\end{sub}
\begin{tr}
  Precisely formulating these equivalences requires introducing
  several auxiliary definitions and properties.
\end{tr}

We also discuss how provenance extraction yields insight into the
meaning of other forms of provenance.  We can view the extraction
algorithms as dynamic analyses of the provenance trace.  For example,
where-provenance can be viewed an analysis that identifies ``chains of
copies'' form the input to the output.  Conversely, we can view
high-level properties of traces as clear specifications that can be
used to justify new provenance-tracking techniques.

The fact that several distinct forms of provenance can all be
extracted from traces is a clear qualitative indication that traces
are very general.  This generality is not surprising in light of the
fidelity property, which essentially requires that the traces
accurately represent the query in all inputs.  In fact, the provenance
extraction rules do not inspect the expression annotations
$x.e,e_1,e_2$ in comprehension and conditional traces; thus, they all
work correctly even without these annotations.  Also, the extraction
rules do not have access to the underlying store $\sigma$; nor do they
need to reconstruct the intermediate store.  The trace itself records
enough information about the store labels actually accessed.

We first fix some terminology used in the rest of the section.  We
consider an \emph{annotated store} $\annot{\sigma}{h}$ to consist of a
store $\sigma$ and a function $h : \dom(\sigma) \to A$ assigning each
label in $\sigma$ to an annotation in $A$.
\begin{tr}
  We also consider several kinds of \emph{annotated values}.  In
  general, a value $v \in \annot{\Val}{A}$ with annotations $a$ from some set
  $A$ is an expression of the form
  \begin{eqnarray*}
    v &::=& w^x\\
    w &::=& i \mid b \mid (v_1,v_2) \mid \{v_1,\ldots,v_n\}
  \end{eqnarray*}
  This syntax strictly generalizes that of ordinary values since
  ordinary values can be viewed as values annotated by elements of
  some unit set $\{\star\}$, up to an obvious isomorphism.  Also, we
  write $|v|$ for the ordinary value obtained by erasing the
  annotations from $v$.  This is defined as:
  \begin{eqnarray*}
    |i^x| &=& i\quad  |b^x| = b \quad
    |(v_1,v_2)^x| = (|v_1|,|v_2|)\\
    |\{v_1,\ldots,v_n\}| &=& \{|v_1|,\ldots,|v_n|\}
  \end{eqnarray*}
  Moreover, we define $\lfloor w^x \rfloor = w$ and $\lceil w^x \rceil
  = x$.

  Given an $A$-annotated store $\annot{\sigma}{h}$, we can extract
  annotated values using the same technique as extracting ordinary
  values from an ordinary store:
  \begin{eqnarray*}
    \annot{\sigma}{h} \uparrow^A_{\intTy} l &=& \sigma(l)^{h(l)}\\
    \annot{\sigma}{h} \uparrow^A_{\boolTy} l &=&  \sigma(l)^{h(l)}\\
    \annot{\sigma}{h}\uparrow^A_{\tau_1 \times \tau_2}  l &=& ( \annot{\sigma}{h} \uparrow_{\tau_1}^A l_1, \annot{\sigma}{h} \uparrow_{\tau_2}^A l_2)^{h(l)} \quad (\sigma(l) = (l_1,l_2))\\
   \annot{\sigma}{h}\uparrow^A_{\{\tau\}}  l &=& \{\annot{\sigma}{h} :m\uparrow_{\tau}^A l' \mid l':m \in \sigma(l)\}^{h(l)}
  \end{eqnarray*}
  Moreover, for $\gamma : \Var \to \Lab$ we again write
  $\annot{\sigma}{h} \uparrow_\Gamma^A \gamma : \Var \to
  \annot{\Val}{A}$ for the extension of the annotated value extraction
  function from labels to environments.  Similarly, for $L $ a
  collection of labels we write $\annot{\sigma}{h}
  \uparrow_{\{\tau\}}^A L$ for $\{\sigma \uparrow_\tau^A l : m \mid l:m \in
  L\}$.
\end{tr}

\subsection{Where-provenance}

As discussed by~\cite{buneman01icdt,buneman07icdt}, where-provenance
is information about ``where an output value came from in the input''.
\citet{buneman07icdt} defined where-provenance semantics for NRC
queries via values annotated with optional annotations $A_\bot = A
\uplus \{\bot\}$.  Here, $\bot$ stands for the absence of
where-provenance, and $A$ is a set of tokens chosen to uniquely
address each part of the input.

The idea of where-provenance is that values ``copied'' via variable or
projection expressions retain their annotations, while other
operations produce results annotated with $\bot$. We use an auxiliary
function
\begin{eqnarray*}
  \where(l,h) &=& h(l)\\
  \where(t,h) &= &\bot \quad (t \neq l)
\end{eqnarray*}
that defines the annotation of the result of a term $t$ with respect to $h
: \Lab \to A_\bot$ to be preserved if $t = l$ and otherwise $\bot$.
\citet{buneman07icdt} did not consider integer operations or sums; we
support them by annotating the results with $\bot$.

  \begin{tr}
    We first review the denotational presentation of where-provenance
    from ~\cite{buneman07icdt}.  \refFig{where-denot} shows the
    semantics of expressions $e$ as a function $\WSB{e}$ mapping
    contexts $\gamma : \Var \to \annot{\Val}{A_\bot}$ to $A_\bot$-annotated
    values.

    \begin{figure*}[p]
      
      \begin{eqnarray*}
        \WSB{x}\gamma &=& \gamma(x)\\
        \WSB{\letin{x = e_1}{e_2}} &=& \WSB{e_2}\gamma[x := \WSB{e_1}\gamma]\\
        \WSB{i} \gamma &=& i^\bot\\
        \WSB{e_1+e_2} \gamma &=& (\getv{\WSB{e_1}\gamma} + \getv{\WSB{e_2}\gamma})^\bot\\
        \WSB{\summ \{e \mid x \in e_0\}}\gamma &=& 
        (\summ \{\getv{\WSB{e}\gamma[x\mapsto v]} \mid v \in \getv{\WSB{e_0}\gamma}\})^\bot \\
        \WSB{b}\gamma &=& b^\bot\\
        \WSB{\nott e}\gamma &=& (\nott\geta{\WSB{e}\gamma})^\bot\\
        \WSB{e_1 \andd e_2}\gamma &=& (\geta{\WSB{e_1}\gamma} \andd \geta{\WSB{e_2}\gamma})^\bot\\
        \WSB{(e_1,e_2)}\gamma &=& (\WSB{e_1}\gamma ,\WSB{e_2}\gamma )^\bot\\
        \WSB{\pi_i(e)}\gamma &=& \pi_i(\getv{\WSB{e}\gamma})\\
        \WSB{\emptyset}\gamma &=& \emptyset^\bot\\
        \WSB{\setof{e}}\gamma &=& \{\WSB{e}\gamma\}^\bot\\
        \WSB{e_1 \cup e_2}\gamma &=& (\getv{\WSB{e_1}\gamma} \cup \getv{\WSB{e_2}\gamma})^\bot\\
        \WSB{\flatten\{e \mid x \in e_0\}}\gamma &=& (\mflatten\{\getv{\WSB{e}\gamma[x\mapsto v]}\mid v \in \getv{\WSB{e_0}\gamma}\})^\bot\\
        \WSB{\ifthenelse{e_0}{e_1}{e_2}}\gamma &=& \left\{ 
          \begin{array}{ll}
            \WSB{e_1}\gamma & \text{if $\getv{\WSB{e_0}{\gamma}} = \ktrue$}\\
            \WSB{e_2}\gamma & \text{if $\getv{\WSB{e_0}{\gamma}} = \kfalse$}
          \end{array}\right.\\
        \WSB{e_1 \eq e_2}\gamma &=& \left\{ 
          \begin{array}{ll}
            \ktrue^\bot & \text{if $\getv{\WSB{e_1}{\gamma}} = \getv{\WSB{e_2}{\gamma}}$}\\
            \kfalse^\bot& \text{if $\getv{\WSB{e_1}{\gamma}} \neq \getv{\WSB{e_2}{\gamma}}$}
          \end{array}\right.
        \\
        \WSB{\kempty(e)}\gamma &=& \left\{ 
          \begin{array}{ll}
            \ktrue^\bot & \text{if $\getv{\WSB{e}{\gamma}} = \emptyset$}\\
            \kfalse^\bot& \text{if $\getv{\WSB{e}{\gamma}} \neq \emptyset$}
          \end{array}\right.
      \end{eqnarray*}
      
      \caption{Where-provenance, denotationally}
      \label{fig:where-denot}
    \end{figure*}

    In \refFig{where-op}, we introduce an equivalent operational
    formulation.  We define judgments
    $\heval[W]{\sigma}{h}{l}{e}{\sigma'}{h'}$ for expression
    evaluation and $\hevals[W]{\sigma}{h}{x}{L}{e}{\sigma'}{h'}{L'}$
    for iteration, both with where-provenance propagation.

    It is straightforward to prove by induction that:
    \begin{theorem}
      ~
      \begin{enumerate}
      \item Suppose $\wf{\Gamma}{e}{\tau}$ and
        $\wfstorectx{\Psi}{\sigma}{\gamma}{\Gamma}$.  Then
        $\heval[W]{\sigma}{h}{l}{\gamma(e)}{\sigma'}{h'}$ if and only
        if $\WSB{e} (\annot{\sigma}{h} \uparrow^{A_\bot}_\Gamma
        \gamma) = \annot{\sigma'}{h'} \uparrow^{A_\bot}_\tau l$.
        \item Suppose $\wf{\Gamma,x:\tau}{e}{\{\tau'\}}$ and $\wfstorectx{\Psi}{\sigma}{\gamma}{\Gamma}$.  Then 
$\hevals[W]{\sigma}{h}{x}{L}{\gamma(e)}{\sigma'}{h'}{L'}$ if and only if
         $\{\WSB{e}\gamma[x := v] \mid v \in
        \annot{\sigma}{h}\uparrow^{A_\bot}_{\{\tau\}} L\} =
        \annot{\sigma'}{h'}\uparrow^{A_\bot}_{\{\tau'\}} L'$.
        \end{enumerate}
    \end{theorem}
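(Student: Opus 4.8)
The plan is to prove the two parts simultaneously by induction, since the judgments $\heval[W]{\sigma}{h}{l}{e}{\sigma'}{h'}$ and $\hevals[W]{\sigma}{h}{x}{L}{e}{\sigma'}{h'}{L'}$ are mutually recursive: a comprehension or sum in part~(1) unfolds to an iteration governed by part~(2), and the body of an iteration in part~(2) is an expression governed by part~(1). For the forward (``only if'') direction I would induct on the derivation of the operational judgment. Each rule of \refFig{where-op} is the instrumented form of exactly one rule of \refFig{opsem} and lines up with exactly one clause of the denotational semantics in \refFig{where-denot}, so every case reduces to the same routine: rewrite the annotated values extracted from the intermediate and result stores using the induction hypothesis, then unfold the matching clause of $\WSB{e}\gamma$, the definition of $\where$, and the definition of $\annot{\sigma'}{h'}\uparrow^{A_\bot}_\tau l$, and check that the two sides coincide. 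The term cases are immediate: $\op$ already forces agreement of the underlying (unannotated) values by the correctness of the plain operational semantics, and $\where$ assigns $\bot$ to every term except a bare label, which is exactly the $(\cdots)^\bot$ pattern in the corresponding denotational clauses. The variable and projection cases are the only ones that actually propagate an annotation, matching the $\gamma(x)$ and $\pi_i(\getv{\cdots})$ clauses, and the $\klet$ case merely threads the intermediate store and annotation function through two applications of the induction hypothesis.

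The cases carrying real content are comprehension and sum, which appeal to part~(2). For part~(2) I would induct on the iteration derivation, with the three rules for $\emptyset$, a singleton, and a disjoint union $L_1 \oplus L_2$. The disjoint-union rule needs one auxiliary lemma: annotated-value extraction is insensitive to orthogonal store merging, i.e.\ extracting a value at $l$ from $\sigma_1 \smerge_\sigma \sigma_2$ (with the two annotation functions merged correspondingly) agrees with extracting it from whichever summand $l$ is reachable in --- the same argument that underlies the correctness of the plain semantics, now carrying $h$ along. Combining this with the evident compatibility of $\mflatten$ (resp.\ $\summ$) with the decompositions $L = L_1 \oplus L_2$ and $L' = L_1' \oplus L_2'$, the extracted result $\annot{\sigma'}{h'}\uparrow^{A_\bot}_{\{\tau'\}}(\mflatten \sigma'[L'])$ unfolds into the disjoint union of the per-element extracted values, which by the induction hypothesis for the body is precisely $\mflatten\{\getv{\WSB{e}\gamma[x := v]} \mid v \in \annot{\sigma}{h}\uparrow^{A_\bot}_{\{\tau\}} L\}$, matching the $\WSB{\flatten\{e \mid x \in e_0\}}\gamma$ clause; the sum case is identical with $\summ$ in place of $\mflatten$.

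For the backward (``if'') direction I would use that the where-provenance operational semantics is a conservative refinement of the plain one: erasing the annotation components of a derivation in \refFig{where-op} yields a derivation in \refFig{opsem} over the same stores, and conversely each plain derivation lifts to an annotated derivation in which $h'$ is forced by $h$ and the shape of the derivation. Since the plain semantics is total on well-typed inputs (the correctness proposition of \refSec{opsem}), $\gamma(e)$ evaluates at $l$, so a derivation $\heval[W]{\sigma}{h}{l}{\gamma(e)}{\sigma'}{h'}$ exists --- uniquely, up to a renaming of the fresh labels that $\uparrow^{A_\bot}$ cannot observe --- and the forward direction identifies its result with $\WSB{e}{(\annot{\sigma}{h}\uparrow^{A_\bot}_\Gamma\gamma)}$, yielding the stated equation; part~(2) is handled the same way. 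One could instead run the backward direction as a direct induction on the structure of $e$, inverting the operational rules, but it uses the same auxiliary facts.

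The main obstacle is the comprehension/sum case: getting the iteration lemma and the attendant multiset bookkeeping right --- multiplicities, the $\oplus$-decompositions of $L$ and $L'$, and the way $\mflatten$ and $\summ$ commute with $\annot{\sigma'}{h'}\uparrow^{A_\bot}_{\{\tau\}}$ --- so that it matches the denotational clauses for $\flatten$ and $\summ$. Every other case is a mechanical unfolding of \refFig{where-op}, \refFig{where-denot}, and the definitions of $\where$ and $\uparrow^{A_\bot}$.
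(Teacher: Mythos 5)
Your proposal is correct and takes essentially the same route as the paper: the paper offers no detailed argument for this theorem, saying only that it ``is straightforward to prove by induction,'' and your simultaneous induction on the two mutually recursive judgments (matching each rule of the operational where-provenance semantics against the corresponding denotational clause, with an orthogonal-store-merge lemma for the $\oplus$ case and totality/determinism of evaluation for the converse direction) is exactly the natural elaboration of that remark.
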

\end{tr}

\begin{sub}
  In \refFig{where-op}, we define an operational formulation of
  where-provenance that is equivalent to the original denotational
  presentation (see ~\cite{tr}).  We define judgments
  $\heval[W]{\sigma}{h}{l}{e}{\sigma'}{h'}$ for expression evaluation
  and $\hevals[W]{\sigma}{h}{x}{L}{e}{\sigma'}{h'}{L'}$ for iteration,
  both with where-provenance propagation.
\end{sub}
\begin{figure}[tb]
  \small
  \[
  \begin{array}{c}
    \infer{\heval[W]{\sigma}{h}{l}{t}{\sigma[l:=t]}{h[l:=\where(t,h)]}}{}
    \smallskip\\
    \infer{\heval[W]{\sigma}{h}{l}{\letin{x=e_1}{e_2}}{\sigma''}{h''}}{
      \heval[W]{\sigma}{h}{l'}{e_1}{\sigma'}{h'} &
      \heval[W]{\sigma'}{h'}{l}{e_2[l'/x]}{\sigma''}{h''} & 
      l' \fresh
    }
    \smallskip\\
    \infer{\heval[W]{\sigma}{h}{l}{\pi_i(l')}{\sigma[l:=\sigma(l_i)]}{h[l:=h(l_i)]}}{\sigma(l') = (l_1,l_2)}
    \smallskip\\
    \infer{\heval[W]{\sigma}{h}{l}{\ifthenelse{l'}{e_{\ktrue}}{e_{\kfalse}}}{\sigma'}{h'}}{
      \sigma(l') = b & \heval[W]{\sigma}{h}{l}{e_b}{\sigma'}{h'}}
    \smallskip\\
    \infer{\heval[W]{\sigma}{h}{l}{\flatten \{e \mid x \in l'\}}{\sigma'[l:=\mflatten \sigma'[L']]}{h[l:=\bot]}}{\hevals[W]{\sigma}{h}{x}{\sigma(l)}{e}{\sigma'}{h'}{L'}}
    \smallskip\\
    \infer{\heval[W]{\sigma}{h}{l}{\sum \{e \mid x \in l'\}}{\sigma'[l:=\summ \sigma'[L']]}{h[l:=\bot]}}{\hevals[W]{\sigma}{h}{x}{\sigma(l)}{e}{\sigma'}{h'}{L'}}
\smallskip\\
\infer{\hevals[W]{\sigma}{h}{x}{\emptyset}{e}{\sigma}{h}{\emptyset}}{}
\smallskip\\
\infer{\hevals[W]{\sigma}{h}{x}{L_1 \oplus L_2}{e}{\sigma_1 \uplus_\sigma \sigma_2}{h_1 \uplus _h h_2}{L_1' \oplus L_2'}}{\hevals[W]{\sigma}{h}{x}{L_1}{e}{\sigma_1}{h_1}{L_1'} & \hevals[W]{\sigma}{h}{x}{L_1}{e}{\sigma_2}{h_2}{L_2'}}
\smallskip\\
\infer{\hevals[W]{\sigma}{h}{x}{\{l:m\}}{e}{\sigma'}{h'}{\{l':m\}}}{\heval[W]{\sigma}{h}{l'}{e[l/x]}{\sigma'}{h'} & l' \fresh}
  \end{array}
  \]
  \caption{Where-provenance, operationally}
  \label{fig:where-op}
\end{figure}

\begin{figure}
  
\small
  \[
  \begin{array}{c}
    \infer{\wext{h}{l\gets t}{h[l:=\where(t,h)]}}{}
\quad
\infer{\wext{h}{\trlet{T_1}{T_2}}{h''}}{\wext{h}{T_1}{h'} & \wext{h'}{T_2}{h''}}
\smallskip\\
\infer{\wext{h}{l \gets \trproj{i}{l'}{l''}}{h[l:=h(l'')]}}{}
\quad
\infer{\wext{h}{\trcond[l]{l'}{b}{T}}{h'}}{\wext{h}{T}{h'}}
\smallskip\\
\infer{\wext{h}{l \gets \trcomp{l'}{\Ts}}{h'[l:=\bot]}}{\wexts{h}{\Ts}{h'}}
\quad
\infer{\wext{h}{l \gets \trsum{l'}{\Ts}}{h'[l:=\bot]}}{\wexts{h}{\Ts}{h'}}
\smallskip\\
\infer{\wexts{h}{\emptyset}{h}}{}
\quad
\infer{\wexts{h}{\Ts_1 \oplus \Ts_2}{h_1 \uplus_h h_2}}{ \wexts{h}{\Ts_1}{h_1} & \wexts{h}{\Ts_2}{h_2}}
\quad
\infer{\wexts{h}{\{[l]T:m\}}{h'}}{\wext{h}{T}{h'}}
  \end{array}
  \]
  \caption{Extracting where-provenance}
\label{fig:where-from-trace}
\end{figure}

The where-provenance extraction relation is shown in
\refFig{where-from-trace}; we define judgment $\wext{h}{T}{h'}$, which
takes input annotations $h$ and propagates them through $T$ to yield
output annotations $h'$, and judgment $\wexts{h}{\Ts}{h'}$ which
propagates annotations through a set of traces.  Where-provenance
extraction can be shown correct relative to the operational
where-provenance semantics, as follows:
\begin{sub}
  \begin{theorem}
    Suppose $\treval{\sigma}{l}{e}{\sigma'}{T}$ and $h :
    \dom(\sigma)\to A_\bot$ is given.  Then
    $\heval[W]{\sigma}{h}{l}{e}{\sigma'}{h'}$ holds if and only if
    $\wext{h}{T}{h'}$ holds.  
  \end{theorem}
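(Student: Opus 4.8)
The plan is to prove the biconditional by a simultaneous induction on the derivation of the traced evaluation $\treval{\sigma}{l}{e}{\sigma'}{T}$, paired with the analogous statement for the iteration judgments: whenever $\trevals{\sigma}{x}{L}{e}{\sigma'}{L'}{\Ts}$, we have $\hevals[W]{\sigma}{h}{x}{L}{e}{\sigma'}{h'}{L'}$ iff $\wexts{h}{\Ts}{h'}$. The observation that makes this go through cleanly is that the three systems involved — traced evaluation (\refFig{trace-semantics}), operational where-provenance (\refFig{where-op}), and where-provenance extraction (\refFig{where-from-trace}) — are defined by rules in one-to-one correspondence: a given traced-evaluation rule for $e$ forces the shape of any where-provenance derivation (its last rule is determined by the syntactic form of $e$ together with the branch taken in a conditional), and it forces the shape of the produced trace $T$, which in turn forces the last rule of any extraction derivation for $T$. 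So the argument is essentially a case analysis on the last rule of the traced-evaluation derivation, inverting the other two derivations and matching them up.

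The individual cases are routine. For an assignment trace $l \gets t$ and a projection trace $l \gets \trproj{i}{l'}{l''}$, all three rules assign $l$ exactly the same annotation, namely $\where(t,h)$ and $h(l'')$ respectively, so there is nothing to prove. For $\trlet{T_1}{T_2}$ one inverts the let rule of \refFig{where-op} and the composition rule of \refFig{where-from-trace}, threads the intermediate annotation map produced after $T_1$ into the evaluation/extraction of $T_2$, and applies the induction hypothesis to the two subderivations; the conditional case $\trconde[l]{l'}{b}{T}{e_1}{e_2}$ is similar but uses the single subderivation $\treval{\sigma}{l}{e_b}{\sigma'}{T}$. The comprehension and sum cases appeal to the iteration part of the induction hypothesis — the subtrace collection $\Ts$ is built by exactly the iteration over $\sigma(l')$ that also drives the operational where-provenance iteration, so $\hevals[W]{\sigma}{h}{x}{\sigma(l')}{e}{\sigma'}{h'}{L'}$ and $\wexts{h}{\Ts}{h'}$ hold for the same $h'$ — after which both rules finish by setting the annotation of $l$ to $\bot$. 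The iteration statement itself splits into its three cases (empty, singleton, disjoint union): the singleton case invokes the main induction hypothesis on the body, and the union case invokes the iteration hypothesis on each half and glues the resulting annotation maps with $\uplus_h$ exactly as the stores are glued with $\smerge$.

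The one genuinely delicate point is the disjoint-union rule for iteration. There the operational where-provenance judgment is free to partition the input multiset $L$ arbitrarily, whereas the induction hands us the particular split $L = L_1 \oplus L_2$, and corresponding $\Ts = \Ts_1 \oplus \Ts_2$, chosen by the traced-evaluation derivation. To reconcile these I would first establish a small determinacy lemma stating that the output annotation map of $\hevals[W]$ (and, likewise, of $\wexts$) is independent of how the input multiset (resp. trace collection) is partitioned — a routine induction on multiset structure using orthogonality of the merges, mirroring the corresponding property of the plain operational semantics. An equivalent and perhaps tidier organization is to prove first the stronger statement that a single traced-evaluation derivation yields, for every $h : \dom(\sigma) \to A_\bot$, a unique $h'$ witnessing both $\heval[W]{\sigma}{h}{l}{e}{\sigma'}{h'}$ and $\wext{h}{T}{h'}$, building both derivations from the traced derivation with matching splits; the biconditional in the theorem is then immediate. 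With that settled, all remaining cases are mechanical rule-matching, since the extraction judgment of \refFig{where-from-trace} was manifestly designed to track the operational where-provenance semantics step for step.
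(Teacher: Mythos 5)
Your proposal matches the paper's intended argument: the technical-report version of this result is stated in exactly the strengthened two-part form you use (the main judgment together with the iteration judgment $\trevals{\sigma}{x}{L}{e}{\sigma'}{L'}{\Ts}$ versus $\hevals[W]$/$\wexts$), and the proof is the same routine simultaneous induction by rule-matching that you describe, which the paper leaves as a straightforward induction. Your extra care about the $\oplus$-partitioning in the iteration case (via a determinacy observation, or by constructing both derivations with the splits dictated by the traced derivation) is a reasonable way to discharge a detail the paper glosses over, and does not change the approach.
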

\end{sub}
\begin{tr}
  \begin{theorem}
    ~
    \begin{enumerate}
    \item Suppose $\treval{\sigma}{l}{e}{\sigma'}{T}$ and $h :
      \dom(\sigma)\to A_\bot$ is given.  Then
      $\heval[W]{\sigma}{h}{l}{e}{\sigma'}{h'}$ holds if and only if
      $\wext{h}{T}{h'}$ holds.  
    \item If $\trevals{\sigma}{x}{L}{e}{\sigma'}{L'}$ then
      $\hevals[W]{\sigma}{h}{x}{L}{e}{\sigma'}{h'}{L'}$ if and only if
      $\wexts{h}{\Ts}{h'}$.
    \end{enumerate}
  \end{theorem}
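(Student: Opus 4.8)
The plan is to prove both parts simultaneously by mutual induction on the structure of the traced-evaluation derivation --- on the derivation of $\treval{\sigma}{l}{e}{\sigma'}{T}$ for part~1 and of $\trevals{\sigma}{x}{L}{e}{\sigma'}{L'}{\Ts}$ for part~2 --- establishing the two directions of each ``if and only if'' at the same time, with the traced derivation serving as the common skeleton from which a $\heval[W]$ (resp.\ $\wext$) derivation is read off or onto which one is reconstructed. The key structural observation is that traced evaluation, the operational where-provenance semantics of \refFig{where-op}, and the extraction relation of \refFig{where-from-trace} are all syntax-directed, with exactly one rule per expression form (resp.\ trace form), and that the traced semantics fixes which trace constructor accompanies which expression; so each inductive case is almost mechanical --- invert the $\heval[W]$ or $\wext$ hypothesis according to the shape of $T$, apply the induction hypothesis to the immediate subderivations, and reassemble. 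For terms the two $h$-updates are literally the same, $h[l := \where(t,h)]$; for projections both copy $h(l_i)$ where $l_i$ is the projected component recorded in the trace; and for conditionals, comprehensions and sums the $h$-transformation on each side is exactly that of the corresponding subtrace, post-composed with $l := \bot$ in the comprehension and sum cases. In particular the extraction rules never consult $\sigma$ nor the expression annotations $e_1,e_2,x.e$, so nothing further need be checked there.

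Before the induction it is convenient to record two routine facts. First, $\treval{\sigma}{l}{e}{\sigma'}{T}$ implies $\eval{\sigma}{l}{e}{\sigma'}$ (erase the trace) and $\heval[W]{\sigma}{h}{l}{e}{\sigma''}{h'}$ implies $\eval{\sigma}{l}{e}{\sigma''}$ (erase the annotation); since stores are write-once and every fresh label introduced along an evaluation lies in its final store, fixing $\sigma'$ forces the where-provenance evaluation to make the same fresh-label choices as the traced one, so $\sigma'' = \sigma'$ and all intermediate stores agree. This is what lets the three derivations be kept in lock-step through the $\klet$ case and the comprehension bodies. Second, the extraction relations are \emph{functional and compositional}: $\wext{h}{T}{h'}$ determines $h'$ from $h$ and $T$, and $\wexts{h}{\Ts}{h'}$ is insensitive to how $\Ts$ is split by $\oplus$ --- that is, $\wexts{h}{\Ts_1 \oplus \Ts_2}{h'}$ holds iff $h' = h_1 \uplus_h h_2$ with $\wexts{h}{\Ts_i}{h_i}$ --- which follows from commutativity and associativity of $\uplus_h$ together with the observation that the empty and singleton rules contribute exactly the per-element updates. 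An analogous compositionality statement holds for the where-provenance iteration judgment once $\sigma'$ and $L'$ are fixed.

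The only genuinely delicate cases are the $\flatten$ and $\summ$ cases of part~1 and the iteration rules of part~2: there the $\trevals$ derivation exhibits a particular decomposition $L = L_1 \oplus L_2$, $\Ts = \Ts_1 \oplus \Ts_2$, $\sigma' = \sigma_1 \smerge_\sigma \sigma_2$, $L' = L_1' \oplus L_2'$, whereas the $\hevals[W]$ and $\wexts$ rules are free to split differently. The plan is to use the compositionality facts above to move any such derivation to the canonical decomposition coming from the traced derivation, apply the induction hypotheses componentwise to obtain matching $h_1$ and $h_2$, and then conclude $h' = h_1 \uplus_h h_2$ on both sides, with the store and annotation merges lining up because $\dom(\sigma_1')$ and $\dom(\sigma_2')$ are disjoint by the definition of $\smerge_\sigma$. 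I expect this bookkeeping --- keeping the $\oplus$-decompositions of $L$, $\Ts$, $\sigma'$ and the annotation functions synchronized across all three judgments --- to be the one fiddly part; the remaining cases reduce directly to the induction hypothesis.
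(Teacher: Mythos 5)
The paper never actually writes out a proof of this theorem --- it is only asserted (like its denotational counterpart) to be provable by a straightforward induction --- so there is no official argument to diverge from. Your plan is precisely the routine argument the paper is alluding to: mutual induction on the derivation of $\treval{\sigma}{l}{e}{\sigma'}{T}$ (resp.\ the iteration judgment), using syntax-directedness to read the $\heval[W]{\sigma}{h}{l}{e}{\sigma'}{h'}$ and $\wext{h}{T}{h'}$ derivations off the same skeleton in both directions, and your identification of the only genuinely delicate point --- that the $\oplus$-decompositions of $L$, $\Ts$ and the stores chosen by the three judgments need not coincide, which you repair with a compositionality/permutation lemma for $\wexts{h}{\Ts}{h'}$ and the $W$-iteration judgment, relying on disjointness of the per-element writes guaranteed by $\smerge_\sigma$ --- is correct and is where the real (if modest) work lies.

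The one step I would not accept as written is the preliminary claim that fixing the final store forces the where-provenance evaluation to make the same fresh-label choices as the traced run. It does not, quite: for $\letin{x=l_a}{\letin{y=l_b}{5}}$ with $\sigma(l_a)=\sigma(l_b)=1$, the two fresh cells can be allocated in either order, yielding the same $\sigma'$ but with the annotations $h(l_a)$ and $h(l_b)$ attached to swapped labels; so there is a derivation of $\heval[W]{\sigma}{h}{l}{e}{\sigma'}{h'}$ whose $h'$ is not produced by $\wext{h}{T}{\cdot}$. This is really an imprecision inherited from the theorem statement itself (freshness is nondeterministic, and the ``if and only if'' implicitly quantifies $h'$ universally), not a flaw in your strategy: once the statement is read up to the correspondence of fresh-label choices between the two runs (or freshness is made deterministic), your lock-step observation is exactly the glue needed and the remaining cases go through as you describe. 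You should, however, state that reading explicitly rather than assert the false determinacy lemma.
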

\end{tr}
\begin{figure}

\begin{tikzpicture}
  \node at (0,0) {A};
  \node at (1,0) {B};
  \node at (2,0) {D};
  
  \node (atom11) at (0,-0.5) [inner sep = 0pt] {$1$};
  \node (atom12) at (1,-0.5) [inner sep = 0pt] {$2$};
  \node (atom13) at (2,-0.5) [inner sep = 0pt] {$7$};
  \node (atom21) at (0,-1) [inner sep = 0pt] {$1$};
  \node (atom22) at (1,-1) [inner sep = 0pt] {$3$};
  \node (atom23) at (2,-1) [inner sep = 0pt] {$7$};

  \node (table) at (-0.3,0.3) [anchor=north west,rectangle,draw,minimum width = 3cm, minimum height
  = 1.6cm] {};
  
  \node (caption) at ([yshift=-0.4cm]table.south) {Output table $Q_1(A,B,D)$};

  \node (row1) at (-0.2,-0.3) [help lines,anchor=north west,rectangle,draw,minimum width = 2.8cm, minimum height
  = 0.4cm] {};
  \node (row2) at (-0.2,-0.8) [help lines,anchor=north west,rectangle,draw,minimum width = 2.8cm, minimum height
  = 0.4cm] {};

  \node (ctab) at ([yshift=+5mm]table.north) {\small $\bot$};
  \node (crow1) at ([xshift=-6mm]row1.west) {\small $\bot$};
  \node (crow2) at ([xshift=-6mm]row2.west) {\small $\bot$};
  \node (catom11) at (0.4,-0.45) {\small $r_{11}$};
  \node (catom12) at (1.4,-0.45) {\small $r_{12}$};
  \node (catom13) at (2.4,-0.45) {\small $s_{23}$};
  \node (catom21) at (0.4,-0.95) {\small $r_{21}$};
  \node (catom22) at (1.4,-0.95) {\small $r_{22}$};
  \node (catom23) at (2.4,-0.95) {\small $s_{23}$};

  \draw (table.north) -- (ctab.south);
  \draw (row1.west) -- (crow1);
  \draw (row2.west) -- (crow2);

  \node at (4,0) {C};
  \node at (5,0) {D};
  
  \node (atom1) at (4,-0.5) [inner sep = 0pt] {$42$};
  \node (atom2) at (5,-0.5) [inner sep = 0pt] {$7$};
  \node (atom3) at (4,-1) [inner sep = 0pt] {$42$};
  \node (atom4) at (5,-1) [inner sep = 0pt] {$7$};

  \node (table) at (3.7,0.3) [anchor=north west,rectangle,draw,minimum width = 2cm, minimum height
  = 1.6cm] {};
  
  \node (caption) at ([yshift=-0.4cm]table.south) {Output table $Q_2(C,D)$};

  \node (row1) at (3.8,-0.3) [help lines,anchor=north west,rectangle,draw,minimum width = 1.8cm, minimum height
  = 0.4cm] {};
  \node (row2) at (3.8,-0.8) [help lines,anchor=north west,rectangle,draw,minimum width = 1.8cm, minimum height
  = 0.4cm] {};

  \node (ctab) at ([yshift=+5mm]table.north) {\small $\bot$};
  \node (crow1) at ([xshift=-6mm]row1.west) {\small $\bot$};
  \node (crow2) at ([xshift=-6mm]row2.west) {\small $\bot$};
  \node (catom1) at (4.4,-0.45) {\small $\bot$};
  \node (catom2) at (5.4,-0.45) {\small $\bot$};
  \node (catom3) at (4.4,-0.95) {\small $r_{32}$};
  \node (catom4) at (5.4,-0.95) {\small $r_{31}$};

  \draw (table.north) -- (ctab.south);
  \draw (row1.west) -- (crow1);
  \draw (row2.west) -- (crow2);
\end{tikzpicture}%
\caption{Where-provenance extraction examples}
\labelFig{where-ext-examples}
\end{figure}
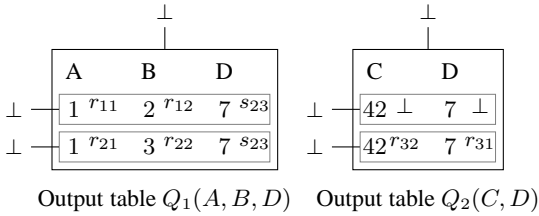

\begin{example}
  \refFig{where-ext-examples} shows the results of where-provenance
  extraction for \refExs{eg1}{eg2}.  For the inputs and results in
  \refFig{examples}, the field values copied from the input have
  provenance links to their sources, whereas values computed from
  several values have no where-provenance ($\bot$).
\end{example}

\begin{definition}
  A \emph{copy} with source $l'$ and target $l$ is a trace of either
  the form $l \gets l'$ or $l \gets \trproj{i}{l''}{l'}$.  A
  \emph{chain of copies} from $l_0$ to $l_n$ is a sequence of trace
  steps $T_1;\ldots;T_n$ where each step $T_i$ is a copy from
  $l_{i-1}$ to $l_i$.  We say that a trace $T$ \emph{contains a chain
    of copies} from $l'$ to $l$ if there is a chain of copies from
  $l'$ to $l$ all of whose operations are present in $T$.
\end{definition}
Let $\id_\sigma : \dom(\sigma) \to
\dom(\sigma)_\bot$ be the (lifted) identity function on $\sigma$.
\begin{proposition}
  Suppose $\treval{\sigma}{l}{e}{\sigma'}{T}$ and
  $\wext{\id_\sigma}{T}{h}$.  Then for each $l' \in \dom(\sigma')$,
  $h(l') \neq \bot$ if and only if there is a chain of copies from
  $h(l')$ to $l'$ in $T$.
\end{proposition}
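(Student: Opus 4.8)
The plan is to prove the claim by induction on the derivation of $\treval{\sigma}{l}{e}{\sigma'}{T}$, after generalizing it to allow an arbitrary input annotation $h$ with $\dom(h) = \dom(\sigma)$ in place of $\id_\sigma$; this generalization is forced by the $\trlet$ case, where the second subtrace is processed starting from a non-identity annotation. First I would record a few routine structural invariants, each provable by a straightforward induction. For traced evaluation: if $\treval{\sigma}{l}{e}{\sigma'}{T}$ then $l \notin \dom(\sigma)$, $\result(T) = l$ (this is \refLem{result-tech}), $\dom(\sigma') = \dom(\sigma) \uplus \Wr(T)$ where $\Wr(T)$ is the set of labels assigned in $T$, each label of $\Wr(T)$ is assigned by exactly one step of $T$, and every label referenced by a step of $T$ --- as the source of a copy, the scrutinee of a $\kcond$, the collection of a $\kcomp$ or $\ksum$, or an operand of a term --- is already defined when that step runs. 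Hence the dependency structure of $T$ is acyclic, and every label $l'$ has a unique \emph{maximal backward copy-chain} $l_0 \leftarrow \cdots \leftarrow l'$ consisting of copy steps of $T$, whose root $l_0$ is either in $\dom(\sigma)$ or is assigned in $T$ by a non-copy step; analogous facts hold for the iteration judgment. For extraction: $\wext{h}{T}{h'}$ and $\wexts{h}{\Ts}{h'}$ are functional in $h'$, and if $\dom(h) \cap \Wr(T) = \emptyset$ then $\dom(h') = \dom(h) \uplus \Wr(T)$ and $h'$ agrees with $h$ on $\dom(h)$.

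The generalized claim (``Lemma B'') is then: if $\treval{\sigma}{l}{e}{\sigma'}{T}$, $\wext{h}{T}{h'}$, and $\dom(h) = \dom(\sigma)$, then for every $l' \in \Wr(T)$ with maximal backward copy-chain rooted at $l_0$ we have $h'(l') = h(l_0)$ when $l_0 \in \dom(\sigma)$ and $h'(l') = \bot$ when $l_0 \in \Wr(T)$; an analogous statement holds for $\wexts$ and the iteration judgment. Granting this, the Proposition follows by taking $h := \id_\sigma$. For $l' \in \dom(\sigma)$ we get $h'(l') = \id_\sigma(l') = l' \neq \bot$, witnessed by the empty chain from $l'$ to $l'$. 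For $l' \in \Wr(T) = \dom(\sigma') \setminus \dom(\sigma)$, Lemma B gives either $h'(l') = \id_\sigma(l_0) = l_0 \neq \bot$ together with the maximal backward chain from $l_0$ to $l'$ sitting inside $T$ --- and since the final copy to $l'$, hence the whole chain, is uniquely determined, this is the only chain of copies reaching $l'$, which justifies reading the statement as ``chain from $h'(l')$ to $l'$'' --- or else $h'(l') = \bot$ with $l_0 \in \Wr(T)$, in which case every label on the maximal backward chain lies in $\Wr(T)$, so no chain of copies from a label of $\dom(\sigma)$ can reach $l'$, again matching $h'(l') = \bot$.

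For Lemma B I would do case analysis on the last rule of the evaluation. Assignment and projection are immediate: a copy $l \gets l''$ or $l \gets \trproj{i}{l_1}{l''}$ sets $h'(l) = h(l'')$, with maximal backward chain $l \leftarrow l''$ and $l'' \in \dom(\sigma)$ by the structural invariants; a non-copy term $l \gets t$ sets $h'(l) = \where(t,h) = \bot$, with trivial chain rooted at $l \in \Wr(T)$. The $\kcond$ rule reduces verbatim to its subtrace, since $\wext$ and $\Wr(\cdot)$ ignore the wrapper and the subtrace runs in the same store. The $\kcomp$ and $\ksum$ rules set $h'(l) = \bot$ at the fresh result $l$ --- whose backward chain is trivial and rooted at $l \in \Wr(T)$ --- and delegate everything else to the iteration-judgment version of Lemma B, itself proved by a nested induction: its singleton case is Lemma B, and its $\oplus$ case uses orthogonality of the merged stores and annotations (a copy step in one branch can reference only labels of $\dom(\sigma)$ or of that branch's own writes), so copy-chains in the two branches never interfere.

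The only case requiring real care is $\trlet{T_1}{T_2}$ (and, for the same reason, the $\oplus$ case of the iteration version): a maximal backward copy-chain ending at some $l' \in \Wr(T_2)$ can leave $T_2$ and continue inside $T_1$. Here I would first apply the induction hypothesis to $T_2$ --- legitimate, since $T_2$ runs in $\sigma_1$ with $\dom(\sigma_1) = \dom(\sigma) \uplus \Wr(T_1) = \dom(h')$ for the intermediate annotation $h'$ --- to rewrite the final value at $l'$ as $h'(m)$, where $m$ roots $l'$'s maximal backward chain \emph{within $T_2$}; then I split on whether $m \in \dom(\sigma)$ (then $h'(m) = h(m)$ and $m$ also roots the chain of $l'$ in the whole trace), $m \in \Wr(T_2)$ (then $m$ is a non-copy assignment and $h'(m) = \bot$), or $m \in \Wr(T_1)$, in which case $m$ is a candidate copy target of $T_1$, so I apply the induction hypothesis to $T_1$ to resolve $h'(m)$ and invoke the structural invariants to check that concatenating $l'$'s chain within $T_2$ with $m$'s chain within $T_1$ yields exactly $l'$'s maximal backward chain in $\trlet{T_1}{T_2}$. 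I expect this splicing of copy-chains across sequential composition --- together with the bookkeeping about which labels a later subtrace may reference --- to be the main, and essentially the only, obstacle; the remaining cases are mechanical.
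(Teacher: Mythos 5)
The paper never actually proves this proposition---it is stated bare, and even the extended (tr) material contains no argument for it---so there is nothing official to compare against; judged on its own, your proof is correct and supplies exactly the missing argument. The two load-bearing ideas are right: generalizing from $\id_\sigma$ to an arbitrary initial annotation $h$ with $\dom(h)=\dom(\sigma)$ is forced by the $\trlet{T_1}{T_2}$ rule (extraction threads the intermediate annotation through $T_2$), and phrasing the invariant in terms of the unique maximal backward copy-chain, then splicing chains across sequential composition (and across the $\oplus$ case of the iteration judgment, using orthogonality of the merged stores), is precisely what makes the induction go through; the cond, comp and sum cases are indeed routine since extraction ignores the wrappers and stamps the comprehension/sum result with $\bot$. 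Two small caveats, neither fatal: first, your ``structural invariants'' such as $l\notin\dom(\sigma)$ and $\dom(\sigma')=\dom(\sigma)\uplus\Wr(T)$ are \emph{not} derivable by induction on the bare judgment $\treval{\sigma}{l}{e}{\sigma'}{T}$, because the destination label is supplied externally and nothing in the rules forbids $l\in\dom(\sigma)$; they must be taken as standing freshness assumptions on the destination and store (exactly the discipline the paper imposes elsewhere via $\matchavoids{\sigma}{\Psi}{S\cup\{l\}}$ in the Traceability theorem), so you should state them as hypotheses rather than claim them as lemmas. Second, your reading of the statement for input labels $l'\in\dom(\sigma)$ requires admitting the empty (length-zero) chain of copies, since no copy ever targets an input label; this is the only sensible reading of the paper's definition and you adopt it explicitly, which is fine, but it is worth flagging as an interpretive choice rather than a consequence of the definition.
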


Moreover, where-provenance can easily be extracted from a trace for a
single input or output label rather than for all of the labels
simultaneously, simply by traversing the trace.  Though this takes
time $O(|T|)$ in the worst case, we could do much better if the traces
are represented as graphs rather than as syntax trees.  

\subsection{Dependency provenance}

We next consider extracting the \emph{dependency provenance}
introduced in our previous work~\cite{DBLP:conf/dbpl/CheneyAA07}.
Dependency provenance is motivated by the concepts of dependency that
underlie program slicing~\cite{venkatesh91pldi} and noninterference
in information flow security, as formalized, for instance, in the
Dependency Core Calculus~\cite{abadi99popl}.  We consider NRC values
annotated with sets of tokens and define an annotation-propagating
semantics.

Dependency provenance annotations are viewed as correct when they link
each part of the input to all parts of the output that \emph{may}
change if the input part is changed.  This is similar to
non-interference.  The resulting links can be used to ``slice'' the
input with respect to the output and vice versa.
\citet{DBLP:conf/dbpl/CheneyAA07} established that, as with minimal
program slices, minimal dependency provenance is not computable, but
gave dynamic and static approximations.  Here, we will show how to
extract the dynamic approximation from traces.

Dependency provenance can be modeled using values $v \in \annot{\Val}{\Pow{A}}$
annotated with sets of tokens from $A$.  We introduce an auxiliary
function $dep(t,h)$ for calculating the dependences of basic terms $t$
relative to annotation functions $h : \Lab \to \Pow{A}$.
\begin{eqnarray*}
  \dep(i,h) = \dep(b,h) =\dep(\emptyset,h)&=& \emptyset\\
  \dep(\{l\},h) = \dep(\nott l,h)=     \dep(l,h) &=& h(l)\\
   \dep(\kempty(l),h) &=& h(l)\\
  \dep(l_1+l_2,h) =\dep(l_1\eq l_2,h) &=& h(l_1) \cup h(l_2)\\
  \dep(l_1\andd l_2,h) =\dep((l_1,l_2),h) &=& h(l_1) \cup h(l_2)\\
  \dep(l_1\cup l_2,h) &=& h(l_1) \cup h(l_2)
\end{eqnarray*}
Essentially, $\dep$ simply takes the union of the annotations of all
labels mentioned in a term.  

  \begin{tr}
    \citet{DBLP:conf/dbpl/CheneyAA07} defined dynamic
    provenance-tracking denotationally as a function $\DSB{e}$ mapping
    contexts $\gamma : \Var \to \annot{\Val}{\Pow{A}}$ to $\Pow{A}$-annotated
    values.  We present this definition in \refFig{dep-denot}.  Note
    that we use an auxiliary notation $v^{+a}$ to indicate adding an
    annotation to the toplevel of a $\Pow{A}$-annotated value.  That
    is, $(w^b)^{+a} = w^{b \cup a}$.

    \begin{figure*}[p]
      
      \begin{eqnarray*}
        \mflatten^D(\{w_1^{a_1}:m_1\ldots,w_n^{a_n}:m_n\})^a &=& (\mflatten(\{w_1:m_1,\ldots,w_n:m_n\}))^{a \cup a_1 \cup \cdots \cup a_n}\\
        \summ^D(\{w_1^{a_1}:m_1\ldots,w_n^{a_n}:m_n\})^a &=& (\summ(\{w_1:m_1,\ldots,w_n:m_n\}))^{a \cup a_1 \cup \cdots \cup a_n}
      \end{eqnarray*}
      \[\begin{array}{rclcrcl}
        \DSB{x}\gamma &=& \gamma(x)\\
        \DSB{\letin{x = e_1}{e_2}} &=& \DSB{e_2}\gamma[x := \DSB{e_1}\gamma]\\
        \DSB{i} \gamma &=& i^\emptyset\\
        \DSB{e_1+e_2} \gamma &=& \DSB{e_1}\gamma +^D \DSB{e_2}\gamma && w_1^{a_1} +^D w_2^{a_2} &=& (w_1 + w_2)^{a_1\cup a_2}\\
        \DSB{\summ \{e \mid x \in e_0\}}\gamma &=& \summ^D\{\DSB{e}\gamma[x\mapsto v] \mid v \in \DSB{e_0}\gamma\}
\\
        \DSB{b}\gamma &=& b^\emptyset\\
        \DSB{\nott e}\gamma &=& \nott^D\DSB{e}\gamma
&&
\nott^D(w^a) &=& (\nott w)^a\\
        \DSB{e_1 \andd e_2}\gamma &=& \DSB{e_1}\gamma \andd^D \DSB{e_2}\gamma
&&
w_1^{a_1} \andd^D w_2^{a_2} &=& (w_1 \andd w_2)^{a_1\cup a_2}\\
        \DSB{(e_1,e_2)}\gamma &=& (\DSB{e_1}\gamma ,\DSB{e_2}\gamma )^\emptyset\\
        \DSB{\pi_i(e)}\gamma &=& \pi_i(\getv{\DSB{e}\gamma})^{+\geta{\DSB{e}\gamma}}\\
        \DSB{\emptyset}\gamma &=& \emptyset^\emptyset\\
        \DSB{\setof{e}}\gamma &=& \{\DSB{e}\gamma\}^{\emptyset}\\
        \DSB{e_1 \cup e_2}\gamma &=& \DSB{e_1}\gamma\cup^D \DSB{e_2}\gamma
&&
 w_1^{a_1} \cup^D w_2^{a_2} &=& (w_1 \cup w_2)^{a_1\cup a_2}\\
        \DSB{\flatten\{e \mid x \in e_0\}}\gamma &=& \mflatten^D\{\DSB{e}\gamma[x\mapsto v] \mid v \in \DSB{e_0}\gamma\}
\\
        \DSB{\ifthenelse{e_0}{e_1}{e_2}}\gamma &=& \left\{ 
          \begin{array}{ll}
            \DSB{e_1}\gamma^{+\geta{\DSB{e_0}\gamma}} & \text{if $\SB{e_0}{\gamma} = \ktrue$}\\
            \DSB{e_2}\gamma^{+\geta{\DSB{e_0}\gamma}} & \text{if $\SB{e_0}{\gamma} = \kfalse$}
          \end{array}\right.\\
        \DSB{e_1 \eq e_2}\gamma &=& \DSB{e_1}\gamma\eq^D \DSB{e_2}\gamma
        &&
        w_1^{a_1} \eq^D w_2^{a_2} &=& (w_1 \eq w_2)^{a_1\cup a_2}\\
        \DSB{\kempty(e)}\gamma &=&\kempty^D(\DSB{e}\gamma)
        &&
        \kempty^D(w^a) &=& (\kempty(w))^a
      \end{array}\]
      
      \caption{Dependency-provenance, denotationally}
      \label{fig:dep-denot}
    \end{figure*}

    Next we introduce an operational version.  We define judgments
    $\heval[D]{\sigma}{h}{l}{e}{\sigma'}{h'}$ for expression
    evaluation and
    $\hevals[D]{\sigma}{h}{x}{L}{e}{\sigma'}{h'}{\annot{L'}{a}}$ for
    comprehension evaluation, both with dependency-provenance propagation.
    Note that the iteration rules maintain an annotation set $a$
    collecting the top-level annotations of the elements of $L'$.

    It is straightforward to prove by induction that:
    \begin{theorem}
      ~
      \begin{enumerate}
      \item Suppose $\wf{\Gamma}{e}{\tau}$ and
        $\wfstorectx{\Psi}{\sigma}{\gamma}{\Gamma}$.  Then
        $\heval[D]{\sigma}{h}{l}{e}{\sigma'}{h'}$ if and only if
        $\DSB{e} (\annot{\sigma}{h} \uparrow^{\Pow{A}}_\Gamma \gamma)
        = \annot{\sigma'}{h'} \uparrow^{\Pow{A}}_\tau l$.
      \item Suppose $\wf{\Gamma,x:\tau}{e}{\{\tau'\}}$ and
        $\wfstorectx{\Psi}{\sigma}{\gamma}{\Gamma}$.  Then
        $\hevals[D]{\sigma}{h}{x}{L}{e}{\sigma'}{h'}{\annot{L'}{a}}$
        if and only if $ \{\DSB{e}\gamma[x := v] \mid v \in
        \annot{\sigma}{h}\uparrow^{\Pow{A}}_{\{\tau\}} L\} =
        \annot{\sigma'}{h'}\uparrow^{\Pow{A}}_{\{\tau'\}} L' $ and $a
        = \cup \{\sigma(l') \mid l' \in L'\}$.
        \end{enumerate}
    \end{theorem}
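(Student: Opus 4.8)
The plan is to prove both parts together by simultaneous induction on the two $D$-annotated evaluation derivations, following the template of the correctness proof for the plain operational semantics and its where-provenance refinement. The mutual recursion is genuine: the $\flatten$ and $\summ$ cases of part~1 appeal to the iteration judgment of part~2, and the singleton case of part~2 appeals back to part~1, so both judgments must be carried in the induction hypothesis. Before starting I would isolate two routine ingredients. The first is a \emph{frame} property for annotated extraction: writing a fresh cell and its annotation does not change the annotated value extracted at any label already present in the store (this uses acyclicity of well-formed stores, so that a fresh label is genuinely unreachable from the labels of interest), and extraction through an orthogonal merge $\smerge_{\sigma}$ / $\uplus_h$ agrees, on the labels owned by each summand, with extraction through that summand alone. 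The second is a term lemma, $\DSB{\gamma(t)}(\annot{\sigma}{h}\uparrow^{\Pow{A}}_\Gamma\gamma) = (\op(\gamma(t),\sigma))^{\dep(\gamma(t),h)}$, which is a one-line case analysis matching each clause of $\dep$ against the corresponding $+^D$, $\andd^D$, $\nott^D$, $\cup^D$, $\kempty^D$ clause and the constant/$\emptyset$ clauses of \refFig{dep-denot}.

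With those in hand the induction is clause-by-clause bookkeeping. The term case is the term lemma plus the first frame property; the $\klet$ case composes the two hypotheses, the frame property guaranteeing that the write at the fresh intermediate label does not perturb extraction at $l$. Projection and conditional are the only cases where the $\cdot^{+a}$ operator of \refFig{dep-denot} is exercised: denotationally $\pi_i$ adds $\geta{\DSB{e}\gamma}$, the annotation of the pair cell itself, to the top level of the $i$-th component, which is exactly what the $D$-annotated projection rule records (the analogue of the $\pi_i$ rule of \refFig{where-op}, but setting $h(l){:=}h(l_i)\cup h(l')$); and denotationally the conditional adds $\geta{\DSB{e_0}\gamma}$ to the chosen branch, matching the operational rule that folds the annotation of the test label into the evaluation of $e_b$. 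For $\flatten$ and $\summ$, $\mflatten^D$ and $\summ^D$ take the union of the collection's own annotation with all element top-level annotations, which is precisely the set $a$ returned by part~2, so these cases reduce to part~2 together with the definitions of $\mflatten \sigma'[L']$ and $\summ \sigma'[L']$. Part~2 is then immediate for the empty case, is part~1 applied to $e[l/x]$ (with multiplicity bookkeeping) for the singleton case, and for $L_1 \oplus L_2$ splits along the orthogonal merge using the second frame property while unioning the two accumulated annotation sets. In each case the forward implication is read straight off these equalities; the converses follow, as in the where-provenance case, from the forward implications together with totality of the (decorated) operational semantics, since the $D$-annotated rules merely decorate the plain rules and the denotational semantics is a function.

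I expect the only real work to be stating the frame lemmas precisely under the destination-passing discipline: for the first, pinning down exactly which labels a ``fresh'' label is unreachable from (and why store well-formedness gives this), and for the second, checking that the split of the input multiset $L$ in the iteration judgment induces a compatible split of the freshly allocated output labels, so that $h_1 \uplus_h h_2$ is well-defined and the two sub-extractions genuinely coincide with the merged one. Everything else is the clause-matching between \refFig{dep-denot} and the $D$-annotated operational rules that the word ``straightforward'' in the statement is pointing at.
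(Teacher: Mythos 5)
Your proposal is correct and follows essentially the same route as the paper, which offers no proof beyond the remark that the equivalence ``is straightforward to prove by induction'': your simultaneous induction on the two $D$-annotated evaluation judgments, with the auxiliary term lemma relating $\dep$ to the $+^D,\andd^D,\cup^D,\ldots$ clauses of the denotational semantics and the frame/orthogonal-merge lemmas for annotated extraction, is exactly that intended argument spelled out. The only looseness (reading the converse off totality plus functionality of $\DSB{\cdot}$, despite the operational semantics' freedom in choosing fresh labels) is inherited from the paper's own statement and matches how the companion where-provenance theorem is treated there.
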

\end{tr}
\begin{sub}
  In \refFig{dep-op} we define an equivalent operational formulation
  of dependency provenance, which we relate to the denotational
  presentation in \cite{tr}.  We define judgments
  $\heval[D]{\sigma}{h}{l}{e}{\sigma'}{h'}$ for expression evaluation
  and $\hevals[D]{\sigma}{h}{x}{L}{e}{\sigma'}{h'}{L'}$ for
  comprehension evaluation, both with where-provenance propagation.

\end{sub}
\begin{figure}[tb]
  \small
  \[
  \begin{array}{c}
    \infer{\heval[D]{\sigma}{h}{l}{t}{\sigma[l:=t]}{h[l:=\dep(t,h)]}}{}
    \smallskip\\
    \infer{\heval[D]{\sigma}{h}{l}{\letin{x=e_1}{e_2}}{\sigma''}{h''}}{
      \heval[D]{\sigma}{h}{l'}{e_1}{\sigma'}{h'} &
      \heval[D]{\sigma'}{h'}{l}{e_2[l'/x]}{\sigma''}{h''} & 
      l' \fresh
    }
    \smallskip\\
    \infer{\heval[D]{\sigma}{h}{l}{\pi_i(l')}{\sigma[l:=\sigma(l_i)]}{h[l:=h(l_i) \cup h(l')]}}{\sigma(l') = (l_1,l_2)}
    \smallskip\\
    \infer{\heval[D]{\sigma}{h}{l}{\ifthenelse{l'}{e_{\ktrue}}{e_{\kfalse}}}{\sigma'}{h'[l:= h'(l) \cup h'(l')]}}{
      \sigma(l') = b & \heval[D]{\sigma}{h}{l}{e_b}{\sigma'}{h'}}
    \smallskip\\
    \infer{\heval[D]{\sigma}{h}{l}{\flatten \{e \mid x \in l'\}}{\sigma'[l:=\mflatten \sigma'[L']]}{h'[l:=h'(l') \cup a]}}{\hevals[D]{\sigma}{h}{x}{\sigma(l)}{e}{\sigma'}{h'}{\annot{L'}{a}}}
    \smallskip\\
    \infer{\heval[D]{\sigma}{h}{l}{\sum \{e \mid x \in l'\}}{\sigma'[l:=\sum \sigma'[L']]}{h'[l:=h'(l') \cup a]}}{\hevals[D]{\sigma}{h}{x}{\sigma(l)}{e}{\sigma'}{h'}{\annot{L'}{a}}}
    \smallskip\\
    \infer{\hevals[D]{\sigma}{h}{x}{\emptyset}{e}{\sigma}{h}{\annot{\emptyset}{\emptyset}}}{}
    \smallskip\\
    \infer{\hevals[D]{\sigma}{h}{x}{L_1 \oplus L_2}{e}{\sigma_1 \uplus_\sigma \sigma_2}{h_1 \uplus _h h_2}{\annot{(L_1' \oplus L_2')}{a_1 \cup a_2}}}{\hevals[D]{\sigma}{h}{x}{L_1}{e}{\sigma_1}{h_1}{\annot{L_1'}{a_1}} & \hevals[D]{\sigma}{h}{x}{L_1}{e}{\sigma_2}{h_2}{\annot{L_2'}{a_2}}}
    \smallskip\\
    \infer{\hevals[D]{\sigma}{h}{x}{\{l:m\}}{e}{\sigma'}{h'}{\annot{\{l':m\}}{h'(l')}}}{\heval[D]{\sigma}{h}{l'}{e[l/x]}{\sigma'}{h'} & l' \fresh}
  \end{array}
  \]
  \caption{Dependency-provenance, operationally}
  \label{fig:dep-op}
\end{figure}

We define the dependency-provenance extraction judgments
$\dext{h}{T}{h'}$ and $\dexts{h}{\Ts}{h'}$ in
\refFig{dep-from-traces}.  As usual, we have two judgments, one for
traversing traces and another for traversing trace sets.
\begin{sub}
  \begin{theorem}
    Suppose $\treval{\sigma}{l}{e}{\sigma'}{T}$ and $h : \dom(\sigma)
    \to \Pow{A}$.  Then $\heval[D]{\sigma}{h}{l}{e}{\sigma'}{h'}$
    holds if and only if $\dext{h}{T}{h'}$ holds.
  \end{theorem}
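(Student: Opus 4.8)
The plan is to prove this by induction on the derivation of the traced evaluation judgment $\treval{\sigma}{l}{e}{\sigma'}{T}$, strengthened with a mutual claim for the iteration judgment: if $\trevals{\sigma}{x}{L}{e}{\sigma'}{L'}{\Ts}$, then $\hevals[D]{\sigma}{h}{x}{L}{e}{\sigma'}{h'}{\annot{L'}{a}}$ holds if and only if $\dexts{h}{\Ts}{h'}$ holds, and moreover in that case the collected annotation set satisfies $a = \bigcup\{h'(l') \mid l' \in L'\}$. This last equation is what makes the comprehension and sum cases of the main statement go through: $\hevals[D]$ threads $a$ through the iteration explicitly, whereas $\dexts$ (\refFig{dep-from-traces}) must recover it from the shape of $\Ts$, so the two sides can only be matched if we know what $a$ is.

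The first observation is that, since $\treval{\sigma}{l}{e}{\sigma'}{T}$ is assumed, the trace $T$ and (for conditionals) the branch actually taken are determined by $e$ and $\sigma$, and each rule of \refFig{trace-semantics} is mirrored by exactly one rule of \refFig{dep-op} (selected by the syntactic form of $e$) and exactly one rule of \refFig{dep-from-traces} (selected by the syntactic form of $T$). Because both $\heval[D]{\sigma}{h}{l}{e}{-}{-}$ and $\dext{h}{T}{-}$ are deterministic and their ``domains'' of applicability line up rule by rule, the ``if and only if'' collapses to the claim that, in each case, the two rules perform the \emph{same} update to $h$. The content of each inductive case is therefore just a comparison of those updates.

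Next I would run through the cases. For a term $t$ both rules yield $h[l := \dep(t,h)]$, so there is nothing to check. For $\letin{x = e_1}{e_2}$, I apply the induction hypothesis to the subderivation for $e_1$ at the fresh label $l'$ and then to the subderivation for $e_2[l'/x]$, and compose the two equivalences; the intermediate store used by $\heval[D]$ is the one recorded by $T_1$ because the operational components of the two semantics coincide. For $\pi_i(l')$ with $\sigma(l') = (l_1,l_2)$, the evaluation rule gives $h[l := h(l_i) \cup h(l')]$, the trace is $l \gets \trproj{i}{l'}{l_i}$, and its extraction rule performs the same update. For $\ifthenelse{l'}{e_\ktrue}{e_\kfalse}$ with $\sigma(l') = b$, the trace is $\trcond[l]{l'}{b}{T'}$ recording precisely that branch; the induction hypothesis on the subderivation for $e_b$ produces an intermediate $h'$, and both rules finish with $h'[l := h'(l) \cup h'(l')]$. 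For $\flatten\{e \mid x \in l'\}$ and $\summ\{e \mid x \in l'\}$, I invoke the mutual induction hypothesis for iteration, obtaining the equivalence together with $a = \bigcup\{h'(l'') \mid l'' \in L'\}$, and then both rules update by $l := h'(l') \cup a$. The iteration cases themselves are: the empty case is trivial ($h' = h$, $a = \emptyset$); the singleton $\{l:m\}$ uses the expression-level hypothesis and reads off $a = h'(l'')$ for the single result label; and the $L_1 \oplus L_2$ case applies the hypothesis to each half and combines via $\uplus_h$, with the two halves of $a$ adding up as required.

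The part I expect to be most delicate is exactly this bookkeeping in the iteration rules: checking that the $\oplus$-split of $L$ forced by traced evaluation induces the matching $\oplus$-split of $\Ts$; that $h_1 \uplus_h h_2$ is well-defined and is the \emph{same} merge on the semantics side and the extraction side; and that the auxiliary set $a$ is genuinely $\bigcup\{h'(l') \mid l' \in L'\}$ rather than something referring to intermediate labels. All of this rests on the disjointness and orthogonality properties already established for the operational semantics (the sub-runs write to disjoint fresh labels), but they must be invoked with care. Everything outside the iteration cases is a mechanical rule-by-rule match.
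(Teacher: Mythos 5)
Your proposal is correct and matches the paper's intended argument: the paper states this result (in its two-part form, with the iteration judgment carrying the collected annotation $a$) as a straightforward mutual induction on the traced-evaluation derivations, which is exactly the rule-by-rule induction you carry out, including the strengthening $a = \bigcup\{h'(l') \mid l' \in L'\}$ needed for the comprehension and sum cases. The paper gives no further proof details, so there is nothing it does that you have missed.
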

\end{sub}
\begin{tr}
  \begin{theorem}
    \begin{enumerate}
    \item Suppose $\treval{\sigma}{l}{e}{\sigma'}{T}$ and $h :
      \dom(\sigma) \to \Pow{A}$.  Then
      $\heval[D]{\sigma}{h}{l}{e}{\sigma'}{h'}$ holds if and only if
      $\dext{h}{T}{h'}$ holds.
    \item If $\trevals{\sigma}{x}{L}{e}{\sigma'}{L'}{\Ts}$ and $h :
      \dom(\sigma) \to \Pow{A}$ then
      $\hevals[D]{\sigma}{h}{x}{L}{e}{\sigma'}{h'}{\annot{L'}{a}}$
      holds if and only if $\dexts{h}{\Ts}{\annot{h'}{a}}$ holds.

    \end{enumerate}

  \end{theorem}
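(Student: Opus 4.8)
The plan is to prove both parts simultaneously by induction on the derivation of the traced-evaluation judgment ($\treval{\sigma}{l}{e}{\sigma'}{T}$ for part~1, $\trevals{\sigma}{x}{L}{e}{\sigma'}{L'}{\Ts}$ for part~2), with a case analysis on the last rule applied. Each rule of traced evaluation (\refFig{trace-semantics}) determines uniquely the shape of the trace it emits --- $l\gets t$, $\trlet{T_1}{T_2}$, $\trconde[l]{l'}{b}{T}{e_1}{e_2}$, $l\gets\trproj{i}{l'}{l''}$, $l\gets\trcompe{l'}{\Ts}{x.e}$, or $l\gets\trsume{l'}{\Ts}{x.e}$ --- and for each such shape there is exactly one applicable rule in \refFig{dep-from-traces} and one applicable rule in \refFig{dep-op}. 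So in each case the proof reduces to checking that, given the sub-derivations supplied by the traced-evaluation rule together with the induction hypotheses applied to them, the output annotation function $h'$ (and, in part~2, the collected annotation set $a$) computed by the operational dependency semantics coincides with the one computed by the extraction judgment. Parts~1 and~2 call each other (the comprehension/sum case of part~1 appeals to part~2 for the inner iteration, and the singleton-iteration case of part~2 appeals to part~1 for the body), so this is genuinely a mutual induction.

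The individual cases are routine bookkeeping. For a term $t$, both $\heval[D]{\sigma}{h}{l}{t}{\sigma[l{:=}\op(t,\sigma)]}{h[l{:=}\dep(t,h)]}$ and $\dext{h}{l\gets t}{h[l{:=}\dep(t,h)]}$ use the same function $\dep(t,h)$; for a projection both rules perform $h[l{:=}h(l_i)\cup h(l')]$. For $\trlet{T_1}{T_2}$ the two halves of the derivation feed the two induction hypotheses and the contexts are threaded identically. For the conditional, the traced-evaluation derivation has already recorded the branch $b$ and justified $\sigma(l')=b$, so the extraction rule's use of the recorded $b$ agrees with the operational rule's use of $\sigma(l')$, the subtrace is handled by the induction hypothesis, and both judgments finish with the update $h'[l{:=}h'(l)\cup h'(l')]$. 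For comprehension and sum, part~2 supplies the inner correspondence (including the collected set $a$), and then both the operational rule and the extraction rule perform $h'[l{:=}h'(l')\cup a]$. In part~2, the empty case yields $\annot{\emptyset}{\emptyset}$ on both sides, the singleton case invokes part~1 on $e[l/x]$ and sets the collected annotation to $h'(l')$ on both sides, and the disjoint-union case combines the two sub-results by domain-disjoint union on stores and on annotation functions and by $\cup$ on annotation sets, in exactly the same way in both judgments.

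The one point that requires care --- and the main obstacle, such as it is --- is the non-determinism present in the operational dependency semantics but not in the extraction judgment's traversal of a fixed trace: the choice of fresh labels in the $\klet$, comprehension, and singleton-iteration rules, and the choice of how a collection is split by $\oplus$ in the iteration rules. Here one uses that the theorem pins the output store $\sigma'$ (and the result-label multiset $L'$ and trace set $\Ts$) to those produced by the given traced-evaluation derivation: any dependency derivation ending in that same output store is forced to use exactly the fresh labels recorded in $T$, and since store merging $\smerge_\sigma$ is just domain-disjoint union of the freshly written parts, both the operational judgment and the extraction judgment are invariant under the chosen split of a collection. I would isolate this as a small auxiliary observation; once the fresh-label and splitting choices are thus determined, each rule computes $h'$ (and $a$) deterministically from its premises, so both directions of the ``if and only if'' follow uniformly from the same case analysis.
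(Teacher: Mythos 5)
You are right that the paper contains no proof of this theorem at all: it appears in the technical-report material with no argument attached, evidently regarded as a routine induction, and your mutual structural induction on the traced-evaluation derivation --- matching, constructor by constructor, the $\dep$-update performed by the operational dependency rule against the one performed by the extraction rule, with part~(1) and part~(2) invoking each other in the comprehension/sum and singleton-iteration cases --- is exactly the argument the authors intend. The only bookkeeping you leave tacit is that the singleton-iteration case needs $\result(T)=l'$ (\refLem{result-tech}) to see that the collected annotation $h'(\result(T))$ of the extraction rule is the same set as the $h'(l')$ of the operational rule; and you silently correct a couple of evident typos in the paper's figures (the conditional extraction rule should update $l$, not $l'$), which is fine.

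The one step that does not hold as stated is the auxiliary observation you isolate for the left-to-right direction: a dependency derivation is \emph{not} forced to reuse the fresh labels recorded in $T$ merely because it ends in the same $\sigma'$ and $L'$. For instance, iterate a body such as $\klet~y = (x \eq x)~\kin~\ifthenelse{y}{1}{0}$ over a two-element collection $\{l_1,l_2\}$: both iterations write the same constructors ($\ktrue$ and $1$), so exchanging the fresh labels chosen for the two iterations yields the same final store and the same result multiset $L'$, yet the resulting $h'$ assigns $h(l_1)$ and $h(l_2)$ to swapped result labels, whereas extraction from the fixed $T$ produces only one of the two annotation functions. So, read completely literally, the ``only if'' direction of the theorem fails in this corner case; the imprecision sits in the theorem statement itself (and equally in its where-provenance and semiring analogues), because the dependency judgment is not determined by $(\sigma,h,e,\sigma',L')$ alone. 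The standard repair, under which your case analysis goes through verbatim, is either to prove that direction by induction on the dependency derivation under the explicit convention that it makes the same fresh-label and collection-splitting choices as the traced run recorded in $T$ (equivalently, to state the equivalence up to renaming of the labels written by $T$), or to weaken the statement to assert existence and uniqueness of an $h'$ computed by extraction and realized by some dependency derivation with output $\sigma'$. Make that convention explicit, and your proposal is a complete proof.
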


\end{tr}
\begin{figure}
  \begin{eqnarray*}
    A_1 &=& \{r,s,r_1,r_2,r_3,s_1,s_2,s_3,r_{13},s_{11},r_{23},s_{21},r_{33},s_{31}\}
\\
A_2 &=& \{r,s,r_1,r_2,r_3,s_1,s_2,s_3,r_{12},r_{22},r_{32}\}
\\
A_3 &=& \{s_{11},s_{12},s_{21},s_{22},s_{31}\}
\end{eqnarray*}
\begin{tikzpicture}
  \node at (0,0) {A};
  \node at (1,0) {B};
  \node at (2,0) {D};
  
  \node (atom11) at (0,-0.5) [inner sep = 0pt] {$1$};
  \node (atom12) at (1,-0.5) [inner sep = 0pt] {$2$};
  \node (atom13) at (2,-0.5) [inner sep = 0pt] {$7$};
  \node (atom21) at (0,-1) [inner sep = 0pt] {$1$};
  \node (atom22) at (1,-1) [inner sep = 0pt] {$3$};
  \node (atom23) at (2,-1) [inner sep = 0pt] {$7$};

  \node (table) at (-0.3,0.3) [anchor=north west,rectangle,draw,minimum width = 3cm, minimum height
  = 1.6cm] {};
  
  \node (caption) at ([yshift=-0.4cm]table.south) {Output table $Q_1(A,B,D)$};

  \node (row1) at (-0.2,-0.3) [help lines,anchor=north west,rectangle,draw,minimum width = 2.8cm, minimum height
  = 0.4cm] {};
  \node (row2) at (-0.2,-0.8) [help lines,anchor=north west,rectangle,draw,minimum width = 2.8cm, minimum height
  = 0.4cm] {};

  \node (ctab) at ([yshift=+5mm]table.north) {\small $A_1$};
  \node (crow1) at ([xshift=-6mm]row1.west) {\small $\emptyset$};
  \node (crow2) at ([xshift=-6mm]row2.west) {\small $\emptyset$};
  \node (catom11) at (0.4,-0.45) {\small $r_{11}$};
  \node (catom12) at (1.4,-0.45) {\small $r_{12}$};
  \node (catom13) at (2.4,-0.45) {\small  $s_{22}$};
  \node (catom21) at (0.4,-0.95) {\small  $r_{21}$};
  \node (catom22) at (1.4,-0.95) {\small  $r_{22}$};
  \node (catom23) at (2.4,-0.95) {\small  $s_{22}$};

  \draw (table.north) -- (ctab.south);
  \draw (row1.west) -- (crow1);
  \draw (row2.west) -- (crow2);

  \node at (4,0) {C};
  \node at (5,0) {D};
  
  \node (atom1) at (4,-0.5) [inner sep = 0pt] {$42$};
  \node (atom2) at (5,-0.5) [inner sep = 0pt] {$7$};
  \node (atom3) at (4,-1) [inner sep = 0pt] {$42$};
  \node (atom4) at (5,-1) [inner sep = 0pt] {$7$};

  \node (table) at (3.7,0.3) [anchor=north west,rectangle,draw,minimum width = 2cm, minimum height
  = 1.6cm] {};
  
  \node (caption) at ([yshift=-0.4cm]table.south) {Output table $Q_2(C,D)$};

  \node (row1) at (3.8,-0.3) [help lines,anchor=north west,rectangle,draw,minimum width = 1.8cm, minimum height
  = 0.4cm] {};
  \node (row2) at (3.8,-0.8) [help lines,anchor=north west,rectangle,draw,minimum width = 1.8cm, minimum height
  = 0.4cm] {};

  \node (ctab) at ([yshift=+5mm]table.north) {\small $A_2$};
  \node (crow1) at ([xshift=-6mm]row1.west) {\small $\emptyset$};
  \node (crow2) at ([xshift=-6mm]row2.west) {\small $\emptyset$};
  \node (catom1) at (4.4,-0.45) {\small $\emptyset$};
  \node (catom2) at (5.4,-0.45) {\small $A_3$};
  \node (catom3) at (4.4,-0.95) {\small $r_{32}$};
  \node (catom4) at (5.4,-0.95) {\small $r_{31}$};

  \draw (table.north) -- (ctab.south);
  \draw (row1.west) -- (crow1);
  \draw (row2.west) -- (crow2);
\end{tikzpicture}%
\caption{Dependency provenance extraction examples}
\labelFig{dep-ext-examples}
\end{figure}

\begin{example}
  \refFig{dep-ext-examples} shows the results of dependency provenance
  extraction for \refExs{eg1}{eg2}.  The dependency-provenance is
  similar to the where-provenance for several fields such as $l_{11}$.
  The rows $l_1',l_2'$ have no (immediate) dependences.  The top-level
  labels $l,l'$ depend on many parts of the input --- essentially on
  all parts at which changes could lead to global changes to the
  output table.
\end{example}
\begin{figure}
  
  \[\small
  \begin{array}{c}
   \infer{\dext{h}{l \gets t}{h[l:= \dep(t,h)]}}{}
\quad
    \infer{\dext{h}{T_1;T_2}{h''}}{\dext{h}{T_1}{h'} & \dext{h'}{T_2}{h''}}
    \smallskip\\
    \infer{\dext{h}{l \gets \trproj{i}{l'}{l_i}}{h[l:=  h(l') \cup h(l_i)]}}{}
    \smallskip\\
    \infer{\dext{h}{\trcond[l]{l'}{b}{T}}{h'[l' := h'(l') \cup h'(l)]}}{\dext{h}{T}{h'} }
    \smallskip\\
\infer{\dext{h}{l \gets \trcomp{l}{\Ts}}{h'[l:=h'(l') \cup a]}}{\dexts{h}{\Ts}{\annot{h'}{a}}}
    \smallskip\\
\infer{\dext{h}{l \gets \trsum{l}{\Ts}}{h'[l:=h'(l') \cup a]}}{\dexts{h}{\Ts}{\annot{h'}{a}}}
\smallskip\\
    \infer{\dexts{h}{\emptyset}{\annot{h}{\emptyset}}}{}
    \quad
    \infer{\dexts{h}{\{[l]T\}}{\annot{h'}{h'(\result(T))}}}{\dext{h}{T}{h'}}
\smallskip\\
    \infer{\dexts{h}{\Ts_1\oplus \Ts_2}{\annot{(h_1 \uplus_h h_2)}{a_1 \cup a_2}}}{
      \dexts{h}{\Ts_1}{\annot{h_1}{a_1}}
      &
      \dexts{h}{\Ts_2}{\annot{h_2}{a_2}}
    }
  \end{array}
  \]
  \caption{Extracting dependency provenance}
\label{fig:dep-from-traces}
\end{figure}

\subsection{Semiring provenance}

\citet{DBLP:conf/pods/2007/GreenKT07} introduced the
\emph{semiring-annotated relational model}.  Recall that a
(commutative) semiring is an algebraic structure $(K,0_K,1_K,+_K,
\cdot_K)$ such that $(K,0,+)$ and $(K,1,\cdot)$ are commutative
monoids, $0$ is an annilhilator (that is, $0 \cdot x = 0 = x \cdot 0$)
and $\cdot$ distributes over $+$.  They considered $K$-relations to be
ordinary finite relations whose elements are annotated with elements
of $K$, and interpreted relational calculus queries over $K$-relations
such that many known variations of the relational model are a special
case.  For example, ordinary set-based semantics corresponds to the
semiring $(\Bool,\kfalse,\ktrue,\orr,\andd)$, whereas the multiset or
bag semantics corresponds to the semiring $(\Nat,0,1,+,\cdot)$.

The most general instance of the $K$-relational model is obtained by
taking $K$ to be the \emph{free semiring} $\Nat[X]$ of polynomials
with coefficients in $\Nat$ over indeterminates $X$, and
\citet{DBLP:conf/pods/2007/GreenKT07} considered this to yield a form
of provenance that they called \emph{how-provenance} because it
provides more information (than previous approaches such as
why-provenance or lineage) about how a tuple was derived from the
input.  Lineage and why-provenance can also be obtained as instances
of the semiring model (although the initial paper glossed over some
subtleties that were later clarified
by~\cite{DBLP:conf/pods/BunemanCTV08}).  Thus, if we can extract
semiring provenance from traces, we can also extract lineage and
why-provenance.

\citet{DBLP:conf/pods/FosterGT08} extended the semiring-valued model
to the NRC, and we will work in terms of this version.  Formally, given
semiring $K$, \citet{DBLP:conf/pods/FosterGT08} interpret types as
follows:
\begin{eqnarray*}
  \KSB{\intTy} &=& \Int \qquad  \KSB{\boolTy} = \Bool\\
  \KSB{\tau_1 \times \tau_2} &=& \KSB{\tau_1} \times \KSB{\tau_2}\\
  \KSB{\{\tau\}} &=& \{f : \KSB{\tau} \to K \mid \supp(f) \text{ finite}\}
\end{eqnarray*}
where $\supp(f) = \{x \in X \mid f(x) \neq 0_K\}$ provided $f : X \to
K$.  In other words, integer, boolean and pair types are interpreted
normally, and collections of type $\tau$ are interpreted as
\emph{finitely-supported} functions from $\KSB{\tau}$ to $K$.  For
example, finitely-supported functions $X \to \Bool$ correspond to
finite relations over $X$, whereas finitely-supported functions $X \to
\Nat$ correspond to finite multisets.  We overload the multiset
notation $\{v_1:k_1,\ldots\}$ for $K$-collections over $K$-values $v$
to indicate that the annotation of $v_i$ is $k_i$.  We write
$\KVal{K}$ for the set of all $K$-values of any type.

We write $\KK(X)$ for $\{f:X \to K \mid \supp(f) \text{ finite}\}$.
This forms an additive monad with zero.  To simplify notation, we
define its ``return'' ($\eta_\KK$), ``bind'' ($\bullet_\KK$), zero
($0_\KK$), and addition $(+_\KK)$ operators as follows:
\begin{eqnarray*}
  \eta_\KK(x) &=&\lambda y. \ifthenelse{x=y}{1_K}{0_K}\\
  f \bullet_\KK g &=& \lambda y.  \summ_{x \in \supp(f)} f(x) \cdot_K g(x)(y)\\
  0_\KK &=& \lambda x. 0_K\\
  f+_\KK g &=& \lambda x. f(x)+_K g(x)
\end{eqnarray*}
Moreover, if $f : X \to K$ and $k \in K$ then we write $k \cdot_\KK f$
for the ``scalar multiplication'' of $v$ by $k$, that is, $k \cdot f =
\lambda x. k \cdot_K f(x)$.

\citet{DBLP:conf/pods/FosterGT08} defined the semantics of NRC over
$K$-values denotationally. \refFig{semi-denot} presents a simplified
version of this semantics in terms of the $\KK$ monad operations; we
interpret an expression $e$ as a function from environments $\gamma :
\Var \to \KVal{K}$ to results in $\KVal{K}$.  Note that
\citet{DBLP:conf/pods/FosterGT08}'s version of NRC excludes emptiness
tests, integers, booleans and primitive operations other than
equality, but also includes some features we do not consider such as a
tree type used to model unordered XML.  Most of the rules are similar
to the ordinary denotational semantics of NRC; only the rules
involving collection types are different.  A suitable type soundness
theorem can be shown easily for this interpretation.

  \begin{figure}
    
    \begin{eqnarray*}
      \KSB{x}\gamma &=& \gamma(x)\\
      \KSB{\letin{x=e_1}{e_2}}\gamma&=& \KSB{e_2}\gamma[x \mapsto \KSB{e_1}\gamma]\\
      \KSB{b}\gamma &=& b\\
      \KSB{\nott e}\gamma &=& \nott\KSB{e}\gamma\\
      \KSB{e_1 \andd e_2}\gamma &=& \KSB{e_1}\gamma \andd \KSB{e_2}\gamma\\
      \KSB{(e_1,e_2)}\gamma &=& (\KSB{e_1}\gamma ,\KSB{e_2}\gamma )\\
      \KSB{\pi_i(e)}\gamma &=& \pi_i(\KSB{e}\gamma)\\
      \KSB{\emptyset}\gamma &=& 0_\KK\\
      \KSB{\setof{e}}\gamma &=& \eta_\KK(\KSB{e}\gamma)\\
      \KSB{e_1 \cup e_2}\gamma &=& \KSB{e_1}\gamma +_\KK \KSB{e_2}\gamma \\
      \KSB{\flatten\{e \mid x \in e_0\}}\gamma &=&
      \KSB{e_0}\gamma \bullet_\KK (\lambda v. \KSB{e}\gamma[x\mapsto v])\\
      \KSB{\ifthenelse{e_0}{e_1}{e_2}}\gamma &=& \left\{ 
        \begin{array}{ll}
          \KSB{e_1}\gamma & \text{if $\KSB{e_0}{\gamma} = \ktrue$}\\
          \KSB{e_2}\gamma & \text{if $\KSB{e_0}{\gamma} = \kfalse$}
        \end{array}\right.\\
      \KSB{e_1 \eq e_2}\gamma &=& \left\{ 
        \begin{array}{ll}
          \ktrue & \text{if $\KSB{e_1}{\gamma} = \KSB{e_2}{\gamma}$}\\
          \kfalse& \text{if $\KSB{e_1}{\gamma} \neq \KSB{e_2}{\gamma}$}
        \end{array}\right.
    \end{eqnarray*}
      \caption{Semiring provenance, denotationally}
    \label{fig:semi-denot}
  \end{figure}

  Semiring-valued relations place annotations only on the elements of
  collections.  To model these annotations correctly using stores, we
  annotate labels of collections with $K$-collections of labels
  $\KK(\Lab)$.  As a simple example, consider store $[l_1:=1, l_2:=2,
  l_3:=1, l:=\{l_1:2,l_2:3,l_3\}]$ and annotation function $h(l) =
  [l_1:= k_1, l_2 := k_2,l_3:=k_3]$.  Then $l$ can be interpreted as
  the $K$-value $\{1 : 2k_1+k_3, 2 : 3k_2\}$.  The reason for
  annotating collections with $\KK(\Lab)$ instead of annotating
  collection element labels directly is that due to sharing, a label
  may be an element of more than one collection in a store (with
  different $K$-annotations). For example, consider $[l_1:=1, l_2:=2,l
  :=\{l_1:2,l_2\},l' := \{l_1:42\}]$.  If we annotate $l$ with $[l_1
  \mapsto k_1, l_2 \mapsto k_2]$ and $l'$ with $[l_1 := k_3]$ then we
  can interpret $l$ as $\{1:2k_1,2:k_2\}$ and $l'$ as $\{1:42k_3\}$
  respectively.  If the annotations were placed directly on $l_1,l_2$
  then this would not be possible.

  We will consider annotation functions $h : \Lab \to \KK(\Lab)_\bot$
  such that if $l$ is the label of a collection, then $h(l)$ maps the
  elements of $l$ to their $K$-values.  Labels of pair, integer, or
  boolean constructors are mapped to $\bot$.  In what follows, we
  will use an auxiliary function $\semiring(l,h)$ to deal with the
  basic operations:
\[\begin{array}{rcl}
  \semiring(l,h) &=& h(l)\\
  \semiring(\emptyset,h) &=& 0_\KK\\
  \semiring(\{l\},h) &=& \eta_\KK(l)\\
  \semiring(l_1\cup l_2,h) &=& h(l_1) +_\KK h(l_2)\\
  \semiring(t,h) &=& \bot \qquad (\text{otherwise})
\end{array}\]

As before, we consider an operational version of the denotational
semantics of NRC over $K$-values.  This is shown in \refFig{semi-op}.
As usual, there are two judgments, one for expression evaluation and
one for iterating over a set.  Many of the rules not involving
collections are standard.  The $\semiring$ function handles the cases
for $\emptyset$, $\cup$, and $\{e\}$.
\begin{sub}
  In the companion technical report~\cite{tr}, we define a translation from annotated stores to
  $K$-values and show that the operational version respects the
  denotational semantics.
\end{sub}

\begin{tr}
  There is a mismatch between the denotational semantics on $K$-values
  and the operational semantics.  The latter produces annotated
  stores, and we need to translate these to $K$-values in order to be
  able to relate the denotational and operational semantics.  The
  desired translation is different from the ones we have needed so
  far.  We define
  \begin{eqnarray*}
    \annot{\sigma}{h}\Uparrow_{\intTy}^K l &=& \sigma(l)\\
    \annot{\sigma}{h}\Uparrow_{\boolTy}^K l &=& \sigma(l)\\
    \annot{\sigma}{h}\Uparrow_{\tau_1 \times \tau_2}^K l &=& (\annot{\sigma}{h} \Uparrow_{\tau_1}^K l_1 ,\annot{\sigma}{h} \Uparrow_{\tau_1}^K l_1) \quad (\sigma(l) = (l_1,l_2))\\
    \annot{\sigma}{h}\Uparrow_{\{\tau\}}^K l&=& \lambda x. \summ\{h(l)(l') \\
&&\qquad \mid l' \in \dom(\sigma(l)), \annot{\sigma}{h}\Uparrow_{\{\tau\}}^K l' = x \}
  \end{eqnarray*}
  The translation steps for the basic types and pairing are
  straightforward.  For collection types, we need to construct a
  $K$-collection corresponding to $l$; to do so, given an input $x$ we
  sum together the values $h(l)(l')$ for each label $l'$ in
  $\dom(\sigma(l))$ such that the $K$-value of $l'$ in
  $\annot{\sigma}{h}$ is $x$.  In particular, note that we
  \emph{ignore} the multiplicity of $l'$ in $\sigma(l)$ here.

  We can now show the equivalence of the operational and denotational
  presentations of the semiring semantics:
  \begin{theorem}
      ~
      \begin{enumerate}
      \item Suppose $\wf{\Gamma}{e}{\tau}$ and
        $\wfstorectx{\Psi}{\sigma}{\gamma}{\Gamma}$.  Then
        $\heval[K]{\sigma}{h}{l}{e}{\sigma'}{h'}$ if and only if
        $\KSB{e} (\annot{\sigma}{h} \Uparrow^K_\Gamma \gamma)
        = \annot{\sigma'}{h'} \Uparrow^K_\tau l$.
        \item Suppose $\wf{\Gamma,x:\tau}{e}{\{\tau'\}}$ and
        $\wfstorectx{\Psi}{\sigma}{\gamma}{\Gamma}$.  Then 
        $\hevals[K]{\sigma}{h}{x}{L}{e}{\sigma'}{h'}{L'}$ if and only if
        $ \{\KSB{e}\gamma[x := v] \mid v \in
        \annot{\sigma}{h}\Uparrow^K_{\{\tau\}} L\} =
        \annot{\sigma'}{h'}\Uparrow^K_{\{\tau'\}} L' $.
        \end{enumerate}
  \end{theorem}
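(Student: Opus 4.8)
The plan is to prove both parts simultaneously by induction on the derivation of the $K$-annotated evaluation judgments $\heval[K]{\sigma}{h}{l}{e}{\sigma'}{h'}$ and $\hevals[K]{\sigma}{h}{x}{L}{e}{\sigma'}{h'}{L'}$, establishing the left-to-right implication (operational derivation $\Rightarrow$ denotational equation) first. Simultaneity is forced, since the comprehension and sum rules of part~(1) invoke the iteration judgment of part~(2) while its singleton rule invokes part~(1). Two auxiliary facts will be threaded through every case. First, a \emph{store-weakening} lemma: if $\sigma'$ extends $\sigma$ by labels not reachable from $l$ and $h'$ agrees with $h$ off the new labels, then $\annot{\sigma'}{h'}\Uparrow^K_\tau l = \annot{\sigma}{h}\Uparrow^K_\tau l$; this absorbs the fresh-label side conditions, since recursive calls only ever extend the store at fresh locations. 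Second, an invariant on annotated stores — that $h$ sends every collection label $l$ to a finitely-supported element of $\KK(\dom(\sigma(l)))$ and every non-collection label to $\bot$ — which must be shown preserved by each rule; it is precisely what makes $\semiring(t,h)$ agree with the denotational clause for the term $t$.

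The routine cases — labels/variables, $\klet$, pairing, projection, conditionals, and the base terms handled by $\semiring$ — follow by unfolding the matching clause of \refFig{semi-denot} and applying the induction hypotheses together with store weakening. For the collection formers one checks directly that applying $\Uparrow^K_{\{\tau\}}$ to the freshly created collection label, whose $K$-annotation is $0_\KK$ for $\emptyset$, $\eta_\KK(l)$ for $\{l\}$, and $h(l_1)+_\KK h(l_2)$ for $l_1\cup l_2$, reproduces $0_\KK$, $\eta_\KK(\annot{\sigma}{h}\Uparrow^K_\tau l)$, and $\annot{\sigma}{h}\Uparrow^K_{\{\tau\}}l_1 +_\KK \annot{\sigma}{h}\Uparrow^K_{\{\tau\}}l_2$ respectively; the only subtlety is that $\Uparrow^K_{\{\tau\}}$ collapses labels of equal $K$-value by summing their annotations, so one must invoke the support condition to see that labels outside $\dom(\sigma(l_i))$ contribute $0_K$.

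The main obstacle is the comprehension case $\flatten\{e\mid x\in l'\}$ (and, identically in shape, $\summ\{e\mid x\in l'\}$). On the denotational side this is the monad bind $\annot{\sigma}{h}\Uparrow^K_{\{\tau\}}l' \bullet_\KK (\lambda v.\,\KSB{e}\gamma[x\mapsto v])$, which sums over the \emph{distinct} $K$-values in $\supp(\annot{\sigma}{h}\Uparrow^K_{\{\tau\}}l')$, each scaled by the \emph{sum} of the annotations of all labels in $\dom(\sigma(l'))$ carrying that value; the operational side instead splits the multiset $\sigma(l')$ label by label through $\oplus$, evaluates the body on each label separately, and merges. Reconciling them comes down to three points: (i) the part~(2) induction hypothesis says iterating the body over any sub-multiset of labels yields a store whose translation at the result labels is the $\KK$-image of the body over the translations of those labels; (ii) scalar multiplication $k\cdot_\KK(-)$ distributes over $+_\KK$ and commutes appropriately with the body's own bind, by distributivity and associativity in the semiring; and (iii) regrouping the label-indexed sum by $K$-value and factoring the scalars out is exactly the passage to the $\supp$-indexed sum defining $\bullet_\KK$. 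The $\oplus$-split rule of part~(2) is where this bookkeeping is maintained: one shows the translation of the merged store at the combined result labels is the $+_\KK$ of the two pieces, again using store weakening to view the two recursive stores as extensions of a common $\sigma$.

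For the converse direction, note that $K$-annotated evaluation refines the plain operational semantics — forgetting $h$ yields a plain derivation — hence it is total on well-typed configurations and deterministic up to the choice of fresh labels, so the denotational equation of each part holds of some output; combined with the forward implication just proved, this yields the stated biconditional, exactly as in the where- and dependency-provenance equivalences. I expect the comprehension/sum reconciliation described above to be the only part requiring a genuinely new argument; everything else is bookkeeping parallel to the earlier equivalence theorems.
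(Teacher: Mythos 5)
The paper never actually proves this theorem: it is stated with no accompanying proof, to be read as following by the same ``straightforward'' simultaneous induction as the where- and dependency-provenance equivalence theorems. Your proposal fleshes out exactly that induction, and the three ingredients you isolate --- the support invariant ($h(l)\in\KK(\dom(\sigma(l)))$ at collection labels, which is what the paper's stipulation that ``$h(l)$ maps the elements of $l$ to their $K$-values'' amounts to, and without which already the $l_1\cup l_2$ case fails), weakening of $\Uparrow^K$ under extension of the store at fresh labels, and the distributivity/regrouping-by-$K$-value argument matching the label-indexed combination $k'\bullet_\KK h'$ against the $\supp$-indexed sum defining $\bullet_\KK$ in the comprehension and sum cases --- are precisely what the omitted induction needs. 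So your route is essentially the intended one, worked out in more detail than the paper gives.

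One caveat: your argument for the ``if'' direction (totality of $K$-annotated evaluation plus determinism up to fresh labels, combined with the forward implication) does not literally establish the stated converse, because the equation $\KSB{e}{(\annot{\sigma}{h}\Uparrow^K_\Gamma\gamma)}=\annot{\sigma'}{h'}\Uparrow^K_\tau l$ constrains only the translation at $l$ and therefore does not determine $(\sigma',h')$: an annotated store that agrees with a genuine evaluation output on the part reachable from $l$ but differs elsewhere satisfies the right-hand side without being derivable on the left. This looseness is in the theorem statement itself (the paper's where- and dependency-provenance theorems share it), and the substantive content --- evaluation succeeds on well-typed configurations and every output translates to the denotational value --- is exactly what your forward induction together with totality delivers; but if you want a literal biconditional you would need to restate the theorem so that $(\sigma',h')$ ranges over evaluation outputs (or is quotiented by the irrelevant differences), rather than claim it follows from determinism.
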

\end{tr}

  \begin{figure}
    \small
  \[
  \begin{array}{c}
    \infer{\heval[K]{\sigma}{h}{l}{t}{\sigma[l:=t]}{h[l:=\semiring(t,h)]}}{}
    \smallskip\\
    \infer{\heval[K]{\sigma}{h}{l}{\letin{x=e_1}{e_2}}{\sigma''}{h''}}{
      \heval[K]{\sigma}{h}{l'}{e_1}{\sigma'}{h'} &
      \heval[K]{\sigma'}{h'}{l}{e_2[l'/x]}{\sigma''}{h''} & 
      l' \fresh
    }
    \smallskip\\
    \infer{\heval[K]{\sigma}{h}{l}{\pi_i(l')}{\sigma[l:=\sigma(l_i)]}{h[l:=h(l_i)}}{\sigma(l') = (l_1,l_2)}
    \smallskip\\
    \infer{\heval[K]{\sigma}{h}{l}{\ifthenelse{l'}{e_{\ktrue}}{e_{\kfalse}}}{\sigma'}{h'}}{
      \sigma(l') = b & \heval[K]{\sigma}{h}{l}{e_b}{\sigma'}{h'}}
    \smallskip\\
    \infer{\heval[K]{\sigma}{h}{l}{\flatten \{e \mid x \in l'\}}{\sigma'[l:=\mflatten \sigma'[L']]}{h'[l:=k' \bullet_\KK h']}}{\hevals[K]{\sigma}{h}{x}{\annot{\sigma(l')}{h(l')}}{e}{\sigma'}{h'}{\annot{L'}{k'}}}
    \smallskip\\ 
    \infer{\hevals[K]{\sigma}{h}{x}{\annot{\emptyset}{k}}{e}{\sigma}{h}{\annot{\emptyset}{0_\KK}}}{}
    \smallskip\\
    \infer{\hevals[K]{\sigma}{h}{x}{\annot{(L_1 \oplus L_2)}{k}}{e}{(\sigma_1 \uplus_\sigma \sigma_2)}{h_1 \uplus _h h_2}{\annot{(L_1' \oplus L_2')}{k_1+_\KK k_2}}}{\hevals[K]{\sigma}{h}{x}{\annot{L_1}{k}}{e}{\sigma_1}{h_1}{\annot{L_1'}{k_1}} & \hevals[K]{\sigma}{h}{x}{\annot{L_2}{k}}{e}{\sigma_2}{h_2}{\annot{L_2'}{k_2}}}
    \smallskip\\
    \infer{\hevals[K]{\sigma}{h}{x}{\annot{\{l:m\}}{k}}{e}{\sigma'}{h'}{\annot{\{l':m\}}{k(l) \cdot \eta_\KK(l')}}}{\heval[K]{\sigma}{h}{l'}{e[l/x]}{\sigma'}{h'} & l' \fresh}
  \end{array}
  \]
    \caption{Semiring provenance, operationally}
    \label{fig:semi-op}
  \end{figure}

  Our main result is that extraction semantics is correct with respect
  to the operational semantics:
  \begin{sub}
    \begin{theorem}
      If $\treval{\sigma}{l}{e}{\sigma'}{T}$ then
      $\heval[K]{\sigma}{h}{l}{e}{\sigma'}{h'}$ holds if and only if
      $\kext{h}{T}{h'}$.
    \end{theorem}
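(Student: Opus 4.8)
The plan is to prove this by induction on the derivation of $\treval{\sigma}{l}{e}{\sigma'}{T}$, simultaneously with the companion statement for iteration: if $\trevals{\sigma}{x}{L}{e}{\sigma'}{L'}{\Ts}$ then for every $h$ and every $K$-collection $k$, $\hevals[K]{\sigma}{h}{x}{\annot{L}{k}}{e}{\sigma'}{h'}{\annot{L'}{k'}}$ holds if and only if $\kexts{h}{\Ts}{\annot{h'}{k'}}$ holds (with the same $h'$ and $k'$). The iteration form is needed because it appears as a premise of the comprehension and sum rules. The structural observation driving the induction is that each traced-evaluation rule simultaneously pins down which operational semiring rule in \refFig{semi-op} and which extraction rule must apply, and that the trace already records the data both of those rules consume — the fresh labels, the taken branch for conditionals, and for comprehensions the input-collection label $l'$ together with the element labels and subtraces inside $\Ts$. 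Hence the store component stays in sync across all three systems and the biconditional reduces to checking that the operational rule and the extraction rule compute the same updated annotation $h'$ (and, in the iteration case, the same $k'$).

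The base case is the term rule, where both the operational rule and the assignment-trace extraction rule set $h' = h[l:=\semiring(t,h)]$, so they agree verbatim. The cases for $\klet$, conditionals, and projection are routine applications of the induction hypothesis that combine subresults identically on the two sides; for conditionals one uses that the recorded boolean equals $\sigma(l')$ and that only the subtrace of the taken branch is present.

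The crux is the comprehension case (the sum case is identical). Here I would descend into the iteration sub-derivation and match its three rules against the extraction rules for trace sets: the empty split yields $0_\KK$ on both sides; an $L_1\oplus L_2$ split is mirrored by a $\Ts_1\oplus\Ts_2$ split, with the recursive results combined by $+_\KK$ on $K$-values and by $\uplus_h$ on stores, which agrees because $\bullet_\KK$ and scalar multiplication are linear over $+_\KK$; and for a singleton $\{l:m\}$ the operational rule produces element annotation $k(l)\cdot\eta_\KK(l')$ from the recursive trace $T$ for $e[l/x]$, which the extraction rule for $\{[l]T:m\}$ must reproduce using the induction hypothesis on $T$ and the coefficient read off from $h(l')$ — with both sides dropping the multiplicity $m$, consistent with the treatment of multiplicities and sharing in the store-based presentation of the semiring model. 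Feeding the resulting $\annot{L'}{k'}$ back through the comprehension rule, the outer update $h'[l:=k'\bullet_\KK h']$ is exactly what the extractor writes for $\trcomp{l'}{\Ts}$. The main obstacle is precisely this monad-bind bookkeeping: checking that the trace-set traversal reconstructs $k'\bullet_\KK h'$ from $h(l')$ and the per-element subtrace outputs, and that label sharing across collections (the reason annotations live on collection labels as $\KK(\Lab)$-valued functions) and multiplicities are handled the same way on both sides. Everything else is a mechanical matching of rule shapes.
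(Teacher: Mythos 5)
Your proposal is essentially the paper's intended argument: the paper itself only states this result in the strengthened two-part form (the extended version adds exactly your companion clause for the iteration judgment, with the input annotation threaded through as $\kext{h,k}{\Ts}{h',k'}$) and leaves the simultaneous induction on the traced-evaluation derivation, with the rule-by-rule matching you describe, unwritten. Two small corrections to your sketch: the extraction side of your companion statement must also carry the input $K$-collection $k$ (the singleton rule consumes $k(l)$, and at the top level $k = h(l')$, with $\result(T)$ supplying the fresh label via the lemma $\result(T)=l'$), and the $\ksum$ case is not ``identical'' but vacuous, since neither the semiring operational semantics nor the extraction rules cover $\summ$ traces, so both sides of the biconditional are underivable there.
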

  \end{sub}
  \begin{tr}
    \begin{theorem}
      \begin{enumerate}
      \item If $\treval{\sigma}{l}{e}{\sigma'}{T}$ then
        $\heval[K]{\sigma}{h}{l}{e}{\sigma'}{h'}$ holds if and only if
        $\kext{h}{T}{h'}$.
      \item If $\trevals{\sigma}{x}{L}{e}{\sigma'}{L'}{\Ts}$ then
        $\hevals[K]{\sigma}{h}{x}{\annot{L}{k}}{e}{\sigma'}{h'}{\annot{L'}{k'}}$
        if and only if $\kext{h,k}{\Ts}{h',k'}$.
      \end{enumerate}

    \end{theorem}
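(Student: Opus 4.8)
The plan is to prove parts (1) and (2) simultaneously by induction on the derivation of the traced evaluation judgment --- $\treval{\sigma}{l}{e}{\sigma'}{T}$ for (1) and $\trevals{\sigma}{x}{L}{e}{\sigma'}{L'}{\Ts}$ for (2) --- with a case analysis on the last rule used. The key structural observation is that each rule of traced evaluation (\refFig{trace-semantics}) corresponds to exactly one rule of the operational $K$-semantics (\refFig{semi-op}) and, once the trace $T$ is fixed, to exactly one rule of the extraction relation $\kext{h}{T}{h'}$: all three are refinements of the plain operational semantics driven by the same expression and input store. Hence in each case the shapes of the $\heval[K]{\sigma}{h}{l}{e}{\sigma'}{h'}$ derivation and the $\kext{h}{T}{h'}$ derivation are forced, and establishing the biconditional reduces to checking that the annotation update performed by the $K$-semantics coincides with the one performed by the extraction rule, invoking the induction hypotheses on subderivations. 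Both directions of the ``if and only if'' then follow by the same case analysis.

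The base and simple inductive cases are routine. For a term $t$ the trace is $l \gets t$ and both the $K$-semantics and the extraction rule set the new annotation to $\semiring(t,h)$, which depends only on $h$; note that the store update $\op(t,\sigma)$ and the annotation $\semiring(t,h)$ are computed entirely independently, so the extraction never needs $\sigma$. For $\letin{x=e_1}{e_2}$ the trace is $\trlet{T_1}{T_2}$; the fresh label $l'$ is recorded in $T_1$, so the same $l'$ is forced in both the $\heval[K]{\sigma}{h}{l'}{e_1}{\sigma'}{h'}$ derivation and the extraction derivation, and we apply the induction hypothesis to $T_1$ and $T_2$ in turn. For $\pi_i(l')$ the trace records the component label $l_i$ (determined by $\sigma(l') = (l_1,l_2)$) and both judgments copy $h(l_i)$ to $l$. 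For $\ifthenelse{l'}{e_\ktrue}{e_\kfalse}$ the trace records $b = \sigma(l')$ and the subtrace of the selected branch $e_b$; the $K$-semantics selects the same branch, and we finish by the induction hypothesis on that subtrace. Wherever the extraction rules refer to $\result$ of a subtrace we use \refLem{result-tech}.

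The substantive work is in the comprehension and sum cases, $l \gets \trcompe{l'}{\Ts}{x.e}$ and $l \gets \trsume{l'}{\Ts}{x.e}$, which reduce via part (2) to the iteration judgment; part (2) is then handled by an inner induction on the structure of $\Ts$ (empty, singleton, disjoint union). The delicate point here is the $\KK(\Lab)$ bookkeeping specific to the semiring model: the operational rule feeds the annotated collection $\annot{\sigma(l')}{h(l')}$ into the iteration, so the $K$-annotation $k(l)$ of each element label $l$ travels with it, whereas the extraction judgment $\kext{h,k}{\Ts}{h',k'}$ must recover these annotations from $k = h(l')$ together with the element labels recorded in $\inputs(\Ts)$. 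One checks that the per-element contribution $k(l)\cdot\eta_\KK(l')$ and the final monadic bind of $k'$ with the result annotations yield the same value $h'(l)$ in both developments; the disjoint-union case combines the induction hypotheses on $\Ts_1$ and $\Ts_2$ using commutativity of $+_\KK$ and disjointness of domains. I expect this comprehension case to be the main obstacle, precisely because sharing of a label across several collections is the reason the $\KK(\Lab)$ indirection exists: once the invariants are pinned down (each collection label $l'$ has $h(l')$ mapping exactly $\dom(\sigma(l'))$ into $K$, and non-collection labels map to $\bot$), the remaining reasoning is a direct calculation with $\eta_\KK$, $\bullet_\KK$, $+_\KK$ and scalar multiplication.
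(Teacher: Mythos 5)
Your proposal is correct and takes essentially the approach the paper intends for this family of extraction results: a simultaneous induction on the traced-evaluation derivations, with the shapes of the operational semiring derivation and of the extraction derivation forced case by case (the trace fixing the fresh labels, the tested boolean, and the projected component), the annotation updates checked to coincide ($\semiring(t,h)$, copying $h(l_i)$, and $k'\bullet_\KK h'$ for comprehensions), and \refLem{result-tech} aligning $\result(T)$ with the destination label in the singleton iteration case. The only small mismatch is your treatment of sums: neither the operational semiring semantics nor the extraction rules have a rule for $\ksum$ traces (the semiring model does not handle aggregation), so that case is vacuous rather than parallel to comprehension --- which, if anything, makes the argument shorter.
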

  \end{tr}

\begin{figure}
\begin{center}
\begin{tikzpicture}
  \node at (0,0) {A};
  \node at (1,0) {B};
  \node at (2,0) {D};
  
  \node (atom11) at (0,-0.5) [inner sep = 0pt] {$1$};
  \node (atom12) at (1,-0.5) [inner sep = 0pt] {$2$};
  \node (atom13) at (2,-0.5) [inner sep = 0pt] {$7$};
  \node (atom21) at (0,-1) [inner sep = 0pt] {$1$};
  \node (atom22) at (1,-1) [inner sep = 0pt] {$3$};
  \node (atom23) at (2,-1) [inner sep = 0pt] {$7$};

  \node (table) at (-0.3,0.3) [anchor=north west,rectangle,draw,minimum width = 3cm, minimum height
  = 1.6cm] {};
  
  \node (caption) at ([yshift=-0.4cm]table.south) {Output table $Q_1(A,B,D)$};

  \node (row1) at (-0.2,-0.3) [help lines,anchor=north west,rectangle,draw,minimum width = 2.8cm, minimum height
  = 0.4cm] {};
  \node (row2) at (-0.2,-0.8) [help lines,anchor=north west,rectangle,draw,minimum width = 2.8cm, minimum height
  = 0.4cm] {};

  \node (crow1) at ([xshift=6mm]row1.east) {\small $R_1 S_3$};
  \node (crow2) at ([xshift=6mm]row2.east) {\small $R_2S_3$};

  \draw (row1.east) -- (crow1);
  \draw (row2.east) -- (crow2);

  \node at (4,0) {A};
  \node at (5,0) {D};
  
  \node (atom1) at (4,-0.5) [inner sep = 0pt] {$1$};
  \node (atom2) at (5,-0.5) [inner sep = 0pt] {$7$};

  \node (table) at (3.7,0.3) [anchor=north west,rectangle,draw,minimum width = 2cm, minimum height
  = 1.1cm] {};
  
  \node (caption) at ([yshift=-0.9cm]table.south) {Output table $Q_3(A,D)$};

  \node (row1) at (3.8,-0.3) [help lines,anchor=north west,rectangle,draw,minimum width = 1.8cm, minimum height
  = 0.4cm] {};

  \node (crow1) at ([xshift=12mm]row1.east) {\small $R_1S_3+R_2S_3$};

  \draw (row1.east) -- (crow1);
\end{tikzpicture}%
\end{center}
\caption{Semiring provenance extraction examples}
\labelFig{semi-ext-examples}
\end{figure}
\begin{example}
  \refFig{semi-ext-examples} shows the result of semiring-provenance
  extraction on $Q_1$.  Here, we write $R_1,S_1$, etc. for the
  annotations of $r_1$ in $r$, $s_1$ in $s$, etc. respectively.  The
  second query $Q_2$ involves $\summ$ expressions, which are not
  handled by the semiring model.  Instead, the second part of
  \refFig{semi-ext-examples} shows the result of semiring provenance
  extraction on $Q_3 = \{(A:x.A,D:x.D) \mid x \in Q_1\}$, where we
  have merged the two copies of the record $(A:1,D:7)$ together and
  added their $K$-values.
\end{example}

\begin{figure}
  \[\small
  \begin{array}{c}
    \infer{\kext{h}{l\gets t}{h[l:=\semiring(t,h)]}}{}
    \quad
    \infer{\kext{h}{T_1;T_2}{h''}}{\kext{h}{T_1}{h'} & \kext{h'}{T_2}{h''}}
    \smallskip\\
    \infer{\kext{h}{l \gets \trproj{i}{l'}{l_i}}{h[l := h(l_i)]}}{}
    \quad
    \infer{\kext{h}{\trcond[l]{l'}{b}{T}}{h'}}{\kext{h}{T}{h'}}
    \smallskip\\
    \infer{\kext{h}{l\gets \trcomp{l'}{\Ts}}{h'[l:=k' \bullet_\KK h']}}{\kexts{h,h(l')}{\Ts}{h',k'}}
    \smallskip\\
    \infer{\kext{h,k}{\emptyset}{h,0_\KK}}{}
    \quad
    \infer{\kext{h,k}{\Ts_1\oplus \Ts_2}{h_1\uplus_h h_2, k_1 +_\KK k_2}}{\kext{h,k}{\Ts_1}{h_1,k_1} & \kext{h,k}{\Ts_2}{h_2,k_2}}
    \smallskip\\
    \infer{\kexts{h,k}{\{[l]T:m\}}{h',k(l) \cdot_\KK \eta_\KK(\result(T))}}{\kext{h}{T}{h'}}
  \end{array}
\]
  \caption{Extracting semiring provenance}
  \label{fig:semiring-from-traces}
\end{figure}

\section{Adaptation}\labelSec{adaptation}

\subsection{Adaptive semantics}

\begin{figure}
  
  \[\small
\begin{array}{c}
  \infer{\cp{\sigma}{l \gets t}{\sigma[l:=\op(t,\sigma)]}{l\gets t}}{}
\smallskip\\
\infer{\cp{\sigma}{\trlet{T_1}{T_2}}{\sigma''}{T_1';T_2'}}{
\cp{\sigma}{T_1}{\sigma'}{T_1'} & \cp{\sigma'}{T_2}{\sigma''}{T_2'}
}
\smallskip\\
\infer{\cp{\sigma}{l\gets \trproj{i}{l'}{l_i}}{\sigma[l:=l_i]}{l\gets \trproj{i}{l'}{l_i'}}}{\sigma(l') = (l_1',l_2')}
\smallskip\\
\infer{\cp{\sigma}{\trconde[l]{l'}{b}{T}{e_1}{e_2}}{\sigma'}{
\trconde[l]{l'}{b'}{T'}{e_1}{e_2}}}{b' = \sigma(l')\neq b & 
\treval{\sigma}{l}{e_{b'}}{\sigma'}{T'}}
\smallskip\\
\infer{\cp{\sigma}{\trconde[l]{l'}{b}{T}{e_1}{e_2}}{\sigma'}{
\trconde[l]{l'}{b}{T'}{e_1}{e_2}}}{\sigma(l')=b & 
\cp{\sigma}{T}{\sigma'}{T'}  & l = \result(T')}
\smallskip\\
\infer{\cp{\sigma}{l\gets \trcompe{l'}{\Ts}{x.e}}{\sigma'[l:=\mflatten \sigma'[L']]}{l\gets \trcompe{l'}{\Ts'}{x.e}}}{
\cps{\sigma}{x}{\sigma(l')}{e}{\Ts}{\sigma'}{L'}{\Ts'}}
\smallskip\\
\infer{\cp{\sigma}{l\gets \trsume{l'}{\Ts}{x.e}}{\sigma'[l:=\summ \sigma'[L']]}{l\gets \trsume{l'}{\Ts'}{x.e}}}{
\cps{\sigma}{x}{\sigma(l')}{e}{\Ts}{\sigma'}{L'}{\Ts'}}
\smallskip\\
\infer{\cps{\sigma}{x}{\emptyset}{e}{\Ts}{\sigma}{\emptyset}{\emptyset}}{}
\smallskip\\
\infer{\cps{\sigma}{x}{\{l:m\}}{e}{\Ts}{\sigma'}{\{\result(T'):m\}}{\{[l]T':m\}}}
{[l]T \in \Ts & \cp{\sigma}{T}{\sigma'}{T'}}
\smallskip\\
\infer{\cps{\sigma}{x}{\{l:m\}}{e}{\Ts}{\sigma'}{\{l':m\}}{\{[l]T':m\}}}
{l \notin \inputs(\Ts) & l' \fresh & \treval{\sigma}{l'}{e[l/x]}{\sigma'}{T'}}
\smallskip\\
\infer{\cps{\sigma}{x}{L_1 \oplus L_2}{e}{\Ts}{\sigma_1\smerge_\sigma \sigma_2}{L_1'\oplus L_2'}{\Ts_1\oplus \Ts_2}}
{\cps{\sigma}{x}{L_1}{e}{\Ts}{\sigma_1}{L_1'}{\Ts_1}
  &\cps{\sigma}{x}{L_2}{e}{\Ts}{\sigma_2}{L_2'}{\Ts_2}
}
\end{array}
  \]
  \caption{Trace adaptation semantics}
  \labelFig{cp}
\end{figure}

We also introduce an \emph{adaptive semantics} that adapts traces to
changes in the input.  Similarly to change-propagation in
AFL~\cite{acar06toplas}, we can use the adaptive semantics to
``recompute'' an expression when the input is changed, and to adapt
the trace to be consistent with the new input and output.  However,
unlike in AFL, our goal here is not to efficiently recompute results,
but rather to characterize how traces ``represent'' or ``explain''
computations.  We believe efficient techniques for recomputing
database queries could also be developed using similar ideas, but view
this as beyond the scope of this paper.

We define the adaptive semantics rules in \refFig{cp}.  Following the
familiar pattern established by the operational semantics, we use two
judgments: $\cp{\sigma}{T}{\sigma'}{T'}$, or ``Recomputing $T$ on
$\sigma$ yields result $\sigma'$ and new trace $T'$'', and
$\cps{\sigma}{x}{L}{e}{\Ts}{\sigma'}{L'}{\Ts'}$, or ``Reiterating $e$
on $\sigma$ for each $x \in L$ with cached traces $\Ts$ yields result
$\sigma'$, result labels $L'$, and new trace $\Ts'$''.

Many of the basic trace steps have straightforward adaptation rules.
For example, the rule for traces $l \gets t$ simply recomputes the
result using the values of the input labels in the current store.  For
projection, we recompute the operation and discard the cached labels.
Adaptation for sequential composition is also straightforward.  For
conditional traces, there are two rules.  If the boolean value of the
label is the same as that recorded in the trace, then we proceed by
re-using the subtrace.  Otherwise, we need to fall back on the trace
semantics to compute the other branch.

The rules for comprehension and summation traces make use of the
iteration adaptation judgment.  In each case, we traverse the current
store value of $l_0$.  For each label $l$ in this set, we re-compute
the body of the comprehension, re-using a trace $[l]T$ if present in
$\Ts$, otherwise evaluating $e[l/x]$ in the traced semantics.  The
iterative judgments return a new labeled trace set $\Ts$ and its
return labels $L'$.  Note that trace adaptation ignores the
multiplicity of cached traces.  When we re-use a cached trace $[l]T$ on
a label $l$ with multiplicity $m$, we simply rerun the trace and use
$m$ as the multiplicity of the result label and new trace.

\begin{example}
  TODO
\end{example}


\subsection{Metatheory of adaptation}\labelSec{metatheory}

  We now investigate the metatheoretic properties of the traced
  evaluation and trace adaptation semantics.  

  We first show that the traced semantics correctly implements the
  operational semantics of NRC expressions, if we ignore traces.  This
  is a straightforward induction in both directions.
\begin{theorem}
  For any $\sigma,l,e,\sigma'$, we have $\eval{\sigma}{l}{e}{\sigma'}$
  if and only if $\treval{\sigma}{l}{e}{\sigma'}{T}$ for some $T$.
\end{theorem}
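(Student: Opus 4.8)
The plan is to prove both implications by induction on derivations, exploiting the fact that the traced evaluation rules of \refFig{trace-semantics} are in exact one-to-one correspondence with the operational semantics rules of \refFig{opsem}. Each traced rule is obtained from the corresponding plain rule by adjoining a trace $T$ (respectively a labeled trace set $\Ts$) to the conclusion and to each premise; the store components, result labels, freshness side-conditions $l' \fresh$, and multiset decompositions $L_1 \oplus L_2$ appearing in the two versions of any given rule are syntactically identical. Consequently the erasure of the trace from a traced derivation is a plain derivation of the same evaluation, and conversely any plain derivation can be decorated with traces.

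For the forward direction, assume $\eval{\sigma}{l}{e}{\sigma'}$ and induct on the structure of this derivation. In each case I inspect the last rule used, apply the induction hypothesis to every premise to obtain traces $T_i$ (and, in the comprehension and sum cases, a labeled trace set $\Ts$ from the iteration subderivation), and then apply the matching rule of \refFig{trace-semantics} to assemble a trace $T$ for the conclusion: $T = l \gets t$ for term steps, $T = T_1;T_2$ for $\klet$, $T = \trconde[l]{l'}{b}{T'}{e_\ktrue}{e_\kfalse}$ for conditionals, $T = l \gets \trcompe{l'}{\Ts}{x.e}$ for comprehensions, and so on. The backward direction is the mirror image: assume $\treval{\sigma}{l}{e}{\sigma'}{T}$, induct on that derivation, erase the trace components from the last rule and from all its premises, apply the induction hypothesis, and close with the matching rule of \refFig{opsem}.

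There is no real obstacle here; the only point that needs care is that the statement must first be generalized so that the induction also covers the iteration judgments --- namely that $\evals{\sigma}{x}{L}{e}{\sigma'}{L'}$ holds iff $\trevals{\sigma}{x}{L}{e}{\sigma'}{L'}{\Ts}$ holds for some $\Ts$ --- since expression evaluation and iteration are mutually recursive and the induction must proceed over both simultaneously. The existential over $T$ (respectively $\Ts$) is harmless: once the same nondeterministic choices (which fresh labels to pick, how to split the input multiset in an iteration) are fixed on both sides, the trace is determined by the shape of the derivation, so nothing is ``lost'' in either direction.
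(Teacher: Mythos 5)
Your proposal is correct and takes essentially the same route as the paper, which dismisses the theorem as ``a straightforward induction in both directions'' --- precisely the rule-by-rule correspondence between \refFig{opsem} and \refFig{trace-semantics} that you make explicit. Your added care in generalizing the statement to the mutually recursive iteration judgments ($\Downarrow^\star$) is exactly the strengthening such an induction needs, so nothing is missing.
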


We now turn to the correctness of the trace semantics.  We can view
the trace semantics as both evaluating $e$ in a store $\sigma$ yielding $\sigma'$ and
translating $e$ to a trace $T$ which ``explains'' the execution of
$e$.  What properties should a trace have in order to be a valid
explanation?  We identify two such properties which help to formalize
this intuition.  They are called \emph{consistency} and
\emph{fidelity}.

\paragraph{Consistency}

\begin{figure}
  
  \[
  \begin{array}{c}
    \infer{\trsat{\sigma}{l \gets t}}{\sigma(l) = \op(t,\sigma)}
\quad
\infer{\trsat{\sigma}{l\gets \trproj{i}{l'}{l_i}}}{\sigma(l') = (l_1,l_2) & \sigma(l) = \sigma(l_i)} 
\smallskip\\
\infer{\trsat{\sigma}{T_1;T_2}}{\trsat{\sigma}{T_1} & \trsat{\sigma}{T_2}}
\quad
\infer{\trsat{\sigma}{\trconde[l]{l'}{b}{T}{e_1}{e_2}}}
{\sigma(l') = b & \trsat{\sigma}{T} & \result(T) = l}
\smallskip\\
\infer{\trsat{\sigma}{l \gets \trcompe{l'}{\Ts}{x.e}}}
{\sigma(l') = \inputs(\Ts) & \trsats{\sigma}{\Ts} & \sigma(l) = \mflatten \sigma[\outputs(\Ts)]}
\smallskip\\
\infer{\trsat{\sigma}{l \gets \trsume{l'}{\Ts}{x.e}}}
{\sigma(l') = \inputs(\Ts) & \trsats{\sigma}{\Ts} & \sigma(l) = \summ \sigma[\outputs(\Ts)]}
\smallskip\\
\infer{\trsats{\sigma}{\emptyset}}{}
\quad
\infer{\trsats{\sigma}{\Ts_1 \oplus \Ts_2}}
{\trsats{\sigma}{\Ts_1} & \trsats{\sigma}{\Ts_2}}
\quad
\infer{\trsats{\sigma}{\{[l]T:m\}}}{\trsat{\sigma}{T}}
  \end{array}
\]
  \caption{Declarative semantics of traces}
  \label{fig:trace-declarative-sem}
\end{figure}

The trace is meant to be an explanation of what happened when $e$ was
evaluated on $\sigma$.  For example, if the trace says that $l \gets
l_1 + l_2$ but $\sigma'(l) \neq \sigma'(l_1) + \sigma'(l_2)$ then this
is inconsistent with the real execution.  Also, if the trace contains
$\trconde[l]{l'}{\kfalse}{T}{e_1}{e_2}$, but $l'$ actually evaluated
to $\ktrue$ in the evaluation of $e$, then the trace is inconsistent
with the actual execution.  As a third example, if the trace contains
$l' \gets \trcompe{l}{\{[l_1]T_1,[l_2]T_2\}}{x.e}$ whereas $\sigma(l)
= \{l_2,l_3\}$ then the trace is inconsistent because it does not
correctly show the behavior of the comprehension over $l$.

To formalize this notion of \emph{consistency}, observe that we can
view a trace declaratively as a collection of statements about the
values in the store.  We define a judgment $\trsat{\sigma}{T}$,
meaning ``$T$ is satisfied in store $\sigma$''.  We also employ an
auxiliary judgment $\trsats{\sigma}{\Ts}$, meaning ``Each trace in
$\Ts$ is satisfied in store $\sigma$''. The satisfiability relation is
defined in \refFig{trace-declarative-sem}.

\begin{theorem}[Consistency]
  If $\treval{\sigma}{l}{e}{\sigma'}{T}$ then $\trsat{\sigma'}{T}$.
\end{theorem}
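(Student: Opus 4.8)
The plan is to argue by induction on the derivation of $\treval{\sigma}{l}{e}{\sigma'}{T}$, strengthened by a companion claim for the iteration judgment: if $\trevals{\sigma}{x}{L}{e}{\sigma'}{L'}{\Ts}$ then $\trsats{\sigma'}{\Ts}$, and moreover $\inputs(\Ts) = L$ and $\outputs(\Ts) = L'$ as multisets (multiplicities included). The extra bookkeeping about $\inputs$ and $\outputs$ is precisely what the premises of the declarative rules for $l \gets \trcompe{l'}{\Ts}{x.e}$ and $l \gets \trsume{l'}{\Ts}{x.e}$ in \refFig{trace-declarative-sem} require ($\sigma(l') = \inputs(\Ts)$ and $\sigma(l) = \mflatten\sigma[\outputs(\Ts)]$), so it has to be threaded through the induction rather than recovered at the end.

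Two auxiliary lemmas carry most of the weight. First, a \emph{store monotonicity} lemma: assuming the destination label is fresh for the store it is written into (the invariant enforced by the freshness side conditions in \refFig{trace-semantics}), $\treval{\sigma}{l}{e}{\sigma'}{T}$ implies $\sigma \subseteq \sigma'$, i.e.\ $\dom(\sigma) \subseteq \dom(\sigma')$ and $\sigma'$ agrees with $\sigma$ on $\dom(\sigma)$; likewise for the iteration judgment, using that the orthogonal merge $\sigma_1 \smerge_\sigma \sigma_2$ extends both $\sigma_1$ and $\sigma_2$. Second, a \emph{persistence} lemma: if $\trsat{\sigma}{T}$ and $\sigma \subseteq \sigma'$ then $\trsat{\sigma'}{T}$, and similarly for $\trsats{}{}$. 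Both are routine inductions — satisfaction only reads the values of finitely many labels, all of which must already lie in $\dom(\sigma)$ for the side conditions (e.g.\ $\sigma(l) = \op(t,\sigma)$) to be well defined, and extending the store neither drops nor overwrites them, so $\op$, projection, $\mflatten$ and $\summ$ all compute the same thing. Together these let us transport a judgment proved about an intermediate store up to the final store $\sigma'$, which is exactly why the theorem can be stated about $\sigma'$ rather than about the store in effect when each sub-trace was produced.

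Given these, the cases are short. For $l \gets t$ we have $\sigma' = \sigma[l:=\op(t,\sigma)]$; since $l$ is fresh it does not occur in $t$, so $\op(t,\sigma') = \op(t,\sigma) = \sigma'(l)$; projection is analogous. For $\trlet{T_1}{T_2}$ the induction hypothesis gives $\trsat{\sigma_1}{T_1}$ and $\trsat{\sigma_2}{T_2}$, and monotonicity followed by persistence upgrades the former to $\trsat{\sigma_2}{T_1}$. For a conditional $\trconde[l]{l'}{b}{T}{e_{\ktrue}}{e_{\kfalse}}$ arising from $\treval{\sigma}{l}{e_b}{\sigma'}{T}$ with $\sigma(l') = b$: the induction hypothesis yields $\trsat{\sigma'}{T}$, $\sigma'(l') = b$ since $l' \in \dom(\sigma)$ is not overwritten, and $\result(T) = l$ by \refLem{result-tech}. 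For comprehensions and sums we appeal to the strengthened iteration hypothesis: from $\trevals{\sigma}{x}{\sigma(l')}{e}{\sigma''}{L'}{\Ts}$ we get $\trsats{\sigma''}{\Ts}$ (persist to $\sigma'$), $\inputs(\Ts) = \sigma(l') = \sigma'(l')$, and $\outputs(\Ts) = L'$, whence $\sigma'(l) = \mflatten\sigma''[L'] = \mflatten\sigma'[\outputs(\Ts)]$ because the labels in $L'$, allocated during iteration, are disjoint from the fresh destination $l$. The iteration judgment is handled by its three rules: the empty case is trivial; the merge case follows from the induction hypothesis on the two sub-derivations plus persistence into the merged store, using $\inputs(\Ts_1 \oplus \Ts_2) = \inputs(\Ts_1) \oplus \inputs(\Ts_2)$ and likewise for $\outputs$; and the singleton case $\trevals{\sigma}{x}{\{l:m\}}{e}{\sigma'}{\{l':m\}}{\{[l]T:m\}}$ follows from the main induction hypothesis on $\treval{\sigma}{l'}{e[l/x]}{\sigma'}{T}$ together with $\result(T) = l'$ from \refLem{result-tech}.

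The one genuinely fiddly part is the freshness and disjointness bookkeeping: one must keep track that every destination label is fresh for the store it is written into, that the labels allocated inside a comprehension are disjoint from the comprehension's own result label, and that the splits along $\oplus$ in the iteration rules line up with the splits along $\smerge_\sigma$ of the stores. None of this is deep, but it is where the argument is easiest to botch; the cleanest route is to state the freshness side conditions of \refFig{trace-semantics} as an explicit "well-formed destination" invariant and carry it alongside the induction, so that monotonicity and persistence can be invoked without re-deriving the disjointness each time.
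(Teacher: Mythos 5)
Your proof is correct and essentially complete modulo the freshness bookkeeping that you yourself flag. The strengthened induction hypothesis for the iteration judgment ($\trsats{\sigma'}{\Ts}$ together with $\inputs(\Ts)=L$ and $\outputs(\Ts)=L'$) is exactly what the comprehension and sum cases of the satisfaction relation require, \refLem{result-tech} discharges the $\result$ side conditions, and your two auxiliary lemmas (store monotonicity of traced evaluation, and weakening of satisfaction under store extension) are both true and routine, since satisfaction only constrains labels already in the domain of the smaller store. One small point to make explicit: the let rule as printed in \refFig{trace-semantics} has its second premise starting from $\sigma$ rather than from the intermediate store $\sigma_1$; your argument tacitly uses the threaded version (as in the untraced semantics of \refFig{opsem}), which is clearly what is intended, but you should state that reading. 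Regarding comparison with the paper: the paper asserts Consistency without displaying a proof, and the route suggested by its development is different from yours --- it goes through the adaptation semantics, by showing that traced evaluation produces traces that are stable under adaptation, $\cp{\sigma}{T}{\sigma'}{T}$, and that satisfaction coincides with being a fixed point of adaptation, $\trsat{\sigma}{T}$ holding exactly when $\cp{\sigma}{T}{\sigma}{T}$, so that consistency falls out of machinery shared with the fidelity development. Your direct structural induction is more elementary and entirely independent of the adaptation judgment, at the cost of proving monotonicity and persistence explicitly; the stability route reuses the adaptation metatheory but hides the same store-extension reasoning inside the fixed-point equivalence, and it needs an additional small step to pass from $\cp{\sigma}{T}{\sigma'}{T}$ to $\cp{\sigma'}{T}{\sigma'}{T}$ before the equivalence applies. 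Either route is acceptable; yours stands on its own.
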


\paragraph{Fidelity}

Consistency is a necessary, but not sufficient, requirement for traces
to be ``explanations''.  It tells us that the trace records valid
information about the results of an execution.  However, this is not
enough, in itself, to say that the trace really ``explains'' the
execution, because a consistent trace might not tell us what
might have happened in other possible executions.  To see why,
consider a simple expression $\ifthenelse{l_y}{l_x + l_z}{l_z}$ run
against input store $[l_x = 42,l_y=\ktrue,l_z=5\}$.  Consider the
traces, $T_1 = l \gets l_x+l_z$ and $T_2 = l \gets 47$.  Both of these
traces are consistent, but neither really ``explain'' what actually
happened.  Saying that $l \gets l_x + l_z$ or $l \gets 47$ is enough
to know what the result value was in the actual run, but not what the
result would have been under all conditions.  The dependence on $l_x$
is lost in $T_2$.  If we rerun $T_1$ with a different input store $l_x
= 37$, then $T_1$ will correctly return $42$ while $T_2$ will still
return $47$. Moreover, the dependences on $l_y$ are lost in both:
changing $l_y$ to $\kfalse$ invalidates both traces.  Instead, the
trace $T_3 = \trconde[l]{l_y}{\ktrue}{l \gets l_x+l_z}{l_x+l_z}{l_z}$
records enough information to recompute the result under \emph{any}
(reasonable) change to the input store.

We call traces \emph{faithful} to $e$ if they record enough
information to recompute $e$ when the input store changes.  We first
consider a property called \emph{partial fidelity}. Partial fidelity
tells us that the trace adaptation semantics is partially correct with
respect to the traced evaluation semantics.  That is, if $T$ was
obtained by running $e$ on $\sigma_1$ and we can successfully adapt
$T$ to a new input $\sigma_2$ to obtain $\sigma_2'$ and $T'$, then we
know that $\sigma_2'$ and $T'$ could also have been obtained by traced
evaluation from $\sigma_2$ ``from scratch''.  
\begin{tr}
 
We first need some lemmas:
\begin{lemma}\labelLem{fetch-trace}
  If $[l]T \in \Ts$ and $\trevals{\sigma}{x}{L}{e}{\sigma'}{L'}{\Ts}$
  then for some $\sigma''$ we have
  $\treval{\sigma}{\result(T)}{e[l/x]}{\sigma''}{T}$.
\end{lemma}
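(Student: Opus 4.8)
The plan is to prove the lemma by induction on the derivation of $\trevals{\sigma}{x}{L}{e}{\sigma'}{L'}{\Ts}$, following the three rules for the iteration judgment in \refFig{trace-semantics}. The statement is existential in both $\sigma''$ and the shape of the sub-derivation, so at each case I only need to exhibit \emph{some} evaluation of $e[l/x]$ at $\result(T)$ producing $T$; I do not need to track how $\sigma''$ relates to $\sigma'$ or to the other sub-evaluations.

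For the empty rule $\trevals{\sigma}{x}{\emptyset}{e}{\sigma}{\emptyset}{\emptyset}$, the trace set is $\emptyset$, so there is no $[l]T \in \Ts$ and the claim holds vacuously. For the singleton rule, we have $\Ts = \{[l]T:m\}$ derived from a premise $\treval{\sigma}{l'}{e[l/x]}{\sigma'}{T}$ with $l' \fresh$; the only labeled trace in $\Ts$ is $[l]T$ itself. By \refLem{result-tech} applied to that premise, $\result(T) = l'$, so rewriting the location gives exactly $\treval{\sigma}{\result(T)}{e[l/x]}{\sigma'}{T}$, and we take $\sigma'' = \sigma'$. For the disjoint-union rule, $\Ts = \Ts_1 \oplus \Ts_2$ with premises $\trevals{\sigma}{x}{L_i}{e}{\sigma_i}{L_i'}{\Ts_i}$; any $[l]T \in \Ts$ lies in exactly one of $\Ts_1,\Ts_2$ by domain-disjointness, and the induction hypothesis applied to the corresponding premise (note both sub-derivations start from the same store $\sigma$) yields the required $\sigma''$ and sub-derivation.

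I do not expect any real obstacle here: the only non-trivial ingredient is the appeal to \refLem{result-tech} in the singleton case to identify $\result(T)$ with the fresh location $l'$ at which the body was evaluated, and this lemma is already available. One small point worth stating explicitly is that the union rule reuses the \emph{same} input store $\sigma$ in both branches, so the $\sigma$ in the conclusion of the lemma is literally the $\sigma$ handed to whichever branch contains $[l]T$; no store-weakening argument is needed. The proof is therefore a short structural induction, and I would present it at roughly the level of detail of the neighbouring lemmas.
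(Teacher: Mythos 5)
Your proof is correct and follows essentially the same route as the paper's: a structural induction on the iteration derivation with a vacuous empty case, the singleton premise serving directly as the required sub-derivation, and the induction hypothesis applied to whichever branch of the $\oplus$ rule contains $[l]T$. Your explicit appeal to \refLem{result-tech} to identify $\result(T)$ with the fresh destination $l'$ is a detail the paper leaves implicit, but it is exactly the right justification.
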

\begin{proof}
  Induction on the structure of
  $\trevals{\sigma}{x}{L}{e}{\sigma'}{L'}{\Ts}$.
  \begin{itemize}
  \item The case where $\Ts = \emptyset$ is vacuous since $[l]T \in
    \Ts$.

  \item Suppose the derivation is of the form
    \[
    \infer{\trevals{\sigma}{x}{L_1\cup L_2}{e}{\sigma_1 \uplus_\sigma
        \sigma_2}{L_1' \cup L_2'}{\Ts_1 \oplus \Ts_2}}
    {\trevals{\sigma}{x}{L_1}{e}{\sigma_1}{L_1' }{\Ts_1 } &
      \trevals{\sigma}{x}{L_2}{e}{\sigma_2}{L_2' }{\Ts_2 }}
    \]
    Then either $[l]T \in \Ts_1$ or $[l]T \in \Ts_2$; the cases are
    symmetric.  In either case, the induction hypothesis applies and
    we have $\treval{\sigma}{\result(T)}{e[l/x]}{\sigma_i}{T}$ as
    desired.
  \item Suppose the derivation is of the form
    \[
    \infer{\trevals{\sigma}{x}{\{l:m\}}{e}{\sigma'}{\{l':m\}}{\{[l]T:m\}}}{
      \treval{\sigma}{l'}{e[l/x]}{\sigma'}{T}}
    \]
    Then the subderivation $\treval{\sigma}{l'}{e[l/x]}{\sigma'}{T}$
    is the desired conclusion.
  \end{itemize}
\end{proof}
\begin{lemma}\labelLem{fetch-trace-wf}
  If $[l]T \in \Ts$ and $\wftrs{\Psi}{\tau}{\Ts}{\tau'}$ then we have
  $\wftr{\Psi,l{:}\tau}{T}{\result(T)}{\tau'}$.
\end{lemma}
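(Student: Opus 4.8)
The plan is to prove \refLem{fetch-trace-wf} by induction on the structure of the derivation of $\wftrs{\Psi}{\tau}{\Ts}{\tau'}$, exactly paralleling the proof of \refLem{fetch-trace} but working with the trace well-formedness judgments in place of the traced-evaluation judgments. Since a labelled trace set $\Ts$ is a multiset built up from $\emptyset$, singletons $\{[l]T:m\}$, and domain-disjoint unions $\Ts_1 \oplus \Ts_2$, the well-formedness rules for trace sets follow the same three-way pattern, and the proof cases on which rule was applied last.

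In the empty case $\Ts = \emptyset$ the hypothesis $[l]T \in \Ts$ is unsatisfiable, so the statement holds vacuously. In the disjoint-union case $\Ts = \Ts_1 \oplus \Ts_2$, the premises give well-formedness of $\Ts_1$ and $\Ts_2$ against the same input type $\tau$ and output type $\tau'$; since $[l]T$ occurs in $\Ts$ it occurs in exactly one of $\Ts_1$, $\Ts_2$, and the two subcases are symmetric, so applying the induction hypothesis to that side yields $\wftr{\Psi,l{:}\tau}{T}{\result(T)}{\tau'}$ directly. In the singleton case $\Ts = \{[l]T:m\}$, the hypothesis $[l]T \in \Ts$ forces the occurrence to be this one entry, and the single premise of the singleton rule that governs the body is, by definition of that rule, precisely the judgment $\wftr{\Psi,l{:}\tau}{T}{\result(T)}{\tau'}$ (with the side condition that the resulting store type assigns $\result(T)$ the type $\tau'$), so we are done.

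The only point requiring a little care is the store-type bookkeeping in the disjoint-union rule: it combines the per-branch output store types via $\uplus_\Psi$, and one must check that the context in which $T$ is found well-formed --- namely $\Psi$ extended with just the single binding label $l$ --- is exactly the one the lemma asks for, rather than a larger merged context. This is immediate because each branch is checked against the \emph{common} input store type $\Psi$, the binding labels being fresh and introduced only locally; if the precise formulation instead threaded a growing store type, a routine monotonicity lemma for trace well-formedness under store-type extension would close the gap. As in \refLem{fetch-trace}, the multiplicity annotation $m$ on $[l]T:m$ is purely notational since well-formedness of a labelled trace set ignores multiplicities. I expect no genuine obstacle: this is the sort of easy induction on derivations that underpins the rest of the metatheory in this section.
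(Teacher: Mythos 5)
Your proposal matches the paper's proof, which is simply stated as a straightforward induction parallel to \refLem{fetch-trace}: the case split on $\emptyset$, singleton, and $\oplus$ is exactly the intended argument, and in the actual rules of \refFig{trace-wf} both branches of the $\oplus$ rule are checked against the common $\Psi$ with no threaded store type, so the bookkeeping worry you flag does not arise (the only implicit step is the easy fact that $\wftr{\Psi,l{:}\tau}{T}{l'}{\tau'}$ forces $l'=\result(T)$, the typing analogue of \refLem{result-tech}).
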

\begin{proof}
  Straightforward induction similar to \refLem{fetch-trace}.
\end{proof}
\begin{lemma}\labelLem{cp-results-equal}
  If $\cp{\sigma}{T}{\sigma'}{T'}$ then $\result(T) = \result(T')$.
\end{lemma}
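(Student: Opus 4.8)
The plan is to prove this by a straightforward structural induction on the derivation of $\cp{\sigma}{T}{\sigma'}{T'}$, reading off $\result(T)$ and $\result(T')$ directly from the defining equations of the result function. The key observation is that every adaptation rule in \refFig{cp} preserves the outermost syntactic form of the trace together with the label it mentions, so in all cases except sequential composition the equality $\result(T)=\result(T')$ is immediate, and no mutual induction with the iteration adaptation judgment is needed: the result label of a comprehension or sum trace is simply its outer label, irrespective of how the subtraces are adapted.

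Concretely, I would dispatch the cases as follows. For an assignment trace $l\gets t$, adaptation returns $l\gets t$, so both results are $l$. For a projection trace $l\gets\trproj{i}{l'}{l_i}$, adaptation returns $l\gets\trproj{i}{l'}{l_i'}$ for the recomputed $l_i'$, and both results are $l$. For a comprehension trace $l\gets\trcompe{l'}{\Ts}{x.e}$ (and symmetrically a sum trace $l\gets\trsume{l'}{\Ts}{x.e}$), adaptation replaces $\Ts$ by a new trace set $\Ts'$ obtained from the iteration judgment but keeps the outer label $l$, so again both results are $l$. For conditional traces there are two rules. In the branch-preserving rule the adapted trace is $\trconde[l]{l'}{b}{T'}{e_1}{e_2}$, whose result is $l$ by definition; in the branch-switching rule the adapted trace is $\trconde[l]{l'}{b'}{T'}{e_1}{e_2}$ with $T'$ produced by traced evaluation $\treval{\sigma}{l}{e_{b'}}{\sigma'}{T'}$, and its result is again $l$ by definition (should one also need $\result(T')=l$, this follows from \refLem{result-tech}, though it is not required here).

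The only case that invokes the induction hypothesis is sequential composition $T=\trlet{T_1}{T_2}$, where the rule yields $T'=\trlet{T_1'}{T_2'}$ from premises $\cp{\sigma}{T_1}{\sigma'}{T_1'}$ and $\cp{\sigma'}{T_2}{\sigma''}{T_2'}$. Applying the induction hypothesis to the second premise gives $\result(T_2)=\result(T_2')$, and since $\result(\trlet{T_1}{T_2})=\result(T_2)$ and likewise for the primed trace, we conclude $\result(T)=\result(T')$.

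There is no real obstacle here; the content of the lemma is just that trace adaptation never changes the ``top-level identity'' of a trace. The only mildly delicate point is the branch-switching conditional rule, where the adapted subtrace comes from traced evaluation rather than from adapting the old subtrace --- but since $\result$ of a conditional trace is fixed by its annotated label alone, this causes no complication.
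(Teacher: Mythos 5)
Your proof is correct and matches the paper's own argument, which is simply a straightforward induction on the adaptation derivation (the paper leaves the case analysis implicit). Your observation that only the sequential-composition case needs the induction hypothesis, with all other cases immediate from the definition of $\result$, is exactly the content of that induction.
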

\begin{proof}
  Straightforward induction on derivations.
\end{proof}
\begin{theorem}[Partial fidelity]\labelThm{partial-fidelity}
  Let $\sigma_1,\sigma_1',\sigma_2,\sigma_2',T,T',\Theta,\Theta'$ be given.
  ~\begin{enumerate}
  \item If $\treval{\sigma_1}{l}{e}{\sigma_1'}{T}$ and
    $\cp{\sigma_2}{T}{\sigma_2'}{T'}$ then
    $\treval{\sigma_2}{l}{e}{\sigma_2'}{T'}$.
  \item If $\trevals{\sigma_1}{x}{L_1}{e}{\sigma_1'}{L_1'}{\Ts}$ and
    $\cps{\sigma_2}{x}{L_2}{e}{\Ts}{\sigma_2'}{L_2'}{\Ts'}$ then
    $\trevals{\sigma_2}{x}{L_2}{e}{\sigma_2'}{L_2'}{\Ts'}$
  \end{enumerate}
\end{theorem}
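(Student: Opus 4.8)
The plan is to prove both statements by a single simultaneous induction on the structure of the adaptation derivations --- $\cp{\sigma_2}{T}{\sigma_2'}{T'}$ for part~1 and $\cps{\sigma_2}{x}{L_2}{e}{\Ts}{\sigma_2'}{L_2'}{\Ts'}$ for part~2. The adaptation rules are syntax-directed on the cached trace $T$ (respectively on the multiset $L_2$), and the traced semantics maps each expression form to a uniquely-shaped trace, so in every case I first invert the traced-evaluation hypothesis $\treval{\sigma_1}{l}{e}{\sigma_1'}{T}$: the shape of $T$ pins down the shape of $e$ and exposes exactly the subderivations needed to feed the induction hypotheses.

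For the leaf cases the reconstruction is immediate. If $T = l\gets t$ then inversion forces $e = t$, and applying the term rule of the traced semantics in $\sigma_2$ gives the result; the projection case is analogous. For $T = T_1;T_2$, inversion gives $e = \letin{x=e_1}{e_2}$ with subderivations for $e_1$ and $e_2[l'/x]$, the adaptation rule produces $T' = T_1';T_2'$ with two sub-adaptations chained through an intermediate store, and two uses of the part~1 hypothesis rebuild the $\letin$ derivation. Conditionals split according to the two adaptation rules: if $\sigma_2$ still assigns the recorded boolean $b$ to the test label, I recurse on the subtrace with the part~1 hypothesis and reapply the traced conditional rule; if the test flips to $b'$, the adaptation rule already carries a fresh traced evaluation $\treval{\sigma_2}{l}{e_{b'}}{\sigma_2'}{T'}$, so the traced conditional rule applies directly and no hypothesis is needed.

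The comprehension case (and the identical case for $\summ$) reduces part~1 to part~2: inversion exposes $\trevals{\sigma_1}{x}{\sigma_1(l')}{e}{\sigma_1'}{L_1'}{\Ts}$, the adaptation premise is the matching $\cps$ judgment over $\sigma_2(l')$, and part~2 returns a traced iteration over $\sigma_2(l')$ from which the comprehension rule rebuilds the trace. Part~2 runs by the same induction: the empty rule and the fresh-iteration rule are immediate (the latter already carries a traced evaluation of $e[l/x]$); the $\oplus$-split rule combines two hypothesis instances, both re-using the \emph{same} $\Ts$, with $\smerge_{\sigma_2}$ matching on both sides; and the re-use rule for a cached iteration $[l]T\in\Ts$ is handled via \refLem{fetch-trace}, which recovers a subderivation $\treval{\sigma_1}{\result(T)}{e[l/x]}{\sigma''}{T}$, to which the part~1 hypothesis applies against the premise $\cp{\sigma_2}{T}{\sigma_2'}{T'}$; then \refLem{cp-results-equal} together with \refLem{result-tech} identifies $\result(T') = \result(T)$, which is precisely the result label recorded by the $\cps$ rule and the fresh label expected by the traced singleton-iteration rule.

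I expect the main obstacle to be the bookkeeping in the iteration case: the $\oplus$-decomposition that drives $\cps$ is indexed by the \emph{new} multiset $L_2$ rather than by the cached set $\Ts$ (built from the old $L_1$), re-used and freshly-evaluated iterations may be interleaved arbitrarily, and the reconstructed traced-evaluation derivation must still satisfy the fresh-label side conditions even though it re-uses the internal label $\result(T)$ of each cached subtrace. This last point relies on the standing invariant that the store being adapted does not mention the trace's internal labels; granting that, the remaining steps are routine inversion and re-application of the traced-evaluation rules.
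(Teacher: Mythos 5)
Your proposal matches the paper's own proof: both proceed by induction on the adaptation derivation ($\curvearrowright$ for part~1, $\curvearrowright^\star$ for part~2) with inversion on the traced-evaluation hypothesis, reducing the comprehension/sum cases to part~2 and handling the cached-iteration case via \refLem{fetch-trace} together with \refLem{cp-results-equal}. Your explicit remark about the fresh-label side condition when re-using $\result(T)$ is a point the paper's write-up passes over silently, but it does not change the structure of the argument.
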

\begin{proof}
  Induction on the structure of the second derivation, with inversion
  on the first derivation.  \refLem{fetch-trace} is
  needed in part (2) to deal with the adaptation case where $[l]T \in
  \Ts$ holds.

  For part 1, the cases are as follows:
  \begin{itemize}
  \item If the second derivation is of the form
    \[\infer{\cp{\sigma_2}{l\gets t}{\sigma_2[l:=\op(t,\sigma_2)]}{l
        \gets t}}{}\]
    then the first must be of the form
    \[\infer{\treval{\sigma_1}{l}{t}{\sigma_1[l:=\op(t,\sigma_1)]}{l
        \gets t}}{}\]
    and so we can immediately conclude
    \[\infer{\treval{\sigma_2}{l}{t}{\sigma_2[l:=\op(t,\sigma_2)]}{l
        \gets t}}{}\]

  \item If the second derivation is  of the form
    \[\infer{\cp{\sigma_2}{l \gets
        \trproj{i}{l'}{l_i}}{\sigma_2[l:=\sigma_2(l_i')]}{l \gets
        \trproj{i}{l'}{l_i'}}}{\sigma_2(l') = (l_1',l_2')}\]
    then the first derivation is of the form
    \[\infer{\treval{\sigma_1}{l}{\pi_i(l')}{\sigma_1[l:=\sigma_1(l_i)]}{l
        \gets \trproj{i}{l'}{l_i}}}{\sigma_1(l') = (l_1,l_2)}\]
    and so we can immediately conclude
    \[\infer{\treval{\sigma_2}{l}{\pi_i(l')}{\sigma_2[l:=\sigma_2(l_i')]}{l
        \gets \trproj{i}{l'}{l_i'}}}{\sigma_2(l') = (l_1',l_2')}\]
  \item If the second derivation is of the form
    \[
    \infer{\cp{\sigma_2}{T_{11};T_{12}}{\sigma_2''}{T_{21};T_{22}}}
    {\cp{\sigma_2}{T_{11}}{\sigma_2'}{T_{21}} &
      \cp{\sigma_2'}{T_{12}}{\sigma_2''}{T_{22}}}\] 
    then the first derivation must be  of the form
    \[\infer{\treval{\sigma_1}{l}{\letin{x=e_1}{e_2}}{\sigma_1''}{T_{11};T_{12}}}
    {\treval{\sigma_1}{l'}{e_1}{\sigma_1'}{T_{11}} &
      \treval{\sigma_1'}{l}{e_2[l'/x]}{\sigma_1''}{T_{12}}}
    \]
Then by induction
    we have $\treval{\sigma_2}{l'}{e_1}{\sigma_2'}{T_{21}}$ and
    $\treval{\sigma_2'}{l}{e_2[l/x]}{\sigma_2''}{T_{22}}$, so can
    conclude
    \[\infer{\treval{\sigma_2}{l}{\letin{x=e_1}{e_2}}{\sigma_2''}{T_{21};T_{22}}}
    {\treval{\sigma_2}{l'}{e_1}{\sigma_2'}{T_{21}} &
      \treval{\sigma_2'}{l}{e_2[l'/x]}{\sigma_2''}{T_{22}} }\]

  \item If the second derivation is of the form
    \[
    \infer{\cp{\sigma_2}{\trconde[l]{l'}{b}{T_1}{e_\ktrue}{e_\kfalse}}{\sigma_2'}{\trconde[l]{l'}{b}{T_2}{e_\ktrue}{e_\kfalse}}}{
      \sigma_2(l) = b & \cp{\sigma_2}{T_1}{\sigma_2'}{T_2} }
    \]
then the first derivation must be of the form
    \[
    \infer{\treval{\sigma_1}{l}{\ifthenelse{l'}{e_\ktrue}{e_\kfalse}}{\sigma_1'}{\trconde[l]{l'}{b}{T_1}{e_\ktrue}{e_\kfalse}}}
    { \sigma_1(l') = b & \treval{\sigma_1}{l}{e_b}{\sigma_1'}{T_1} }
    \]
    We proceed by induction, obtaining
    $\treval{\sigma_2}{l}{e_b}{\sigma_2'}{T_2}$ and concluding
    \[
    \infer{\treval{\sigma_2}{l}{\ifthenelse{l'}{e_\ktrue}{e_\kfalse}}{\sigma_2'}{\trconde[l]{l'}{b}{T_2}{e_\ktrue}{e_\kfalse}}}{
      \sigma_2(l) = b & \treval{\sigma_2}{l}{e_b}{\sigma_2'}{T_2} }
    \]
\item 
    If the second derivation is of the form:
    \[
    \infer{\cp{\sigma_2}{\trconde[l]{l'}{b}{T_1}{e_\ktrue}{e_\kfalse}}{\sigma_2'}{\trconde[l]{l'}{b}{T_2}{e_\ktrue}{e_\kfalse}}}{
      b \neq \sigma_2(l) = b' &
      \treval{\sigma_2}{l}{e_{b'}}{\sigma_2'}{T_2} }
    \]
then again the first derivation must be of the form
    \[
    \infer{\treval{\sigma_1}{l}{\ifthenelse{l'}{e_\ktrue}{e_\kfalse}}{\sigma_1'}{\trconde[l]{l'}{b}{T_1}{e_\ktrue}{e_\kfalse}}}
    { \sigma_1(l') = b & \treval{\sigma_1}{l}{e_b}{\sigma_1'}{T_1} }
    \]
    and we may immediately conclude:
    \[
    \infer{\treval{\sigma_2}{l}{\ifthenelse{l'}{e_\ktrue}{e_\kfalse}}{\sigma_2'}{\trconde[l]{l'}{b'}{T_2}{e_\ktrue}{e_\kfalse}}}
    {\sigma_2(l) = b' & \treval{\sigma_2}{l}{e_{b'}}{\sigma_2'}{T_2}}
    \]
  \item If the second derivation is  of the form
    \[\small
    \begin{array}{c}
\infer{\cp{\sigma_2}{l \gets
        \trcompe{l'}{\Ts_1}{x.e}}{\sigma_2'[l:=\mflatten
        \sigma_2'[L_2]]}{l \gets \trcompe{l'}{\Ts_2}{x.e}}}
    {\cps{\sigma_2}{x}{\sigma_2(l')}{e}{\Ts_1}{\sigma_2'}{L_2}{\Ts_2}}
  \end{array}
    \]
    then the first derivation must be of the form
    \[
    \infer{\treval{\sigma_1}{l}{\flatten\{e \mid x \in
        l'\}}{\sigma_1'[l:=\mflatten \sigma'[L_1]]}{l \gets
        \trcompe{l'}{\Ts_1}{x.e}}}{
      \trevals{\sigma_1}{x}{\sigma_1(l')}{e}{\sigma_1'}{L_1}{\Ts_1} }
    \]
    By induction hypothesis (2), we have that
    $\trevals{\sigma_2}{x}{\sigma_2(l')}{e}{\sigma_2'}{L_2}{\Ts_2}$
    holds, so can conclude:
    \[
    \infer{\treval{\sigma_2}{l}{\flatten\{e \mid x \in
        l'\}}{\sigma_2'[l:=\mflatten \sigma'[L_2]]}{l \gets
        \trcompe{l'}{\Ts_2}{x.e}}}{\trevals{\sigma_2}{x}{\sigma_2(l')}{e}{\sigma_2'}{L_2}{\Ts_2}}
    \]
  \item If the second derivation is of the form
     \[\small
    \begin{array}{c}
\infer{\cp{\sigma_2}{l \gets
        \trsume{l'}{\Ts_1}{x.e}}{\sigma_2'[l:=\summ
        \sigma_2'[L_2]]}{l \gets \trsume{l'}{\Ts_2}{x.e}}}
    {\cps{\sigma_2}{x}{\sigma_2(l')}{e}{\Ts_1}{\sigma_2'}{L_2}{\Ts_2}}
  \end{array}
    \]
    the reasoning is similar to the previous case.
  \end{itemize}

  For part (2), the proof is by induction on the second derivation:
  \begin{itemize}
  \item If the derivation is of the form:
    \[\infer{\cps{\sigma_2}{x}{\emptyset}{e}{\Ts_1}{\sigma_2}{\emptyset}{\emptyset}}{}\]
    then we can immediately conclude
    \[\infer{\trevals{\sigma_2}{x}{\emptyset}{e}{\sigma_2}{\emptyset}{\emptyset}}{}\]
  \item If the derivation is of the form:
  \[\infer{\cps{\sigma_2}{x}{L_{21} \cup L_{22}}{e}{\Ts_1}{\sigma_{21} \uplus_{\sigma_2} \sigma_{22}}{L_{21}' \cup L_{22}'}{\Ts_{21} \cup \Ts_{22}}}
{
  \begin{array}{l}
    \cps{\sigma_2}{x}{L_{21}}{e}{\Ts_1}{\sigma_{21}}{L_{21}' }{\Ts_{21}}
    \\ \cps{\sigma_2}{x}{L_{22}}{e}{\Ts_1}{\sigma_{22}}{L_{22}' }{\Ts_{22}}
  \end{array}
}
\]
then we proceed by induction, concluding:
\[\infer{\trevals{\sigma_2}{x}{L_{21}\cup L_{22}}{e}{\sigma_{21} \uplus_{\sigma_2} \sigma_{22}}{L_{21}' \cup L_{22}'}{\Ts_{21} \cup \Ts_{22}}}{
  \begin{array}{l}
\trevals{\sigma_2}{x}{L_{21}}{e}{\sigma_{21}}{L_{21}'}{\Ts_{21}}  \\ 
\trevals{\sigma_2}{x}{L_{22}}{e}{\sigma_{22}}{L_{22}'}{\Ts_{22}}  
\end{array}
}
\]
\item If the derivation is of the form
  \[\infer{\cps{\sigma_2}{x}{\{l:m\}}{e}{\Ts_1}{\sigma_2'}{\{l':m\}}{\{[l]T_2:m\}}}
  { l \notin\inputs(\Ts_1) & l' \fresh &
    \treval{\sigma_2}{l'}{e[l/x]}{\sigma_2'}{T_2} }
  \]
  then we can immediately conclude:
  \[\infer{\trevals{\sigma_2}{x}{\{l:m\}}{e}{\sigma_2'}{\{l':m\}}{\{[l]T_2:m\}}}
  { \treval{\sigma_2}{l'}{e[l/x]}{\sigma_2'}{T_2} & l' \fresh }
  \]
\item If the derivation is of the form:
  \[\infer{\cps{\sigma_2}{x}{\{l:m\}}{e}{\Ts_1}{\sigma_2'}{\{\result(T_2):m\}}{\{[l]T_2:m\}}}
  { [l]T_1 \in \Ts_1 & \cp{\sigma_2}{T_1}{\sigma_2'}{T_2} }
  \]
  then observe that $\result(T_1) = \result(T_2)$ by
  \refLem{cp-results-equal}.  Moreover, by \refLem{fetch-trace}, we
  have $\treval{\sigma_1}{\result(T_1)}{e[l/x]}{\sigma_1''}{T_1}$, so
  by induction we have
  $\treval{\sigma_2}{\result(T_1)}{e[l/x]}{\sigma_2'}{T_2}$, and we
  can conclude
  \[
  \infer{\trevals{\sigma_2}{x}{\{l:m\}}{e}{\sigma_2'}{\{\result(T_2):m\}}{\{[l]T_2:m\}}}{
    \treval{\sigma_2}{\result(T_1)}{e[l/x]}{\sigma_2'}{T_2} }
  \]
\end{itemize}
\end{proof}
\end{tr}

\begin{sub}
  \begin{theorem}[Partial fidelity]\labelThm{partial-fidelity}
    If $\treval{\sigma_1}{l}{e}{\sigma_1'}{T}$ and
    $\cp{\sigma_2}{T}{\sigma_2'}{T'}$ then
    $\treval{\sigma_2}{l}{e}{\sigma_2'}{T'}$.
  \end{theorem}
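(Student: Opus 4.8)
The plan is to prove this by generalizing to a mutual induction that also covers the iteration judgment: alongside the stated claim (1) one proves (2) that if $\trevals{\sigma_1}{x}{L_1}{e}{\sigma_1'}{L_1'}{\Ts}$ and $\cps{\sigma_2}{x}{L_2}{e}{\Ts}{\sigma_2'}{L_2'}{\Ts'}$ then $\trevals{\sigma_2}{x}{L_2}{e}{\sigma_2'}{L_2'}{\Ts'}$. The generalization is forced because the comprehension and summation cases of (1) unfold into the iteration judgment. The induction is on the structure of the \emph{adaptation} derivation ($\curvearrowright$ or $\curvearrowright^\star$), with inversion on the traced-evaluation derivation: every adaptation rule can only have arisen from a trace $T$ (or trace set $\Ts$) produced by a unique traced-evaluation rule, so inversion pins down the shape of $e$ and of the first derivation's subderivations.

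First I would dispatch the ``structural'' cases — assignment traces $l \gets t$, projection traces, and sequential composition $\trlet{T_1}{T_2}$ — where the adaptation rule and the traced-evaluation rule mirror each other exactly; for sequential composition one applies the induction hypothesis (1) twice, threading the intermediate store. For conditional traces there are two subcases. If the tested label still carries the recorded boolean value $b$ in $\sigma_2$, the adaptation rule recurses into the subtrace, and one closes the case by the induction hypothesis, reassembling the traced-evaluation rule for the conditional on $\sigma_2$. If the boolean has flipped to $b'$, the adaptation rule's premise is literally a from-scratch traced evaluation $\treval{\sigma_2}{l}{e_{b'}}{\sigma_2'}{T'}$ of the other branch, which is exactly what is needed to build the traced-evaluation derivation for $\ifthenelse{l'}{e_\ktrue}{e_\kfalse}$ on $\sigma_2$ — no induction hypothesis is even required. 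The comprehension and summation cases reduce immediately to the iteration claim (2).

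For the iteration judgment, the $\emptyset$ case is trivial and the $\oplus$-splitting case is a double application of the induction hypothesis. The case of a fresh element ($l \notin \inputs(\Ts)$) again has the adaptation rule carrying a from-scratch traced evaluation $\treval{\sigma_2}{l'}{e[l/x]}{\sigma_2'}{T'}$ as a premise, so it is immediate. The crux — and the step I expect to be the main obstacle — is the case where a cached subtrace $[l]T \in \Ts$ is reused: here the adaptation rule supplies $\cp{\sigma_2}{T}{\sigma_2'}{T'}$, and to apply the induction hypothesis (1) to it I need a traced-evaluation derivation that produced this cached $T$ in the first place. That is exactly \refLem{fetch-trace}, which extracts from $\trevals{\sigma_1}{x}{L_1}{e}{\sigma_1'}{L_1'}{\Ts}$ that $\treval{\sigma_1}{\result(T)}{e[l/x]}{\sigma_1''}{T}$ for some $\sigma_1''$; the induction hypothesis then yields $\treval{\sigma_2}{\result(T)}{e[l/x]}{\sigma_2'}{T'}$. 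To reassemble the iteration rule I must also know that the result label is preserved, $\result(T) = \result(T')$, which is \refLem{cp-results-equal}. With these two lemmas the case goes through, and the multiplicity bookkeeping — adaptation ignores the multiplicity of a cached trace but reuses $m$ for the result label and new trace — matches the traced-evaluation iteration rule verbatim.
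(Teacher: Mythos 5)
Your proposal is correct and follows essentially the same route as the paper's proof: mutual induction on the adaptation derivation (with the iteration judgment as the second claim), inversion on the traced-evaluation derivation, with the cached-trace case handled exactly via \refLem{fetch-trace} and \refLem{cp-results-equal}. Nothing of substance differs.
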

\end{sub}

However, partial fidelity is rather weak since there is no guarantee
that $T$ can be adapted to a given $\sigma_2$.  To formalize and prove
total fidelity, we need to be careful about what changed inputs
$\sigma_2$ we consider.  Obviously, $\sigma_2$ must be type-compatible
with $T$ in some sense; for instance we cannot expect a trace such as
$l \gets l_1+l_2$ to adapt to an input in which $l_1 = \ktrue$.  Thus,
we need to set up a type system for stores and traces and prove
type-soundness for traced evaluation and adaptation.

More subtly, if we have a trace $l \gets t$ that writes to $l$ and we
try to evaluate it on a different store that \emph{already defines}
$l$, perhaps at a different type, then the adaptation step may
succeed, but the result store may be ill-formed, leading to problems
later on.  In general, we need to restrict attention to altered stores
$\sigma_2$ that \emph{preserve the types of labels read by $T$} and
\emph{avoid labels written by $T$}.  

We say that $\sigma$
\emph{matches $\Psi$ avoiding $S$} (written
$\matchavoids{\sigma}{\Psi}{S}$) if $\wfstore{\Psi'}{\sigma}$ for some
$\Psi' \supseteq \Psi$ with $\dom(\Psi') \cap S = \emptyset$.  That
is, $\sigma$ satisfies the type information in $\Psi$, and may have
other labels, but the other labels cannot overlap with $S$.
\begin{tr}
  Moreover, when $L$ is a collection of labels
  $\{l_1:m_1,\ldots,l_n:m_n\}$, we sometimes write $L{:}\tau$ as an
  abbreviation for $l_1:\tau,\ldots,l_n:\tau$; thus,
  $\matchavoids{\sigma}{\Psi,L{:}\tau}{S}$ stands for
  $\matchavoids{\sigma}{\Psi,l_1{:}\tau,\ldots,l_n{:}\tau}{S}$.
\end{tr}

We also need to be careful to avoid making the type system
too specific about the labels used internally by $T$, because these
may change when $T$ is adapted.  We therefore introduce a typing
judgment for traces $\wftr{\Psi}{T}{l}{\tau}$, meaning ``In a store
matching type $\Psi$, trace $T$ produces an output $l$ of type
$\tau$.''  Trace typing does not expose the types of labels created by
$T$ for internal use in the rules for let and comprehension.  The
rules are shown in \refFig{trace-wf}, along with the auxiliary
judgment $\wftrs{\Psi}{\tau}{\Ts}{\tau'}$, meaning ``In a store
matching $\Psi$, the labeled traces $\Ts$ operate on inputs of type
$\tau$ and produce outputs of type $\tau'$''.

\begin{figure}
  \[\small
  \begin{array}{c}
    \infer{\wftr{\Psi}{l \gets t}{l}{\tau}}{\wfterm{\Psi}{t}{\tau}}
    \quad
    \infer{\wftr{\Psi}{l \gets \trproj{i}{l'}{l_i}}{l}{\tau_i}}
    {\Psi(l') = \tau_1 \times \tau_2 
    }
    \smallskip\\
    \infer{\wftr{\Psi}{T_1;T_2}{l}{\tau}}{\wftr{\Psi}{T_1}{l'}{\tau'} & \wftr{\Psi,l'{:}\tau'}{T_2}{l}{\tau}}
    \smallskip\\
    \infer{\wftr{\Psi}{\trconde[l]{l'}{b}{T}{e_\ktrue}{e_\kfalse}}{l}{\tau}}{
      \Psi(l') = \boolTy & \wftr{\Psi}{T}{l}{\tau}
      &
      \wf{\Psi}{e_\ktrue}{\tau}
      &
      \wf{\Psi}{e_\kfalse}{\tau}
    }
    \smallskip\\
    \infer{\wftr{\Psi}{l \gets \trcompe{l'}{\Ts}{x.e}}{l}{\{\tau\}}}{
      \Psi(l') = \{\tau'\} & 
      \wftrs{\Psi}{\tau'}{\Ts}{\{\tau\}} & 
      \wf{\Psi,x{:}\tau'}{e}{\{\tau\}}}
    \smallskip\\
    \infer{\wftr{\Psi}{l \gets \trsume{l'}{\Ts}{x.e}}{l}{\intTy}}{
      \Psi(l') = \{\tau'\} & 
      \wftrs{\Psi}{\tau'}{\Ts}{\intTy}& 
      \wf{\Psi,x{:}\tau'}{e}{\intTy}}
    \smallskip\\
    \infer{\wftrs{\Psi}{\tau}{\emptyset}{\tau'}}{}
    \quad
    \infer{\wftrs{\Psi}{\tau}{\{[l]T:m\}}{\tau'}}{ \wftr{\Psi,l{:}\tau}{T}{l'}{\tau'}}
    \smallskip\\
    \infer{\wftrs{\Psi}{\tau}{\Ts_1\oplus \Ts_2}{\tau'}}{\wftrs{\Psi}{\tau}{\Ts_1}{\tau'} & \wftrs{\Psi}{\tau}{\Ts_2}{\tau'}}
  \end{array}
  \]
  \caption{Trace well-formedness}\labelFig{trace-wf}
\end{figure}

We now show that for well-formed expressions and input stores, traced
evaluation can construct well-formed output stores and traces avoiding
any finite set of labels.  Here, we need label-avoidance constraints
to avoid label conflicts between $\sigma_1$ and $\sigma_2$ in the
$\Downarrow^\star$-rule for $\Ts_1 \oplus \Ts_2$.  We also need these
constraints later in proving \refThm{cp-exist}.  Next we show traced
evaluation is sound, that is, produces well-formed traces and states.

\begin{tr}
\begin{theorem}[Traceability]\labelThm{treval-exist}
  Let $S$ be a finite set of labels, and $\Psi,e,\tau,l,\sigma$ be
  arbitrary.
  \begin{enumerate}
  \item If $\wf{\Psi}{e}{\tau}$ and $\matchavoids{\sigma}{\Psi}{S \cup
      \{l\}}$ then there exists $\sigma',T$ such that
    $\treval{\sigma}{l}{e}{\sigma'}{T}$ and
    $\matchavoids{\sigma'}{\Psi,l{:}\tau}{S}$.
  \item  If $\wf{\Psi,x{:}\tau}{e}{\tau'}$ and
    $\matchavoids{\sigma}{\Psi,L{:}\tau}{S \cup L'}$ then there exists
    $\sigma',\Ts$ such that
    $\trevals{\sigma}{x}{L}{e}{\sigma'}{L'}{\Ts}$ and
    $\matchavoids{\sigma'}{\Psi,L'{:}\tau'}{S}$
  \end{enumerate}
\end{theorem}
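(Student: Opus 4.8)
The plan is to prove (1) and (2) by a single simultaneous induction, with (1) proceeding by cases on the A-normalized expression $e$ and (2), for each fixed body $e$, by a subsidiary induction on the multiset $L$ under $\oplus$-decomposition; since the comprehension/sum cases of (1) appeal to (2) only on a strictly smaller body, and the singleton case of (2) appeals back to (1) on the strictly smaller expression $e[l/x]$, the whole argument is well-founded on expression size. Apart from inversion on the relevant typing derivation, I will use three routine auxiliary facts: a substitution lemma for expression typing ($\wf{\Psi,x{:}\tau'}{e}{\tau}$ and $\Psi'(l')=\tau'$ give $\wf{\Psi'}{e[l'/x]}{\tau}$); that $\op$ preserves types ($\wfterm{\Psi}{t}{\tau}$ and $\wfstore{\Psi}{\sigma}$ give $\wfcon{\Psi}{\op(t,\sigma)}{\tau}$); and that in $\matchavoids{\sigma}{\Psi}{S}$ the witness $\Psi'\supseteq\Psi$ satisfies $\dom(\Psi')=\dom(\sigma)$ and $\dom(\sigma)\cap S=\emptyset$, so every avoided label is genuinely absent from $\sigma$.

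For (1) I go through the rules of \refFig{trace-semantics}. A term $t$ uses the term rule directly: $l\notin\dom(\sigma)$, so $\sigma[l:=\op(t,\sigma)]$ is a one-cell extension, and $\Psi',l{:}\tau$ witnesses the postcondition via the $\op$-typing lemma. For $\letin{x=e_1}{e_2}$ I pick $l'$ fresh for $\dom(\sigma)\cup S\cup\{l\}$, apply the IH to $e_1$ at $l'$ avoiding $S\cup\{l\}$, then the IH to $e_2[l'/x]$ at $l$ avoiding $S$ (substitution lemma), and take $\trlet{T_1}{T_2}$. For $\pi_i(l')$ and $\ifthenelse{l'}{e_\ktrue}{e_\kfalse}$, store well-formedness forces $\sigma(l')$ to be a pair (resp.\ a boolean $b$), so the matching rule applies, recursing on the selected branch where needed. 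For $\flatten\{e_0\mid x\in l'\}$ and $\summ\{e_0\mid x\in l'\}$, well-formedness makes $\sigma(l')$ a multiset $L$ of labels all of one type; I choose $L'$ to be a fresh relabeling of $L$ (distinct new labels, with matching multiplicities) avoiding $S\cup\{l\}$, invoke (2) with this $L'$, and then check that $\mflatten\sigma'[L']$ (resp.\ $\summ\sigma'[L']$) is well-typed at the expected type so that $\Psi'',l{:}\tau$ witnesses the conclusion.

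For (2) the $\emptyset$ case is immediate and the singleton case $L=\{l_0{:}m\}$ (which forces $L'=\{l'{:}m\}$) reduces to (1) on $e[l_0/x]$ at $l'$ avoiding $S$. The $L_1\oplus L_2$ case is the crux. The difficulty is twofold: I must split the prescribed $L'$ as $L_1'\oplus L_2'$ with multiplicities matching $L_1,L_2$, and I must ensure the two recursive evaluations yield \emph{domain-disjoint} extensions of $\sigma$ so that $\smerge_\sigma$ is defined. The resolution is to sequentialize the otherwise-symmetric rule: apply the IH to $L_1$ avoiding $S\cup\dom(L_2')$, yielding $\sigma_1$ whose set of fresh labels $N_1=\dom(\sigma_1)\setminus\dom(\sigma)$ is now a determined finite set; then apply the IH to $L_2$ avoiding $S\cup N_1$, yielding $\sigma_2$ with fresh labels $N_2$ disjoint from $S\cup N_1$, hence $N_1\cap N_2=\emptyset$. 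So $\sigma_1,\sigma_2$ are orthogonal extensions of $\sigma$ and the $\oplus$-rule fires. For the final obligation $\matchavoids{\sigma_1\smerge_\sigma\sigma_2}{\Psi,L'{:}\tau'}{S}$ I will use a short merging lemma: since every rule writes only fresh destinations and never overwrites an existing cell, $\sigma$ is a well-formed sub-store of both $\sigma_1$ and $\sigma_2$ in a way compatible with the acyclicity order, so the two store typings delivered by the IH can be glued along their common part to a single typing of $\sigma\uplus\sigma_1'\uplus\sigma_2'$ that extends $\Psi,L'{:}\tau'$ and still avoids $S$.

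I expect this $\oplus$ case to be the only real obstacle; every other case is a direct appeal to the corresponding operational rule together with the IH and the routine typing lemmas. Two points deserve care in the merging step. First, store typings are not unique (an empty collection admits any element type), so the two typings must be reconciled rather than assumed equal; choosing one consistent extension of $\Psi,L'{:}\tau'$ (equivalently, fixing the element type of each freshly created empty collection up front) suffices. Second, part (2) is to be read with $L'$ ranging over fresh relabelings of $L$ (a multiplicity-preserving bijection on underlying label sets), which is exactly the form in which it is ever invoked from (1); for such $L'$ the partition $L'=L_1'\oplus L_2'$ tracking a given partition of $L$ always exists.
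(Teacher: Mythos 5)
Your proposal is correct and follows essentially the same route as the paper's own proof: part (1) by induction over the (syntax-directed) typing of $e$, part (2) by an inner induction on $L$, with the $\oplus$ case handled by exactly the paper's trick of sequentializing the avoid sets (the second recursive call avoids $S$ together with the labels freshly allocated by the first, so the two extensions are orthogonal and $\smerge_\sigma$ is defined). Your extra remarks on gluing the two store typings and on reading $L'$ as a fresh multiplicity-preserving relabeling are just careful spellings-out of points the paper leaves implicit, not a different argument.
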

\begin{proof}
  For part (1), proof is by induction on the structure of derivations
  of $\wf{\Psi}{e}{\tau}$.
  \begin{itemize}
  \item If the expression is a term $t$ then we have
    \[\infer{\wf{\Psi}{t}{\tau}}{\wfterm{\Psi}{t}{\tau}}\]
    Hence, $\wfcon{\Psi}{\op(\sigma,t)}{\tau}$ so
    \[\infer{\treval{\sigma}{l}{t}{\sigma[l:=\op(\sigma,t)]}{l \gets
        t}}{}\]
    where $\matchavoids{\sigma[l:=\op(\sigma,t)]}{\Psi,l{:}\tau}{S}$.
  \item If the derivation is of the form
    \[\infer{\wf{\Psi}{\pi_i(l')}{\tau_i}}{\wf{\Psi}{l'}{\tau_1\times\tau_2}}\]
    then we know $\wfcon{\Psi}{\sigma(l')}{\tau_1\times\tau_2}$ so we
    must have $\sigma(l') = (l_1,l_2)$.
    Hence, we can derive
    \[
    \infer{\treval{\sigma}{l}{\pi_i(l')}{\sigma[l:=\sigma(l_i)]}{l
        \gets \trproj{i}{l'}{l_i}}}{\sigma(l') = (l_1,l_2)}
    \]
    where $\matchavoids{\sigma[l:=\sigma(l_i)]}{\Psi,l{:}\tau_i}{S}$.
  \item If the derivation is of the form
\[\infer{\wf{\Psi}{\letin{x=e_1}{e_2}}{\tau}}{\wf{\Psi}{e_1}{\tau'} & \wf{\Psi,x{:}\tau'}{e_2}{\tau}}\]
then choose a fresh $l'\not\in \dom(\sigma) \cup S \cup \{l\}$. By
induction we have $\treval{\sigma}{l'}{e_1}{\sigma'}{T_1}$ where
$\matchavoids{\sigma'}{\Psi,l'{:}\tau'}{S \cup \{l\}}$.  Substituting
  $l'$ for $x$, we have $\wf{\Psi,l'{:}\tau'}{e_2[l/x]}{\tau}$ so by
  induction we also have
  $\treval{\sigma'}{l}{e_2[l'/x]}{\sigma''}{T_2}$ where $\matchavoids{\sigma''}{\Psi,l'{:}\tau',l{:}\tau}{S}$.  Finally we can derive
\[
\infer{\treval{\sigma}{l}{\letin{x=e_1}{e_2}}{\sigma''}{T_1;T_2}}{
l' \fresh & 
\treval{\sigma}{l'}{e_1}{\sigma'}{T_1} & 
\treval{\sigma'}{l}{e_2[l'/x]}{\sigma''}{T_2}}
\]
and $\matchavoids{\sigma}{\Psi,l{:}\tau}{S}$.
\item If the derivation is of the form
  \[\infer{\wf{\Psi}{\ifthenelse{l'}{e_\ktrue}{e_\kfalse}}{\tau}}{
    \Psi(l') = \boolTy & 
  \wf{\Psi}{e_\ktrue}{\tau} &
  \wf{\Psi}{e_\kfalse}{\tau}}
\]
then we must have $\sigma(l') = b \in \Bool$.  By induction, we obtain 
$\treval{\sigma}{l}{e_b}{\sigma'}{T}$ where $\matchavoids{\sigma'}{\Psi,l{:}\tau}{S}$.  Thus, we can conclude
\[
\infer{\treval{\sigma}{l}{\ifthenelse{l'}{e_\ktrue}{e_\kfalse}}{\sigma'}{\trconde[l]{l'}{b}{T}{e_\ktrue}{e_\kfalse}}}{
\sigma(l) = b & 
\treval{\sigma}{l}{e_b}{\sigma'}{T}
}
\]

\item If the derivation is of the form
\[
\infer{\wf{\Psi}{\flatten \{e \mid x\in l\}}{\{\tau\}}}
{\Psi(l) = \{\tau'\} & 
\wf{\Psi,x{:}\tau' }{e}{\{\tau\}}
}
\]
then we must have $\sigma(l) = L$ where $\wfcon{\Psi}{L'}{\{\tau'\}}$.
 Then there exist
$\sigma',L',\Ts$ such that
$\trevals{\sigma}{x}{\sigma(l)}{e}{\sigma'}{L'}{\Ts}$ and
$\matchavoids{\sigma }{\Psi,L'{:}\{\tau'\}}{\{l'\} \cup S}$.  Hence we can
conclude
\[
\infer{\treval{\sigma}{l'}{\flatten \{e \mid x\in l\}}{\sigma'[l':=\mflatten \sigma'[L']]}{l' \gets \trcompe{l}{\Ts}{x.e}} }{
\trevals{\sigma}{x}{\sigma(l)}{e}{\sigma'}{L'}{\Ts}
}
\]
and $\matchavoids{\sigma}{\Psi,l'{:}\{\tau'\}}{S}$.
\item The  case for $\summ\{e \mid x \in l\}$ is similar.
  \end{itemize}

For part (2), the proof is by induction on $L$:
\begin{itemize}
\item If $L = \emptyset$ then we can immediately conclude
\[\infer{\trevals{\sigma}{x}{\emptyset}{e}{\sigma}{\emptyset}{\emptyset}}{}\]
where $\matchavoids{\sigma}{\Psi}{S}$.
\item If $L = L_1 \oplus L_2$ then by induction we have
  $\trevals{\sigma}{x}{L_1}{e}{\sigma_1}{L_1'}{\Ts_1}$ where
  $\matchavoids{\sigma_1}{\Psi,L_1{:}\tau'}{S}$.  Moreover, we also
  have $\trevals{\sigma}{x}{L_2}{e}{\sigma_2}{L_2'}{\Ts_2}$ where
  $\matchavoids{\sigma_2}{\Psi,L_2{:}\tau'}{(\dom(\sigma_1)-\dom(\sigma))
    \cup S}$.  Thus, $\sigma_1 \uplus_\sigma \sigma_2$ exists and
  avoids $S$; hence,
\[
\infer{\trevals{\sigma}{x}{L_1 \oplus L_2}{e}{\sigma_1 \uplus_\sigma \sigma_2}{L_1' \oplus L_2'}{\Ts_1 \oplus \Ts_2}}{
\trevals{\sigma}{x}{L_1}{e}{\sigma_1}{L_1'}{\Ts_1}
&
\trevals{\sigma}{x}{L_2}{e}{\sigma_2}{L_2'}{\Ts_2}
}
\]
and $\matchavoids{\sigma_1 \uplus_\sigma \sigma_2}{\Psi,L_1 \cup
  L_2{:}\tau'}{S}$.
\item If $L = \{l:m\}$ then we can substitute to obtain
  $\wf{\Psi,l{:}\tau}{e[l/x]}{\tau'}$.  Choose $l'$ fresh for
  $\dom(\sigma) \cup S$ so that we have
  $\matchavoids{\sigma}{\Psi,l{:}\tau}{S \cup \{l'\}}$.  Then by
  induction we have $\treval{\sigma}{l'}{e[l/x]}{\sigma'}{T}$ where
  $\matchavoids{\sigma'}{\Psi,l{:}\tau,l'{:}\tau'}{S}$.  Then we can conclude
\[
\infer{\trevals{\sigma}{x}{\{l:m\}}{e}{\sigma'}{\{l':m\}}{\{[l]T:m\}}}
{l' \fresh & 
\treval{\sigma}{l'}{e[l/x]}{\sigma'}{T}
}
\]
since $\matchavoids{\sigma'}{\Psi,l'{:}\tau'}{S}$.
\end{itemize}
\end{proof}
\end{tr}

\begin{sub} 
\begin{theorem}[Traceability]\labelThm{treval-exist}
  Let $S$ be a finite set of labels, and $\Psi,e,\tau,l,\sigma$ be
  arbitrary.  If $\wf{\Psi}{e}{\tau}$ and
  $\matchavoids{\sigma}{\Psi}{S \cup \{l\}}$ then there exists
  $\sigma',T$ such that $\treval{\sigma}{l}{e}{\sigma'}{T}$ and
  $\matchavoids{\sigma'}{\Psi,l{:}\tau}{S}$.
\end{theorem}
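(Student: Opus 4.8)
The plan is to prove a strengthened statement by simultaneous induction, adding a companion claim for the iteration judgment, since the comprehension and sum cases of the expression typing rules recurse through $\Downarrow^\star$. Concretely, alongside the stated conclusion I would also establish: if $\wf{\Psi,x{:}\tau}{e}{\tau'}$ and $\matchavoids{\sigma}{\Psi,L{:}\tau}{S \cup L'}$ then there exist $\sigma',\Ts$ with $\trevals{\sigma}{x}{L}{e}{\sigma'}{L'}{\Ts}$ and $\matchavoids{\sigma'}{\Psi,L'{:}\tau'}{S}$. The proof then proceeds by induction on the typing derivation for the main part, with an inner induction on the multiset $L$ for the iteration part. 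The reason the statement is parametrized over an arbitrary finite avoidance set $S$ (rather than, say, just $\dom(\sigma)$) is precisely so the induction hypothesis is strong enough to keep freshly allocated labels from colliding across independent subcomputations.

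For the main part, each syntactic form corresponds to exactly one traced-evaluation rule, so the work is to check the rule's side conditions hold and to propagate the avoidance bookkeeping. For a term $t$, the store typing packaged in $\matchavoids{\sigma}{\Psi}{S\cup\{l\}}$ guarantees (by soundness of $\op$ on well-formed stores) that $\op(t,\sigma)$ is a well-formed constructor of type $\tau$, so we emit $l \gets t$ and observe the resulting store matches $\Psi,l{:}\tau$ while still avoiding $S$. For $\pi_i(l')$ we invert $\wfcon{\Psi'}{\sigma(l')}{\tau_1\times\tau_2}$ to learn $\sigma(l') = (l_1,l_2)$, so the projection rule applies. For $\letin{x=e_1}{e_2}$ we choose $l' \fresh$ avoiding $\dom(\sigma)\cup S\cup\{l\}$, apply the IH to $e_1$ with avoidance set $S\cup\{l\}$ to obtain $\sigma_1$ matching $\Psi,l'{:}\tau'$, then apply the IH to $e_2[l'/x]$ (using that substituting a label of the right type preserves typing) with avoidance set $S$. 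For the conditional we read off $\sigma(l') = b \in \Bool$ and recurse into $e_b$. For $\flatten\{e \mid x \in l'\}$ and $\summ\{e \mid x \in l'\}$ we read off that $\sigma(l')$ is a well-formed collection $L$ and invoke the iteration IH, handing it a fresh-label budget so the created result labels and the final label $l$ all avoid $S$.

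For the iteration part, the base case $L = \emptyset$ is immediate. For $L = \{l:m\}$ we substitute to get $\wf{\Psi,l{:}\tau}{e[l/x]}{\tau'}$, pick $l' \fresh$ for $\dom(\sigma)\cup S$, and apply the expression IH; the result store matches $\Psi,l'{:}\tau'$ and avoids $S$. For $L = L_1 \oplus L_2$ we apply the IH to $L_1$ with avoidance set $S$, obtaining $\sigma_1$, and then to $L_2$ with the enlarged avoidance set $S \cup (\dom(\sigma_1)\setminus\dom(\sigma))$, so that the labels freshly allocated by the two subruns are disjoint; this is exactly what makes the orthogonal merge $\sigma_1 \smerge_{\sigma} \sigma_2$ well-defined, and the merged store matches $\Psi,(L_1'\oplus L_2'){:}\tau'$ avoiding $S$.

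I expect the main obstacle to be this avoidance bookkeeping in the $\oplus$ case together with keeping the $\matchavoids{}{}{}$ invariant coherent: the delicate point is tracking, at each intermediate store, a witnessing store type $\Psi' \supseteq \Psi$ disjoint from the current avoidance set, and checking these witnesses remain mutually consistent as labels are added during the two subruns so that $\uplus$ of the extensions actually makes sense. Once that invariant is stated and maintained, all the other cases are routine rule-matching; the content of the theorem is really the disjointness discipline that guarantees traced evaluation never gets stuck on label clashes.
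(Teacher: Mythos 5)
Your proposal matches the paper's own proof essentially step for step: the same strengthened companion statement for the $\Downarrow^\star$ judgment (with avoidance set $S \cup L'$), the same induction on the typing derivation for expressions and on the multiset $L$ for iteration, and the same disjointness bookkeeping in the $L_1 \oplus L_2$ case, where the second subrun avoids $S \cup (\dom(\sigma_1)\setminus\dom(\sigma))$ so the orthogonal merge is defined. No gaps; this is the paper's argument.
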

\end{sub}

\begin{tr}
\begin{theorem}[Soundness of traced evaluation]\labelThm{treval-soundness}
  Let $\Psi,e,\tau,l,\sigma$ be arbitrary.
  \begin{enumerate}
  \item If $\wf{\Psi}{e}{\tau}$ and
    $\treval{\sigma}{l}{e}{\sigma'}{T}$ and $\sigma \matches \Psi$
    then $\wftr{\Psi}{T}{l}{\tau}$ and $\sigma' \matches
    \Psi,l{:}\tau$.
  \item If $\wf{\Psi,x{:}\tau}{e}{\tau'}$ and $\sigma \matches
    \Psi,L:\tau$ and $\trevals{\sigma}{x}{L}{e}{\sigma'}{L'}{\Ts}$
    then $\wftrs{\Psi}{\tau}{\Ts}{\tau'}$ and $\sigma' \matches
    \Psi,L':\tau'$.
  \end{enumerate}
\end{theorem}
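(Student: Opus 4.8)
The plan is to prove parts (1) and (2) simultaneously by induction on the derivations of $\treval{\sigma}{l}{e}{\sigma'}{T}$ and $\trevals{\sigma}{x}{L}{e}{\sigma'}{L'}{\Ts}$, inverting the expression-typing derivation in each case and reading off the corresponding trace-typing rule from \refFig{trace-wf}. Two routine auxiliary lemmas should be isolated first. A \emph{substitution lemma}: if $\wf{\Psi,x{:}\tau'}{e}{\tau}$ then $\wf{\Psi,l{:}\tau'}{e[l/x]}{\tau}$; this is needed whenever the operational rule substitutes a fresh label for a variable (the $\klet$ case and the singleton-iteration case), since substitution turns a context entry into a store-type entry. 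A \emph{constructor-preservation lemma}: if $\wfterm{\Psi}{t}{\tau}$ and $\sigma \matches \Psi$ then $\wfcon{\Psi'}{\op(t,\sigma)}{\tau}$ for the store type $\Psi'$ witnessing $\sigma\matches\Psi$, together with its analogues showing that $\mflatten\sigma[L']$ has type $\{\tau\}$ and $\summ\sigma[L']$ has type $\intTy$ when the labels of $L'$ carry the corresponding types. All of these are verified by inspecting the finitely many clauses defining $\op$, $\mflatten$ and $\summ$.

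The inductive cases then go through mechanically. For a term $e = t$, inversion yields $\wfterm{\Psi}{t}{\tau}$, so $l \gets t$ is typed by the assignment rule, and the constructor-preservation lemma gives $\sigma[l:=\op(t,\sigma)] \matches \Psi,l{:}\tau$. For $e = \pi_i(l')$, inverting the store-typing derivation underlying $\sigma \matches \Psi$ at $\Psi(l') = \tau_1\times\tau_2$ forces $\sigma(l') = (l_1,l_2)$ with each $l_i$ of type $\tau_i$, and the projection trace rule applies. For $e = \letin{x=e_1}{e_2}$ I would apply the IH to $e_1$ at the fresh $l'$ (obtaining $\wftr{\Psi}{T_1}{l'}{\tau'}$ and $\sigma_1 \matches \Psi,l'{:}\tau'$), then the substitution lemma, then the IH to $e_2[l'/x]$ at $l$ over $\Psi,l'{:}\tau'$, and combine with the sequencing rule. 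The conditional case applies the IH to the taken branch $e_b$ and discharges the $\result(T)=l$ side-condition of the $\kcond$ trace rule via \refLem{result-tech}. For $\flatten\{e_0\mid x\in l'\}$ and $\summ\{e_0\mid x\in l'\}$, note that $\sigma \matches \Psi$ with $\Psi(l')=\{\tau_0\}$ gives $\sigma \matches \Psi,\sigma(l'){:}\tau_0$; apply part (2) of the IH to the iteration subderivation and finish with the $\kcomp$/$\ksum$ trace rule and the $\mflatten$/$\summ$ preservation lemma. Part (2) is handled by a secondary induction on the shape of $L$: the $\emptyset$ case is immediate, the $L_1\oplus L_2$ case combines the two sub-results using the $\oplus$ rule for trace-set well-formedness, and the $\{l:m\}$ case uses the substitution lemma and part (1) of the IH on $\treval{\sigma}{l'}{e[l/x]}{\sigma'}{T}$, again invoking \refLem{result-tech}.

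The step I expect to be the real obstacle is keeping the store types straight across the orthogonal-merge rule in part (2). When $L = L_1\oplus L_2$, the two sub-derivations only give $\sigma_1\matches\Psi,L_1'{:}\tau'$ and $\sigma_2\matches\Psi,L_2'{:}\tau'$, and to conclude $\sigma_1\smerge_\sigma\sigma_2\matches\Psi,(L_1'\oplus L_2'){:}\tau'$ one needs the witnessing store types for $\sigma_1$ and $\sigma_2$ to agree on $\dom(\sigma)$ and to differ only on disjoint sets of freshly allocated labels --- precisely the invariant the avoidance conditions of \refThm{treval-exist} were designed to maintain. I would therefore strengthen the induction hypothesis to carry the full witnessing store type rather than just the fragment $\Psi$, recording that traced evaluation extends it monotonically and only at fresh labels; with that invariant the merge typechecks because $\smerge_\sigma$ is exactly the domain-disjoint union of the new parts. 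A related minor point is the destination label $l$: if $l$ already occurs in $\sigma$ then $\sigma[l:=\cdots]$ is an overwrite rather than an extension, which could break acyclicity or change a label's type, so as in \refThm{treval-exist} it is cleanest to rule this out by the same freshness discipline.
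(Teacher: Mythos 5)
Your plan matches the paper's own proof: induction on the traced-evaluation (resp.\ iteration) derivations with inversion on the typing derivation, an implicit substitution of the fresh label for the bound variable in the $\klet$ and singleton-iteration cases, and an appeal to part (2) of the induction hypothesis for the $\kcomp$/$\ksum$ cases, concluding each case with the corresponding rule of \refFig{trace-wf}. The only difference is one of rigor, not of route: the paper's write-up concentrates on deriving $\wftr{\Psi}{T}{l}{\tau}$ and silently elides the store-typing bookkeeping (the witnessing store types across $\smerge_\sigma$ and the freshness of destination labels) that you correctly flag and propose to carry in a strengthened induction hypothesis.
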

\begin{proof}
  For part (1), proof is by induction on the second derivation.
  \begin{itemize}
  \item If the derivation is of the form
    \[
    \infer{\treval{\sigma}{l}{t}{\sigma[l:=\op(t,\sigma)]}{l\gets
        t}}{}
    \]
    then by inversion we have that $\wfterm{\Psi}{t}{\tau}$ and so we can
    derive
    \[\infer{\wftr{\Psi}{l\gets t}{l}{\tau}}{\wfterm{\Psi}{t}{\tau}}\]
  \item If the derivation is of the form
    \[
    \infer{\treval{\sigma}{l}{\pi_i l'}{\sigma[l:=\sigma(l_i)]}{l\gets
        \kproj_i(l',l_i)}}{\sigma(l') = (l_1,l_2)}
    \]
    then by inversion we have that $\Psi(l') = \tau_1 \times \tau_2$,
    so we may conclude:
    \[
    \infer{\wftr{\Psi}{l \gets
        \trproj{i}{l'}{l_i}}{l}{\tau_i}}{\Psi(l') = \tau_1 \times
      \tau_2}
    \]

  \item If the derivation is of the form
    \[
    \infer[l'
    \fresh]{\treval{\sigma}{l}{\letin{x=e_1}{e_2}}{\sigma_2}{\trlet{T_1}{T_2}}}
    {\treval{\sigma}{l'}{e_1}{\sigma_1}{T_1} &
      \treval{\sigma}{l}{e_2[l'/x]}{\sigma_2}{T_2} }
    \]
    then we must also have
    \[\infer{\wf{\Psi}{\letin{x=e_1}{e_2}}{\tau}}
    {\wf{\Psi}{e_1}{\tau'} & \wf{\Psi,x{:}\tau'}{e_2}{\tau}}
    \]
    and by induction and substituting $l'$ for $x$ we have
    $\wftr{\Psi}{T_1}{l'}{\tau'}$ and
    $\wftr{\Psi,l'{:}\tau'}{T_2}{l}{\tau}$.  So we may conclude
    \[\infer{\wftr{\Psi}{T_1;T_2}{l}{\tau}}{
      \wftr{\Psi}{T_1}{l'}{\tau'} &
      \wftr{\Psi,l'{:}\tau'}{T_2}{l}{\tau} }
    \]

  \item If the derivation is of the form:
    \[
    \infer{\treval{\sigma}{l}{\ifthenelse{l'}{e_\ktrue}{e_\kfalse}}{\sigma'}{\trconde[l]{l'}{b}{T}{e_\ktrue}{e_\kfalse}}}{\sigma(l')
      = b & \treval{\sigma}{l}{e_b}{\sigma'}{T} }
    \]
    then by inversion we must have
    \[
    \infer{\wf{\Psi}{\ifthenelse{l'}{e_\ktrue}{e_\kfalse}}{\tau}}{\Psi(l')
      = \boolTy & \wf{\Psi}{e_\ktrue}{\tau} &
      \wf{\Psi}{e_\kfalse}{\tau}}
    \]
    Hence whatever the value of $b$, by induction we can obtain
    $\wftr{\Psi}{T}{l}{\tau}$.  To conclude, we derive:
    \[\infer{\wftr{\Psi}{\trconde[l]{l'}{b}{T}{e_\ktrue}{e_\kfalse}}{l}{\tau}}
    {\Psi(l') = \boolTy & \wftr{\Psi}{T}{l}{\tau} &
      \wf{\Psi}{e_\ktrue}{\tau} & \wf{\Psi}{e_\kfalse}{\tau} }
    \]
  \item If the derivation is of the form
    \[ \infer{\treval{\sigma}{l}{\flatten\{e\mid x \in
        l'\}}{\sigma'[l:=\mflatten \sigma'[L']]}{l\gets
        \trcompe{l'}{\Ts}{x.e}}}
    {\trevals{\sigma}{x}{\sigma(l')}{e}{\sigma'}{L'}{\Ts}}\] then by
    inversion we have
    \[\infer{\wf{\Psi}{\flatten\{e\mid x \in l'\}}{\{\tau\}}}
    {\Psi(l') = \{\tau'\} & \wf{\Psi,x{:}\tau'}{e}{\{\tau\}}}
    \]
    Then by induction hypothesis (2) we have that
    $\wftrs{\Psi}{\tau'}{\Ts}{\{\tau\}}$, so we may conclude:
    \[\infer{\wftr{\Psi}{l \gets \trcompe{l'}{\Ts}{x.e}}{l}{\{\tau\}}}
    {\Psi(l') = \{\tau'\} & \wftrs{\Psi}{\tau'}{\Ts}{\{\tau\}}&
      \wf{\Psi,x{:}\tau'}{e}{\{\tau\}}}
    \]
  \item For the $\summ$ case,
    \[ \infer{\treval{\sigma}{l}{\summ\{e\mid x \in
        l'\}}{\sigma'[l:=\summ \sigma'[L']]}{l\gets
        \trsume{l'}{\Ts}{x.e}}}
    {\trevals{\sigma}{x}{\sigma(l')}{e}{\sigma'}{L'}{\Ts}}\] the
    reasoning is similar to the previous case.
  \end{itemize}

  For part (2), proof is by induction on the structure of the
  third derivation.
  \begin{itemize}
  \item If the derivation is of the form:
    \[
    \infer{\trevals{\sigma}{x}{\emptyset}{e}{\sigma}{\emptyset}{\emptyset}}{}\]
    then we can immediately derive
    \[\infer{\wftrs{\Psi}{\tau}{\emptyset}{\tau'}}{}\]
  \item If the derivation is of the form:
    \[\infer{\trevals{\sigma}{x}{\{l:m\}}{e}{\sigma'}{\{l':m\}}{\{[l]T:m\}}}{
      \treval{\sigma}{l'}{e[l/x]}{\sigma'}{T} }
    \]
    then we may substitute $l$ for $x$ to obtain
    $\wf{\Psi,l{:}\tau}{e[l/x]}{\tau'}$ and so by induction hypothesis
    (1) we have $\wftr{\Psi,l{:}\tau}{T}{l'}{\tau'}$.  We may conclude
    by deriving:
    \[
    \infer{\wftrs{\Psi}{\tau}{\{[l]T:m\}}{\tau'}}{\wftr{\Psi,l{:}\tau}{T}{l'}{\tau'}}
    \]

  \item If the derivation is of the form:
    \[ \infer{\trevals{\sigma}{x}{L_1\oplus L_2}{e}{\sigma_1
        \smerge_{\sigma} \sigma_2}{L_1' \oplus L_2'}{\Ts_1 \oplus
        \Ts_2}} {\trevals{\sigma}{x}{ L_1}{e}{\sigma_1}{L_1'}{\Ts_1} &
      \trevals{\sigma}{x}{ L_2}{e}{\sigma_2}{L_2'}{\Ts_2}} \] then by
    induction we obtain $\wftrs{\Psi}{\tau}{\Ts_1}{\tau'}$ and
    $\wftrs{\Psi}{\tau}{\Ts_2}{\tau'}$ so conclude
    \[
    \infer{\wftrs{\Psi}{\tau}{\Ts_1 \oplus \Ts_2}{\tau'} }{
      \wftrs{\Psi}{\tau}{\Ts_1}{\tau'} &
      \wftrs{\Psi}{\tau}{\Ts_2}{\tau'}}
    \]
  \end{itemize}
\end{proof}
\end{tr}

\begin{sub}
  \begin{theorem}[Soundness of traced evaluation]\labelThm{treval-soundness}
    Let $\Psi,e,\tau,l,\sigma$ be arbitrary.  If $\wf{\Psi}{e}{\tau}$ and
    $\treval{\sigma}{l}{e}{\sigma'}{T}$ and $\sigma \matches \Psi$
    then $\wftr{\Psi}{T}{l}{\tau}$ and $\sigma' \matches
    \Psi,l{:}\tau$.
  \end{theorem}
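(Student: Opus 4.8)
The plan is to prove the statement by induction on the derivation of $\treval{\sigma}{l}{e}{\sigma'}{T}$, simultaneously with the companion claim for the iteration judgment: if $\wf{\Psi,x{:}\tau}{e}{\tau'}$, $\sigma \matches \Psi,L{:}\tau$, and $\trevals{\sigma}{x}{L}{e}{\sigma'}{L'}{\Ts}$, then $\wftrs{\Psi}{\tau}{\Ts}{\tau'}$ and $\sigma' \matches \Psi,L'{:}\tau'$. At each step I invert the derivation of $\wf{\Psi}{e}{\tau}$ from \refFig{wf-nrc}, whose shape is fixed by the shape of $e$ and hence aligns with the traced-evaluation rule of \refFig{trace-semantics} that was applied, and I then construct the required trace-typing derivation using the rules of \refFig{trace-wf}. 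Before starting the induction I would establish three routine auxiliary facts: (i) if $\wfterm{\Psi}{t}{\tau}$ and $\sigma \matches \Psi$ then $\wfcon{\Psi}{\op(t,\sigma)}{\tau}$, proved by case analysis on $t$ using the definition of $\op$ in \refFig{op-def} together with the constructor-typing rules; (ii) a substitution lemma for expression typing, $\wf{\Psi,x{:}\tau}{e}{\tau'}$ and $\Psi(l) = \tau$ imply $\wf{\Psi}{e[l/x]}{\tau'}$; and (iii) that $\matches$ is stable under assigning a well-typed constructor at a label outside the witnessing store type, and that orthogonal merging preserves well-formedness, i.e.\ if $\sigma_1 \matches \Psi,L_1'{:}\tau'$ and $\sigma_2 \matches \Psi,L_2'{:}\tau'$ arise as orthogonal extensions of $\sigma \matches \Psi$, then $\sigma_1 \smerge_\sigma \sigma_2 \matches \Psi,(L_1' \oplus L_2'){:}\tau'$, using that the two witnessing store types agree on $\dom(\sigma)$.

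With these in hand the cases are largely mechanical. For an assignment trace $l \gets t$, inversion gives $\wfterm{\Psi}{t}{\tau}$, the trace rule for $l \gets t$ applies directly, and (i) together with (iii) yields $\sigma[l := \op(t,\sigma)] \matches \Psi,l{:}\tau$. For $\pi_i(l')$, inversion gives $\Psi(l') = \tau_1 \times \tau_2$ with $\tau = \tau_i$; since $\sigma \matches \Psi$ the constructor at $l'$ must be a pair $(l_1,l_2)$ with $\Psi(l_j)=\tau_j$, so the projection trace rule applies and the extended store is well-typed. For $\letin{x=e_1}{e_2}$, the traced rule introduces a fresh $l'$; because the store-type witnessing $\sigma \matches \Psi$ has domain $\dom(\sigma)$ and $l'$ is fresh for $\dom(\sigma)$, the first induction hypothesis gives $\sigma_1 \matches \Psi,l'{:}\tau'$, then the second (after applying (ii) to type $e_2[l'/x]$) gives $\sigma_2 \matches \Psi,l'{:}\tau',l{:}\tau$; the trace rule for $T_1;T_2$ — which deliberately hides the internal type of $l'$ outside its own premise context — glues $\wftr{\Psi}{T_1}{l'}{\tau'}$ and $\wftr{\Psi,l'{:}\tau'}{T_2}{l}{\tau}$ into $\wftr{\Psi}{T_1;T_2}{l}{\tau}$. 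The conditional case invokes the induction hypothesis on the taken branch $e_b$, which has the same type $\tau$, and repackages it with the $\kcond$ trace rule, carrying along well-formedness of both branches from inversion. The comprehension and sum cases appeal directly to the iteration induction hypothesis. For the iteration judgment itself: the empty case is immediate; the singleton case substitutes $l$ for $x$ via (ii) and appeals to part~(1); and the $\oplus$ case applies the iteration hypothesis to both sub-iterations and combines the results using (iii).

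The main obstacle I anticipate is bookkeeping around the $\matches$ relation and freshness rather than anything conceptually deep: I must ensure that (a) the fresh labels chosen in the let and iteration rules do not collide with labels already recorded in the witnessing store type, so that the extended store types stay well-formed, and (b) in the $\oplus$ case the two witnessing store types for $\sigma_1$ and $\sigma_2$ can be amalgamated, which relies on acyclicity of well-formed stores — so new labels in each $\sigma_i$ reference only $\dom(\sigma)$ and the two sets of new labels are disjoint. Secondarily, lemma~(i) has one sub-case per term form in the definition of $\op$, which is tedious but entirely routine. Everything else follows the pattern already used for the other soundness proofs in the paper.
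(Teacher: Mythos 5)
Your proposal matches the paper's own proof: a mutual induction on the traced-evaluation derivation and the iteration judgment, inverting the expression typing in each case and rebuilding the corresponding trace-typing derivation (with substitution of the fresh label for the bound variable in the let, singleton, and comprehension cases). The auxiliary facts you list about $\op$, substitution, and orthogonal store merging are exactly the routine bookkeeping the paper's proof uses implicitly, so your write-up is simply a more explicit rendering of the same argument.
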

\end{sub}

We define the set of labels \emph{written} by $T$, or $\Wr(T)$, as follows:
\begin{eqnarray*}
  \Wr(l \gets t) &=& \{l\}\\
  \Wr(l \gets \trproj{i}{l'}{l_i}) &=& \{l\}\\
  \Wr(\trconde[l]{l'}{b}{T}{e_1}{e_2}) &=& \{l \} \cup \Wr(T)\\
  \Wr(T_1;T_2) &=& \Wr(T_1) \cup \Wr(T_2)\\
  \Wr(l \gets \trcompe{l'}{\Ts}{x.e}) &=& \{l\} \cup \Wr(\Ts)\\
  \Wr(l \gets \trsume{l'}{\Ts}{x.e}) &=& \{l\} \cup \Wr(\Ts)\\
  \Wr(\Ts) &=& \flatten \{\Wr(T) \mid [l]T :m \in \Ts\}
\end{eqnarray*}

Finally, we show that the adaptive semantics always succeeds for
well-formed traces $T$ and well-formed stores that avoid the labels
written by $T$.  
\begin{tr}
\begin{theorem}[Adaptability]\labelThm{cp-exist}
  Let $S$ be a finite set of labels, and $\Psi,T,\tau,l,\sigma$ be
  arbitrary.
  \begin{enumerate}
  \item If $\wftr{\Psi}{T}{l}{\tau}$ and $\matchavoids{\sigma}{\Psi}{S
      \cup \Wr(T)}$ then there exists $\sigma',T'$ such that
    $\cp{\sigma}{T}{\sigma'}{T'}$ and $
    \matchavoids{\sigma'}{\Psi,l{:}\tau}{S}$.
  \item If $\wftrs{\Psi}{\tau}{\Ts}{\tau'}$ and
    $\wf{\Psi,x{:}\tau}{e}{\tau'}$ and
    $\matchavoids{\sigma}{\Psi,L:\tau}{\Wr(\Ts) \cup S}$ then there
    exist $\sigma',L',\Ts'$ such that
    $\cps{\sigma}{x}{L}{e}{\Ts}{\sigma'}{L'}{\Ts'}$ and
    $\matchavoids{\sigma'}{\Psi,L'{:}\tau'}{S}$.
  \end{enumerate}
\end{theorem}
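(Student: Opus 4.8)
The plan is to prove parts (1) and (2) simultaneously: part (1) by induction on the derivation of $\wftr{\Psi}{T}{l}{\tau}$, and part (2) by a subsidiary induction on the multiset $L$ (with $\Ts$ and its typing derivation fixed), exactly mirroring the structure of the Traceability argument (\refThm{treval-exist}). The only cross-calls go ``downward'': the comprehension and sum cases of part (1) appeal to part (2) on the strictly smaller trace set $\Ts$, while the singleton case of part (2) appeals to part (1) on a cached subtrace $[l_0]T_0 \in \Ts$, whose well-typedness $\wftr{\Psi,l_0{:}\tau}{T_0}{\result(T_0)}{\tau'}$ is supplied by \refLem{fetch-trace-wf}. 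Two facts will be used everywhere. First, by \refLem{cp-results-equal}, adapting any subtrace $T$ to $T'$ leaves $\result(T') = \result(T)$; this keeps the $l = \result(T')$ side conditions of the $\kcond$- and $\Downarrow^\star$-rules satisfied and keeps the $\Psi$-typings of downstream traces valid after adaptation. Second, under the avoidance hypotheses every write performed by $\cp$ and $\cps$ lands on a label that is either explicitly declared in $\Psi$ or guaranteed fresh, so adaptation merely extends the store ($\sigma \subseteq \sigma'$ with $\dom(\sigma')\setminus\dom(\sigma)$ disjoint from $S$ --- the latter already implied by the conclusion $\matchavoids{\sigma'}{\Psi,l{:}\tau}{S}$).

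For part (1) the cases follow the trace-typing rules. For $l \gets t$ and $l \gets \trproj{i}{l'}{l_i}$, the hypothesis $\matchavoids{\sigma}{\Psi}{S \cup \Wr(T)}$ forces $l \notin \dom(\sigma)$ (since $l \in \Wr(T)$) and gives the input labels the shapes needed by $\op$ resp.\ projection, so the matching adaptation rule fires and the updated store matches $\Psi,l{:}\tau$ avoiding $S$. For $T_1;T_2$, I apply the IH to $T_1$ with avoid set $S \cup \Wr(T_2)$, then to $T_2$ with avoid set $S$. For $\trconde[l]{l'}{b}{T}{e_\ktrue}{e_\kfalse}$ I read off $b' = \sigma(l')$: if $b' = b$ I re-use the subtrace via the IH on $T$; if $b' \neq b$ I instead invoke Traceability (\refThm{treval-exist}) on $e_{b'}$, which is well-typed precisely because the conditional-trace typing rule records both $\wf{\Psi}{e_\ktrue}{\tau}$ and $\wf{\Psi}{e_\kfalse}{\tau}$. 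For $l \gets \trcompe{l'}{\Ts}{x.e}$ and the analogous $\trsume{l'}{\Ts}{x.e}$ case, $\Psi(l') = \{\tau'\}$ makes $\sigma(l')$ a well-typed multiset $L$ of labels of type $\tau'$; I apply part (2) with avoid set $S \cup \{l\}$, obtain $\sigma',L',\Ts'$, observe that $\mflatten \sigma'[L']$ (resp.\ $\summ \sigma'[L']$) is well-formed at the result type, and write it at the fresh label $l$.

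For part (2), the $\emptyset$ case is immediate. For $L = L_1 \oplus L_2$ I apply the IH to $L_1$ with avoid set $S$, obtaining an extension $\sigma_1$, and then to $L_2$ with avoid set $(\dom(\sigma_1)\setminus\dom(\sigma)) \cup S$; the two results are then orthogonal extensions of $\sigma$, so $\sigma_1 \smerge_\sigma \sigma_2$ is defined and matches $\Psi,(L_1' \oplus L_2'){:}\tau'$ avoiding $S$. For $L = \{l_0{:}m\}$ I branch on whether $l_0 \in \inputs(\Ts)$: if so, I fix $[l_0]T_0 \in \Ts$, obtain its typing from \refLem{fetch-trace-wf}, and apply part (1) to $T_0$ --- legitimate since $\Wr(T_0) \subseteq \Wr(\Ts)$, so $\matchavoids{\sigma}{\Psi,l_0{:}\tau}{S \cup \Wr(T_0)}$ holds; otherwise I substitute into $\wf{\Psi,x{:}\tau}{e}{\tau'}$ and apply Traceability to evaluate $e[l_0/x]$ at a fresh label avoiding $S$. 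In both sub-cases $L' = \{\result(T'_0){:}m\}$ and the conclusion follows.

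The main obstacle I anticipate is not any individual case but the uniform management of fresh labels: before each recursive call the avoid set $S$ must be enlarged by exactly the right amount --- by $\Wr$ of the not-yet-processed part of the trace, or by the freshly allocated labels of an already-processed sibling --- so that the orthogonal-merge operations are defined and the final store still avoids the original $S$. This is more delicate than for ordinary evaluation because adaptation simultaneously re-uses cached labels coming from the trace and, on a flipped conditional or a newly appearing comprehension element, allocates brand-new labels through Traceability; reconciling these two sources of labels is precisely what the matches-avoids relation and the $\Wr(\cdot)$ bookkeeping in the statement are designed to support, and verifying that they line up in every case is where the real work lies.
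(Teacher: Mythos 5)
Your proposal is correct and follows essentially the same route as the paper's proof: part (1) by induction on the trace-typing derivation and part (2) by induction on $L$, with the flipped-conditional and non-cached-element cases discharged by Traceability (\refThm{treval-exist}), the cached-element case by \refLem{fetch-trace-wf} feeding back into part (1), and the same avoid-set bookkeeping ($S \cup \Wr(T_2)$ for sequencing, $(\dom(\sigma_1)\setminus\dom(\sigma)) \cup S$ for the orthogonal merge). Your explicit handling of the $l = \result(T')$ side condition and of enlarging the avoid set by $\{l\}$ before the comprehension call is, if anything, slightly more careful than the paper's write-up, but it is the same argument.
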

\begin{proof}
  For the first part, proof is by induction on the structure of the
  first derivation.
  \begin{itemize}
  \item If the derivation is of the form
    \[\infer{\wftr{\Psi}{l\gets t}{l}{\tau}}{\wfterm{\Psi}{t}{\tau}}
    \]
    then we can conclude
    \[\infer{\cp{\sigma}{l \gets t}{\sigma[l:=\op(t,\sigma)]}{l\gets
        t}}{}\]
    since $\sigma$ avoids $\Wr(l \gets t) = \{l\}$.  Moreover,
    $\matchavoids{\sigma}{\Psi,l{:}\tau}{S}$.
  \item If the derivation is of the form
    \[\infer{\wftr{\Psi}{l\gets
        \trproj{i}{l'}{l_i}}{l}{\tau_i}}{\Psi(l') = \tau_1 \times
      \tau_2}\]
    then $\sigma(l')$ must be a pair $(l_1',l_2')$, and we can
    conclude
    \[\infer{\cp{\sigma}{l \gets\trproj{i}{l'}{l_i}
      }{\sigma[l:=\sigma(l_i')]}{l\gets
        \trproj{i}{l'}{l_i'}}}{\sigma(l') = (l_1',l_2')}\]
    since $\sigma$ avoids $\Wr(l \gets \trproj{i}{l'}{l_i}) = \{l\}$.
    Note that we do not re-use $l_i$ so the typing judgment does not
    need to check that it is of the right type. In fact, $l_i$ need
    not be in $\Psi$ at all.  Finally,
    $\matchavoids{\sigma'}{\Psi,l{:}\tau_i}{S}$.

  \item If the derivation is of the form
    \[\infer{\wftr{\Psi}{T_1;T_2}{l}{\tau}}{
      \wftr{\Psi}{T_1}{l'}{\tau'} &
      \wftr{\Psi,l'{:}\tau'}{T_2}{l}{\tau} }
    \]
    then since $l' \in \Wr(T_1)$ and
    $\matchavoids{\sigma}{\Psi}{\Wr(T_1) \cup (\Wr(T_2) \cup S)}$, by
    induction we have that $\cp{\sigma}{T_1}{\sigma'}{T_1'}$ and
    $\matchavoids{\sigma'}{\Psi,l'{:}\tau'}{\Wr(T_2) \cup S}$.
    Moreover, since $\matchavoids{\sigma'}{\Psi,l'{:}\tau'}{\Wr(T_2)
      \cup S}$ by induction we have
    $\cp{\sigma'}{T_2}{\sigma''}{T_2'}$ and
    $\matchavoids{\sigma''}{\Psi,l'{:}\tau',l{:}\tau}{S}$.  Hence we
    may derive
    \[
    \infer{\cp{\sigma}{T_1;T_2}{\sigma''}{T_1';T_2'}}
    {\cp{\sigma}{T_1}{\sigma'}{T_1'} &
      \cp{\sigma'}{T_2}{\sigma''}{T_2'}}
    \]
    and also we have $\matchavoids{\sigma''}{\Psi,l{:}\tau}{S}$ as
    desired.

  \item If the derivation is of the form
    \[
    \infer{\wftr{\Psi}{\trconde[l]{l'}{b}{T}{e_\ktrue}{e_\kfalse}}{l}{\tau}}{
      \Psi(l') = \boolTy & \wftr{\Psi}{T}{l}{\tau} &
      \wf{\Psi}{e_\ktrue}{\tau} & \wf{\Psi}{e_\kfalse}{\tau} }
    \]
    then we must have $\sigma(l') \in \Bool$.  There are two cases.
    Suppose $\sigma(l) = b$.  Then by induction we have that
    $\cp{\sigma}{T}{\sigma'}{T'}$ and
    $\matchavoids{\sigma'}{\Psi,l{:}\tau}{S}$.  We can conclude
    \[
    \infer{\cp{\sigma}{\trconde[l]{l'}{b}{T}{e_\ktrue}{e_\kfalse}}{\sigma'}{\trconde[l]{l'}{b}{T'}{e_\ktrue}{e_\kfalse}}}
    {\sigma(l') = b & \cp{\sigma}{T}{\sigma'}{T'} }
    \]
    Otherwise, $\sigma(l') = b' \neq b$.  So using
    \refThm{treval-exist}, we have $\sigma',T'$ such that
    $\treval{\sigma}{l}{e_{b'}}{\sigma'}{T'}$ and
    $\matchavoids{\sigma'}{\Psi,l{:}\tau}{S}$, so we may conclude
    \[\infer{\cp{\sigma}{\trconde[l]{l'}{b}{T}{e_\ktrue}{e_\kfalse}}{\sigma'}{\trconde[l]{l'}{b}{T'}{e_\ktrue}{e_\kfalse}}}
    {\sigma(l') = b' \neq b & \treval{\sigma}{l}{e_{b'}}{\sigma'}{T'}}
    \]
  \item If the derivation is of the form
    \[
    \infer{\wftr{\Psi}{l \gets \trcompe{l'}{\Ts}{x.e}}{l}{\{\tau\}}}{
      \Psi(l') = \{\tau'\} & \wftrs{\Psi}{\tau'}{\Ts}{\{\tau\}} &
      \wf{\Psi,x{:}\tau'}{e}{\{\tau\}}}
    \]
    then for $L = \sigma(l')$, since $\wfcon{\Psi}{\sigma(l') }{
      \{\tau'\}}$ we have
    $\matchavoids{\sigma}{\Psi,L:\tau'}{\Wr(\Ts)\cup S}$.  Hence by
    induction we have $\sigma',L',\Ts'$ such that
    $\cps{\sigma}{x}{\sigma(l')}{e}{\Ts}{\sigma'}{L'}{\Ts'}$ and
    $\matchavoids{\sigma'}{\Psi,L':\{\tau\}}{S}$.  Therefore,
    $\mflatten\sigma'[L'] $ is well-defined so we can conclude
    \[\infer{\cp{\sigma}{l \gets
        \trcompe{l'}{\Ts}{x.e}}{\sigma'[l:=\mflatten \sigma'[L']]}{l
        \gets \trcompe{l'}{\Ts'}{x.e}}}
    {\cps{\sigma}{x}{\sigma(l')}{e}{\Ts}{\sigma'}{L'}{\Ts'}}
    \]
  \item If the derivation is of the form
    \[
    \infer{\wftr{\Psi}{l \gets \trsume{l'}{\Ts}{x.e}}{l}{\intTy}}{
      \Psi(l') = \{\tau'\} & \wftrs{\Psi}{\tau'}{\Ts}{\intTy}&
      \wf{\Psi,x{:}\tau'}{e}{\intTy}}
    \]
    then the reasoning is similar to the previous case.
  \end{itemize}

  For part (2), the proof is by induction on the structure of $L$.
  \begin{itemize}
  \item If $L = \emptyset$, then then we can simply conclude
    \[\infer{\cps{\sigma}{x}{\emptyset}{e}{\Ts}{\emptyset}{\emptyset}}{}\]

  \item If $L = \{l:m\}$ then there are two cases.  If $[l]T \in \Ts$
    for some $T$, then we proceed as follows. Let $l' = \result(T)$.
    By \refLem{fetch-trace-wf}, we have that
    $\wftr{\Psi,l{:}\tau}{e[l/x]}{l'}{\tau'}$.  So, by induction
    hypothesis (1), we have $\cp{\sigma}{T}{\sigma'}{T'}$ where
    $\matchavoids{\sigma'}{\Psi,l'{:}\tau'}{S}$.  To conclude, we
    derive:
    \[\infer{\cps{\sigma}{x}{\{l:m\}}{e}{\Ts}{\sigma'}{\{l':m\}}{\{[l]T':m\}}}
    {[l]T \in \Ts & \cp{\sigma}{T}{\sigma'}{T'}}
    \]

    Otherwise, $l \notin \inputs(\Ts)$, so we fall back on traced
    evaluation.  Choose $l'$ fresh for $l$, $\sigma$ and $S$. Since
    $\matchavoids{\sigma}{\Psi,l{:}\tau}{ S}$, by
    \refThm{treval-exist} we can obtain
    $\treval{\sigma}{l'}{e}{\sigma'}{T'}$ where
    $\matchavoids{\sigma}{\Psi,l'{:}\tau'}{S}$.  To conclude we derive
    \[
    \infer{\cps{\sigma}{x}{\{l:m\}}{e}{\Ts}{\sigma'}{\{l':m\}}{\{[l]T':m\}}}
    {l \not\in \inputs(\Ts) & l' \fresh &
      \treval{\sigma}{l'}{e[l/x]}{\sigma'}{T'}}
    \]

  \item If $L = L_1 \oplus L_2$, then clearly,
    $\matchavoids{\sigma}{\Psi,L_1{:}\tau}{\Wr(Ts)\cup S}$ so by
    induction we have
    $\cps{\sigma}{x}{L_1}{e}{\Ts}{\sigma_1}{L_1'}{\Ts_1}$ where
    $\matchavoids{\sigma_1}{\Psi,L_1'{:}\tau'}{S}$.  Similarly, we
    have $\cps{\sigma}{x}{L_2}{e}{\Ts}{\sigma_2}{L_2'}{\Ts_2}$ where
    $\matchavoids{\sigma_2}{\Psi,L_2'{:}\tau'}{(\dom(\sigma_1) -
      \dom(\sigma)) \cup S}$.  Hence, $\sigma_1$ and $\sigma_2$ are
    orthogonal extensions of $\sigma$, so $\sigma_1 \uplus_\sigma
    \sigma_2$ exists and $\matchavoids{\sigma_1 \uplus_\sigma
      \sigma_2}{\Psi,L_1' \cup L_2' {:}\tau'}{S}$.  We conclude by
    deriving:
    \[
    \infer{ \cps{\sigma}{x}{L_1\oplus L_2}{e}{\Ts}{\sigma_1
        \uplus_\sigma \sigma_2}{L_1' \oplus L_2'}{\Ts_1 \oplus \Ts_2}
    }{\cps{\sigma}{x}{L_1}{e}{\Ts}{\sigma_1}{L_1'}{\Ts_1} &
      \cps{\sigma}{x}{L_2}{e}{\Ts}{\sigma_2}{L_2'}{\Ts_2} }
    \]

  \end{itemize}
\end{proof}
\end{tr}

\begin{sub}
  \begin{theorem}[Adaptability]\labelThm{cp-exist}
    If $\wftr{\Psi}{T}{l}{\tau}$ and $\matchavoids{\sigma}{\Psi}{\Wr(T)}$ then there exists $\sigma',T'$ such that
    $\cp{\sigma}{T}{\sigma'}{T'}$ and $
    {\sigma'} \matches {\Psi,l{:}\tau}$.
  \end{theorem}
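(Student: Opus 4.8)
The plan is to prove a strengthened, \emph{two-part} statement by simultaneous induction, mirroring the structure of the Traceability theorem (\refThm{treval-exist}). Part~(1) is the stated claim, generalized with a finite label set $S$ to be avoided: if $\wftr{\Psi}{T}{l}{\tau}$ and $\matchavoids{\sigma}{\Psi}{S \cup \Wr(T)}$, then there exist $\sigma',T'$ with $\cp{\sigma}{T}{\sigma'}{T'}$ and $\matchavoids{\sigma'}{\Psi,l{:}\tau}{S}$. Part~(2) is the analogous statement for the iteration judgment: if $\wftrs{\Psi}{\tau}{\Ts}{\tau'}$, $\wf{\Psi,x{:}\tau}{e}{\tau'}$, and $\matchavoids{\sigma}{\Psi,L{:}\tau}{\Wr(\Ts) \cup S}$, then there exist $\sigma',L',\Ts'$ with $\cps{\sigma}{x}{L}{e}{\Ts}{\sigma'}{L'}{\Ts'}$ and $\matchavoids{\sigma'}{\Psi,L'{:}\tau'}{S}$. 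The $S$ parameter is exactly what makes the induction go through: it lets a caller reserve labels that a sibling subcomputation will write, so the orthogonal-extension merges never collide, and it lets fresh allocations performed during adaptation avoid labels that are live elsewhere. The stated theorem is the case $S = \emptyset$ of part~(1).

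Part~(1) goes by induction on the derivation of $\wftr{\Psi}{T}{l}{\tau}$. The assignment case $l \gets t$ recomputes $\op(t,\sigma)$: since $\sigma \matches \Psi$ and $\wfterm{\Psi}{t}{\tau}$, the new constructor has type $\tau$, and the result avoids $S$ because $\Wr(l\gets t) = \{l\}$. Projection is similar --- crucially, adaptation reads $\sigma(l')$ (forced to be a pair by $\Psi(l') = \tau_1\times\tau_2$ and store well-formedness) and allocates a genuinely new target, never re-using the cached component label, so that label need not even occur in $\Psi$. Sequential composition $T_1;T_2$ threads $S$: apply the IH to $T_1$ with avoidance set $\Wr(T_2)\cup S$, obtaining $\sigma'$ matching $\Psi,l'{:}\tau'$, then apply the IH to $T_2$. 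For the conditional $\trconde[l]{l'}{b}{T_0}{e_\ktrue}{e_\kfalse}$ there are two subcases: if $\sigma(l') = b$, reuse $T_0$ by the IH; if $\sigma(l') = b' \neq b$, fall back to traced evaluation $\treval{\sigma}{l}{e_{b'}}{\sigma'}{T'}$ using the stored branch expression and \refThm{treval-exist}. The comprehension and sum cases delegate to part~(2) with $L = \sigma(l')$, which is a multiset of labels all of type $\tau'$ because $\Psi(l') = \{\tau'\}$; once part~(2) yields $L'$ matching $\Psi,L'{:}\{\tau\}$, the value $\mflatten \sigma'[L']$ (resp.\ $\summ\sigma'[L']$) is well-defined and of the right type.

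For part~(2) I would induct on the structure of $L$. Empty $L$ is immediate. For $L = L_1 \oplus L_2$, apply the IH to $L_1$ with avoidance set $\Wr(\Ts)\cup S$, obtaining $\sigma_1$ matching $\Psi,L_1'{:}\tau'$; then apply the IH to $L_2$ with avoidance set $(\dom(\sigma_1)\setminus\dom(\sigma))\cup S$, so that $\sigma_1$ and $\sigma_2$ are orthogonal extensions of $\sigma$, $\sigma_1 \smerge_\sigma \sigma_2$ exists, and it still avoids $S$. For a singleton $L = \{l:m\}$ there are two subcases: if $[l]T \in \Ts$, recover $\wftr{\Psi,l{:}\tau}{T}{\result(T)}{\tau'}$ from \refLem{fetch-trace-wf} and adapt $T$ by the IH of part~(1), using \refLem{cp-results-equal} to see that the result label is preserved; otherwise $l \notin \inputs(\Ts)$, so pick a fresh target (fresh for $\sigma$ and $S$) and fall back to traced evaluation of $e[l/x]$ via \refThm{treval-exist}.

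The main obstacle I anticipate is bookkeeping for the label-avoidance discipline rather than any deep argument: at every step one must check that the freshly created labels --- in the conditional fall-back, in new comprehension iterations, and in the $\oplus$-split --- are disjoint from $\dom(\sigma)$, from $S$, and from the labels written by sibling subcomputations, and that the output store's type is stated with the narrowest $\Psi$ so that internal labels created by $T$ stay hidden (matching how $\wftr{\;}{\;}{\;}{\;}$ is formulated in \refFig{trace-wf}). The two places where genuine earlier results are needed are the conditional fall-back and the cached-trace-reuse case, via \refThm{treval-exist}, \refLem{fetch-trace-wf}, and \refLem{cp-results-equal}; the only real subtlety there is ensuring their hypotheses hold, in particular that the stored expression annotations $e_\ktrue,e_\kfalse,x.e$ are well-typed against $\Psi$, which is precisely what the trace typing rules record.
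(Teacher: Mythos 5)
Your proposal is correct and follows essentially the same route as the paper's proof: the same $S$-strengthened two-part statement, induction on the trace typing derivation for part~(1) and on the structure of $L$ for part~(2), with the conditional fall-back and cached-trace reuse handled via \refThm{treval-exist}, \refLem{fetch-trace-wf}, and \refLem{cp-results-equal}, and the same orthogonal-extension bookkeeping for $\oplus$. No substantive differences to report.
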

\end{sub}
By combining the above partial fidelity and soundness theorems, we can
finally obtain our main result:
\begin{corollary}[Total Fidelity]
  Suppose $\treval{\sigma_1}{l}{e}{\sigma_1'}{T_1}$ where $\sigma_1 :
  \Psi$ and $\wf{\Psi}{e}{\tau}$ and suppose
  $\matchavoids{\sigma_2}{\Psi}{\Wr(T)}$.  Then there exists
  $\sigma_2',T_2$ such that $\cp{\sigma_2}{T_1}{\sigma_2'}{T_2}$ and
  $\treval{\sigma_2}{l}{e}{\sigma_2'}{T_2}$.
\end{corollary}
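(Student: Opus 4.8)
The plan is to derive Total Fidelity by composing the three preceding results: Soundness of traced evaluation (\refThm{treval-soundness}), Adaptability (\refThm{cp-exist}), and Partial Fidelity (\refThm{partial-fidelity}). I read the hypothesis ``$\sigma_1:\Psi$'' as $\wfstore{\Psi}{\sigma_1}$, which immediately yields $\sigma_1\matches\Psi$ (take the witness store type in the definition of matching to be $\Psi$ itself), and I read ``$\Wr(T)$'' in the statement as $\Wr(T_1)$, the labels written by the given trace.

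First I would apply Soundness of traced evaluation to $\treval{\sigma_1}{l}{e}{\sigma_1'}{T_1}$, $\wf{\Psi}{e}{\tau}$, and $\sigma_1\matches\Psi$, obtaining $\wftr{\Psi}{T_1}{l}{\tau}$ (the companion conclusion $\sigma_1'\matches\Psi,l{:}\tau$ is not needed). This is the crucial step: it certifies that $T_1$ is well-formed against the \emph{same} store type $\Psi$ that types $e$, so $T_1$ can be re-run against any other store matching $\Psi$. Next I would invoke Adaptability on $\wftr{\Psi}{T_1}{l}{\tau}$ and the given $\matchavoids{\sigma_2}{\Psi}{\Wr(T_1)}$, instantiating the auxiliary finite label set $S$ to $\emptyset$ so that the precondition $\matchavoids{\sigma_2}{\Psi}{S\cup\Wr(T_1)}$ is exactly our hypothesis; this produces $\sigma_2'$ and $T_2$ with $\cp{\sigma_2}{T_1}{\sigma_2'}{T_2}$. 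Finally, feeding $\treval{\sigma_1}{l}{e}{\sigma_1'}{T_1}$ and $\cp{\sigma_2}{T_1}{\sigma_2'}{T_2}$ into Partial Fidelity gives $\treval{\sigma_2}{l}{e}{\sigma_2'}{T_2}$, which together with the adaptation already in hand is the desired conclusion.

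At the level of the corollary there is essentially no obstacle; the only care required is the bookkeeping of side conditions --- instantiating the label-avoidance parameter $S$ to $\emptyset$, and noticing that $\sigma_1:\Psi$ supplies the matching premise that Soundness needs. The real content sits in the three cited theorems, whose proofs go by induction on derivations; among these, the label-avoidance invariants in Traceability and Adaptability (ensuring freshly allocated labels never clash with those written by $T_1$) are the delicate part, but those have already been discharged. I would close with a sentence observing that the corollary is precisely the statement that the trace $T_1$ produced by running $e$ ``explains'' $e$ on every well-typed input store that avoids the labels $T_1$ allocates: adapting $T_1$ there gives both the correct re-execution and a trace that the tracing semantics itself would have produced.
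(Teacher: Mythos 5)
Your proposal is correct and follows exactly the paper's own proof: soundness of traced evaluation gives $\wftr{\Psi}{T_1}{l}{\tau}$, adaptability (with the avoidance set instantiated to $\Wr(T_1)$ alone) gives $\cp{\sigma_2}{T_1}{\sigma_2'}{T_2}$, and partial fidelity then yields $\treval{\sigma_2}{l}{e}{\sigma_2'}{T_2}$. The extra bookkeeping you note (reading $\sigma_1:\Psi$ as the store-typing premise and $\Wr(T)$ as $\Wr(T_1)$) matches the paper's intent.
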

\begin{proof}
  By \refThm{treval-soundness} we have that $\wftr{\Psi}{T_1}{l}{\tau}$.
  Thus, by \refThm{cp-exist} there must exist $T_2,\sigma_2'$ such that
  $\cp{\sigma_2}{T_1}{\sigma_2'}{T_2}$.  By \refThm{partial-fidelity},
  it follows that $\treval{\sigma_2}{l}{e}{\sigma_2'}{T_2}$.
\end{proof}

\section{Trace slicing}\labelSec{slicing}

As noted above, traces are often large.  Traces are also difficult to
interpret because they reduce computations to very basic steps, like
machine code.  In this section, we consider \emph{slicing} and other
simplifications for making trace information more useful and readable.
However, formalizing these techniques appears nontrivial, and is
beyond the scope of this paper.  Here we only consider examples of
trace slicing and simplification techniques that discard some of the
details of the trace information to make it more readable.

\begin{example}
  Recall query $Q_1$.  If we are only interested in how row $l_1$ in
  the output was computed, then the following \emph{backwards trace
    slice} answers this question.
\begin{verbatim}
l <- comp(r,{
  [r1] x11 <- proj_C(r1,r13); x1 <- comp(s,{
    [s3] x131 <- proj_C(s3,s31); x132 <- x11 = x131; 
         cond(x132,t,l11 <- proj_A(r1,r11);
                     l12 <- proj_B(r1,r12);
                     l13 <- proj_D(s3,s32);
                     l1 <- (A:l11,B:l12,D:l13);
                     x136 <- {l1})})})
\end{verbatim}
  Note that the slice refers only to the rows $r_1$ and $s_3$ that
  contribute to the semiring-provenance of $l_1$.  Moreover, the
  where-provenance and dependency-provenance of $l_1,l_{11},l_{12},$
  and $l_{13}$ can be extracted from this slice.

  To make the slice more readable, we can discard information about
  projection and assignment steps and substitute expressions for
  labels:
\begin{verbatim}
l <- comp(r,{
  [r1] x1 <- comp(s,{
    [s3] cond(r13 = s31,t,l1 <- (A:r11,B:r12,D:s32);
                          x136 <- {l1})})})
\end{verbatim}
  We can further simplify this to an expression
  $\{(A:r_{11},B:r_{12},D:s_{32})\}$ that shows how to calculate $l_1$
  from the original input, but this is not guaranteed to be valid if
  the input is changed.
\end{example}

\begin{example}
  In query $Q_2$, if we are only interested in the value $7$ labeled
  by $l_{12}'$, its (simplified) backwards trace slice is:
\begin{verbatim}
l12' <- sum(s,{[s1] cond(s11 = 2, t, x13 <- s12),
               [s2] cond(s12 = 2, t, x23 <- s22),
               [s3] cond(s13 = 2, f, x33 <- 0)});
\end{verbatim}
and from this we can extract an expression such as $s_{12}+s_{22}$ that
describes how the result was computed.
\end{example}

\section{Related and future work}\labelSec{related}

Provenance has been studied for database queries under various names,
including ``source tagging'' and ``lineage''.  We have already
discussed where-provenance, dependency provenance and the semiring
model.  \citet{DBLP:conf/vldb/WangM90} described an early provenance
semantics meant to capture the original and intermediate sources of
data in the result of a query.  Cui, Widom and Wiener defined
\emph{lineage}, which aims to identify source data relevant to part of
the output.  \citet{buneman01icdt} also introduced
\emph{why--provenance}, which attempts to highlight parts of the input
that explain why a part of the output is the way it is.  As discussed
earlier, lineage and why-provenance are instances of the semiring
model. Recently, \citet{benjelloun06vldb} have studied a new form of
lineage in the Trio system. According to Green (personal
communication), Trio's lineage model is also an instance of the
semiring model, so can also be extracted from traces.

\citet{DBLP:conf/sigmod/BunemanCC06} and \citet{buneman07icdt}
investigated provenance for database updates, an important
scenario because many scientific databases are \emph{curated}, or
maintained via frequent manual updates. Provenance is essential for
evaluating the scientific value of curated
databases~\cite{DBLP:conf/pods/BunemanCTV08}.  We have not considered
traces for update languages in this paper.  This is an important
direction for future work.

Provenance has also been studied in the context of \emph{(scientific)
  workflows}, that is, high-level visual programming languages and
systems developed recently as interfaces to complex distributed Grid
computation.  Techniques for workflow provenance are surveyed by
\citet{bose05cs} and \citet{DBLP:journals/sigmod/SimmhanPG05}.  Most
such systems essentially record call graphs including the names and
parameters of macroscopic computation steps, input and output
filenames, and other system metadata such as architecture, operating
system and library versions.  Similarly, provenance-aware storage
systems~\cite{muniswamy-reddy06usenix} record high-level trace
information about files and processes, such as the files read and
written by a process.

To our knowledge formal semantics have not been developed for most
workflow systems that provide provenance tracking.  Many of them
involve concurrency so defining their semantics may be nontrivial.
One well-specified approach is the NRC-based ``dataflow'' model
of~\cite{DBLP:conf/dils/HiddersKSTB07}, who define an instrumented
semantics that records ``runs'' and consider extracting provenance
from runs.  However, their formalization is incomplete and does not
examine semantic correctness properties comparable to consistency and
fidelity; moreover, they have not established the exact relationship
between their runs and existing forms of provenance.

As discussed in the introduction, provenance traces are related to the
traces used in the adaptive functional programming language
AFL~\cite{acar06toplas}.  The main difference is that AFL traces are
meant to model efficient self-adjusting computation implementations,
whereas provenance traces are intended as a model of execution history
that can be used to answer high-level queries comparable to other
provenance models.  Nevertheless, efficiency is obviously an important
issue for provenance-tracking techniques.  The problem of efficiently
recomputing query results after the input changes, also called
\emph{view maintenance}, has been studied extensively for
\emph{materialized views} (cached query results) in relational
databases~\cite{gupta95maintenance}.  View maintenance does not appear
to have been studied in general for NRC, but provenance traces may
provide a starting point for doing so.  View maintenance in the
presence of provenance seems to be an open problem.

Provenance traces may also be useful in studying the \emph{view
  update} problem for NRC queries, that is, the problem of updating
the input of a query to accommodate a desired change to the output.
This is closely related to bidirectional computation techniques that
have been developed for XML trees~\cite{foster07toplas}, flat relational
queries~\cite{bohannon06pods}, simple
functional programs~\cite{matsuda07icfp},  and text
processing~\cite{DBLP:conf/popl/BohannonFPPS08}.  Provenance-like
metadata has already been found useful in some of this work.  Thus, we
believe that it will be worthwhile to further study the relationship
between provenance traces and bidirectional computation.

There is a large body of related work on dynamic analysis techniques,
including slicing, debugging, justification, information flow,
dependence tracking, and profiling techniques, in which execution
traces play an essential role.  We cannot give a comprehensive
overview of this work here, but refer
to~\cite{venkatesh91pldi,arora93dood,abadi96icfp,field98ist,abadi99popl,ochoa04pepm}
as sources we found useful for inspiration.  However, to our
knowledge, none of these techniques have been studied in the context
of database query languages, and our work reported previously in
\cite{DBLP:conf/dbpl/CheneyAA07} and in this paper is the first to
connect any of these topics to provenance.

Trace semantics is also employed in static analysis; in particular,
see~\cite{rival07toplas}.  \citet{DBLP:conf/dbpl/CheneyAA07} defined a
type-and-effect-style static analysis for dependency provenance; to
our knowledge, there is no other prior work on using static analysis
to approximate provenance or optimize dynamic provenance tracking.

\section{Conclusions}\labelSec{concl}

Provenance is an important topic in a variety of settings,
particularly where computer systems such as databases are being used
in new ways for scientific research.  The semantic foundations of
provenance, however, are not well understood.  This makes it difficult
to judge the correctness and effectiveness of existing proposals and to
study their strengths and weaknesses.

This paper develops a foundational approach based on \emph{provenance
  traces}, which can be viewed as explanations of the operational
behavior of a query not on just the current input but also on other
possible (well-defined) inputs.  We define and give traced operational
semantics and adaptation semantics for traces and prove
\emph{consistency} and \emph{fidelity} properties that characterize
precisely how traces produced by our approach record the run-time
behavior of queries.  The proof of fidelity, in particular, involves
subtleties not evident in other trace semantics systems such as
AFL~\cite{acar06toplas} due to the presence of collection types and
comprehensions, which are characteristic of database query languages.

Provenance traces are very general, as illustrated by the fact that
other forms of provenance information may be extracted from them.  For
instance, we show how to extract where-provenance, dependency
provenance, and semiring provenance from traces.  Depending on the
needs of the application, these specialized forms of provenance may be
preferable to provenance traces due to efficiency concerns.  As a
further application, we informally discuss how we may slice or
simplify traces to extract smaller traces that are more relevant to
part of the input or output.

To our knowledge, our work is the first to formally investigate trace
semantics for collection types or database query languages and the
first to relate traces to other models of provenance in databases.
There are a number of compelling directions for future work, including
formalizing interesting definitions of trace slices, developing
efficient techniques for generating and querying provenance traces,
and relating provenance traces to the view-maintenance and view-update
problems.

\paragraph{Acknowledgments}
We gratefully acknowledge travel support from the UK e-Science
Institute Theme Program on Principles of Provenance for visits by Acar
to the University of Edinburgh and Cheney to Toyota Technological
Institute, Chicago.

\small
\bibliography{paper}
\bibliographystyle{plainnat}


\end{document}